\def\eqref#1{equation~\ref{#1}}
\def\1{\bm{1}}
\DeclareMathAlphabet{\mathsfit}{\encodingdefault}{\sfdefault}{m}{sl}
\SetMathAlphabet{\mathsfit}{bold}{\encodingdefault}{\sfdefault}{bx}{n}
\newcommand{\cA}{\mathcal{A}}
\newcommand{\cB}{\mathcal{B}}
\newcommand{\cD}{\mathcal{D}}
\newcommand{\cE}{\mathcal{E}}
\newcommand{\cF}{\mathcal{F}}
\newcommand{\cO}{\mathcal{O}}
\newcommand{\cP}{\mathcal{P}}
\newcommand{\cV}{\mathcal{V}}
\newcommand{\cX}{\mathcal{X}}
\newcommand{\bE}{\mathbb{E}}
\newcommand{\bI}{\mathbb{I}}
\newcommand{\bR}{\mathbb{R}}
\newcommand{\bN}{\mathbb{N}}
\newcommand{\Tra}{^{\sf T}} 
\newcommand{\V}[1]{{\bm{\mathbf{\MakeLowercase{#1}}}}} 
\newcommand{\M}[1]{{\bm{\mathbf{\MakeUppercase{#1}}}}} 
\newcommand{\norm}[1]{\left\lVert#1\right\rVert}
\newcommand{\argmaxE}{\mathop{\mathrm{argmax}}} 
\newcommand{\recon}{\texttt{RCB}}
\newcommand{\start}{\textit{cold start}}
\newcommand{\exploit}{\textit{exploitation}}
\newcommand{\dbic}{\text{DBIC}}
\newcommand{\edbic}{$\epsilon$-\text{DBIC}}
\newcommand{\pr}{\text{Pr}}
\newcommand{\ipt}{I_{t}}
\newcommand{\whb}{\widehat{\beta}}
\newcommand{\wh}{\widehat}
\newcommand{\rreg}{\text{Reg}}
\newcommand{\hreg}{\wh{\text{Reg}}_{t}}
\newcommand{\whm}{\wh{\mu}}
\newcommand{\bEx}{\bE_{x_{t} \sim \cD_{\cX}}}
\newcommand{\filtra}{\textfrak{S}}
\newcommand{\qmd}{Q_{m}(\cdot)}
\newcounter{counter}[section]
\renewcommand{\thecounter}{\thesection.\arabic{counter}}
{\par}%
\theoremstyle{plain}
\newtheorem{thm}{Theorem}
\newtheorem{lem}{Lemma}
\newtheorem{cor}{Corollary}
\newtheorem{ass}{Assumption}
\theoremstyle{definition}
\newtheorem{defn}{Definition}
\theoremstyle{remark}
\newtheorem{rem}{Remark}
\theoremstyle{plain}
\theoremstyle{definition}
\theoremstyle{remark}
\icmltitlerunning{Incentivized Exploration with Stochastic Covariates: A Two-Stage Mechanism Design for Recommender System}
\begin{document}
	
	\twocolumn[
	\icmltitle{Incentivized Exploration with Stochastic Covariates: A Two-Stage Mechanism Design for Recommender System}
	
	
	
	
	\begin{icmlauthorlist}
		\icmlauthor{Yuantong Li}{meta}
		\icmlauthor{Guang Cheng}{ucla}
		\icmlauthor{Xiaowu Dai}{ucla}
	\end{icmlauthorlist}
	
	\icmlaffiliation{meta}{Meta Platforms Inc, New York, NY, United States}
	\icmlaffiliation{ucla}{Department of Statistics and Data Science, University of California, Los Angeles, CA, United States}
	
	\icmlcorrespondingauthor{Yuantong Li}{yuantongli@meta.com}
	\icmlcorrespondingauthor{Xiaowu Dai}{daix@ucla.edu}
	
	\icmlkeywords{recommender system, bandits}
	
	\vskip 0.3in
	]
	
	
	
	\printAffiliationsAndNotice{}  
	
	\begin{abstract}
		Recommender systems play a crucial role in internet economies by connecting users with relevant products. However, designing effective recommender systems faces the key challenges: the \textit{exploration-exploitation} tradeoff in securing \textit{incentive} to explore new products against user's self-interested preferences.
		While prior work addresses Bayesian Incentive Compatibility (BIC) in fixed-design linear bandits \cite{sellke2023price}, we tackle the challenge of stochastic user covariates sampled online. Unlike standard black-box reductions \citep{mansour2020bayesian}, our two-stage framework exploits the linear reward structure to achieve sublinear regret while satisfying incentive constraints.  
		To address it, we propose a two-stage algorithm that integrates incentivized exploration with \textit{any efficient plug-in offline learning algorithms}.
		In the first stage, it explores products while maintaining  incentive compatibility to gather optimal samples. 
		The second stage employs \textit{inverse proportional gap sampling strategy} (IPGS) integrated with any efficient learning methods to secure sublinear regret. 
		Theoretically, we prove that algorithm \recon{} achieves $\tilde{O}(\sqrt{KdT})$ regret and simultaneously satisfies incentive constraints, and discovers the tradeoff between incentive budget and regret, validating in experiments. We demonstrate \recon{}'s strong incentive gain, sublinear regret, and robustness through a real application on personalized warfarin dosing and simulations. 
	\end{abstract}

	\section{Introduction}\label{sec:intro}
	In the current era of the internet economy, recommender systems have been widely adopted across various domains such as advertising, consumer goods, music, videos, news, job markets, and travel routes \citep{koren2009matrix, li2010contextual, covington2016deep, wang2017deep, zheng2018drn, mcinerney2018explore, naumov2019deep, lewis2020retrieval, bao2023tallrec, zhai2024actions, jeon2024epinet}. Modern recommendation markets typically involve three key stakeholders: products, users, and the platform or called principal. The platform collects and analyzes user data to enhance future distribution services and to respond effectively and promptly. In these personalized recommendation markets, the platform fulfills a dual role: recommending the best product to users (exploitation role) and experimenting with lesser-known products to gather more information to enlarge a high quality of products pool (exploration role). Because users are self-interested and heterogeneous, they rarely voluntarily explore products with low prior estimates based on the common knowledge \cite{dai2024incentive}. Without intervention, these products cannot collect the data required to overcome the cold-start phase and achieve broad adoption. 
	While exploration generates valuable information for the platform and future users, self-interested users often find it costly. Consequently, feedback remains sparse, creating a need for mechanisms that strategically incentivize efficient data collection.

	The fundamental challenge lies in the tension between the platform's long-term learning objectives and the users' immediate self-interest. To summarize, the platform faces a mechanism design problem with two coupled objectives:
	\begin{itemize}
		\item \textit{classic exploration-exploitation tradeoff}: How to design a recommendation policy that maximizes cumulative rewards by balancing the acquisition of new information against the utilization of known preferences.
		\item \textit{dynamic incentive compatibility}: How to leverage information asymmetry to incentivize heterogeneous users to accept exploratory recommendations, thereby preventing the systemic bias and sub-optimality caused by myopic behavior from users.
	\end{itemize}
	
	\begin{table*}[t]
		\centering
		\caption{Comprehensive comparison of \recon{} with prior BIC literature.}
		\label{tab:lit_compare_merged}
		\resizebox{\textwidth}{!}{%
			\begin{tabular}{l l p{4.5cm} p{4.5cm} p{4.5cm}}
				\toprule
				\textbf{Algorithm} & \textbf{Problem Setting} & \textbf{Context Model} & \textbf{Algorithmic Mechanism} & \textbf{Key Gap Filled} \\
				\midrule
				\citet{kremer2014implementing} & Multi-Armed Bandit & No Context & Optimal Policy Design & Initiated BIC exploration for MAB \\
				\addlinespace
				\citet{mansour2020bayesian} & Multi-Armed Bandit & None or Black-box reduction (ignores linear structure) & MAB Greedy + Hidden Exploration (Phases) & Establishes BIC for general MAB \\
				\addlinespace
				\citet{sellke2023incentivizing} & Linear Contextual Bandit & Fixed Design (Features are static/owned by products) & Ridge Regression + Phased Thompson Sampling & Analyzes linear rewards under fixed contexts \\
				\addlinespace
				\textbf{\recon{} (Ours)} & \textbf{Linear Contextual Bandit} & \textbf{Stochastic Covariates (User features sampled online)} & \textbf{Two-Stage (Cold Start + IPGS Gap Sampling)} & \textbf{Handles dynamic user contexts where best arm changes per round} \\
				\bottomrule
			\end{tabular}%
		}
	\end{table*}
	
	The multi-armed bandit (MAB) framework for incentivized exploration was established by \citep{kremer2014implementing, mansour2020bayesian}. While foundational, these early models typically assume independent priors. Subsequent work extended this to correlated priors via Thompson Sampling \citep{hu2022incentivizing, sellke2023incentivizing}. \citet{hu2022incentivizing} focus on combinatorial semi-bandits without personalized contexts. Furthermore, while \citet{sellke2023incentivizing} and \citet{kalvit2024incentivized} address linear contexts, they operate under a fixed design assumption (static arm features). In contrast, our work addresses \textit{stochastic user covariates} sampled online. This dynamic setting, where the optimal arm varies per user, precludes the fixed-design analyses or generic black-box reductions used in prior work, Table \ref{tab:lit_compare_merged}.
	
	In this paper, we first formalize those challenges into a \textit{dynamic Bayesian incentive compatibility} (DBIC) problem. That is, the platform can not only provide personalized recommendations with dynamic user context, but also need to consider to predefine the optimal exploration budget for cold start contents. 
	
	We propose the \textit{recommendation contextual bandit} algorithm (\recon{}, Algo \ref{fig:rcb-algo-flow}) to solve this problem, which is composed of a two-stage design's algorithm. In the first stage, the platform explores all available products with pre-computed optimal sample sizes, collecting the necessary data for each content to prepare for the subsequent stage. The second stage employs an \textit{inverse proportional gap sampling bandit} strategy integrated with any efficient plug-in offline machine learning method to secure DBIC constraint and sublinear regret. Our algorithm simultaneously secure sublinear regret and achieve DBIC constraint over the whole process.
	Our main contributions can be delineated into three parts:
	\begin{enumerate}[align=left,leftmargin=*,widest={2}]
		\item  We formalize the DBIC in $\S$\ref{sec: recommendation protocol}. Then we introduce the \recon{} algorithm, a two-stage framework that fundamentally departs from the posterior sampling methods dominant in prior art. Instead of relying on specific posterior forms (such as Gaussian or Beta distributions), \recon{} employs a novel \textbf{inverse proportional gap sampling} strategy. This design enables a \textbf{modular architecture}: \recon{} can seamlessly integrate \textit{any efficient offline learning oracle} to estimate rewards while strictly maintaining the DBIC.
		
		\item Theoretically, our contributions are two-fold:
		\begin{itemize}
			\item \textit{regret bounds:} We establish that \recon{} achieves a regret bound of $\tilde{\mathcal{O}}(\sqrt{KdT})$, scaling efficiently with the feature dimension $d$ and horizon $T$, matching standard contextual bandit rates while satisfying the strict incentive constraints.
			
			\item \textit{price of incentives:} We formally quantify the \textit{price of incentivizing exploration} \citep{sellke2023price} in the setting of stochastic covariates. We derive the minimal cold-start sample complexity $\mathsf{N}(\epsilon)$ required to initialize the learning process. This reveals the precise trade-off between the user's incentive budget $\epsilon$ and the platform's exploration efficiency, proving that a larger budget constraint ($\epsilon$) significantly accelerates the transition to the sublinear regret regime.
		\end{itemize}
		
		\item Lastly, we empirically validate the effectiveness of \recon{} through its performance in terms of incentive gain and sublinear regret, and its robustness across various environmental and hyperparameter settings. Additionally, we apply our algorithm to a case study, personalized warfarin dose allocation, and compare it with other methods to demonstrate its efficacy.
	\end{enumerate}

	\textbf{Notations.}
	We denote $[N] = [1,2,..., N]$ where $N$ is a positive integer.
	Define $x \in \bR^{d}$ be a $d$-dimensional random vector. The capital $X \in \bR^{d\times d}$ represents a $d\times d$ real-valued matrix. Let $I_{d}$ represent a $d\times d$ diagonal identity matrix. We use $\mathcal{O}(\cdot)$ to denote the asymptotic complexity.
	We denote $T$ as the time horizon.

	\begin{figure*}[ht]
		\centering
		\includegraphics[width=0.99\linewidth]{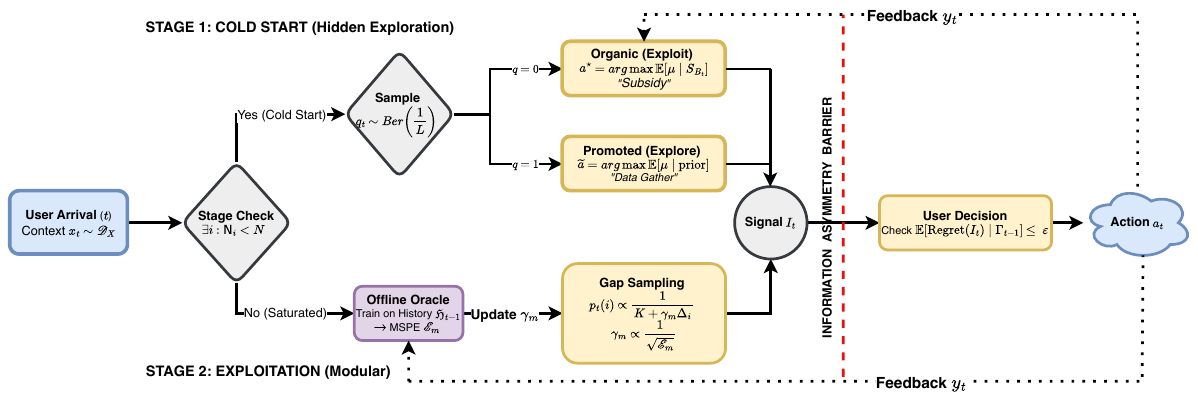}
		\caption{RCB Algorithm}
		\label{fig:rcb-algo-flow}
	\end{figure*}
	\section{Problem Formulation}
	\label{sec: recommendation protocol}
	Assume a sequence of streaming users ($T$) arrive to the platform to receive service (music, video, etc,.), each user $p_{t}$ with features $x_{t} \in \cX \subset \mathbb{R}^{d}$. 
	In this platform, there is a set of products $\cA$ ($|\cA| = K$), where each product can be represented as an unknown parameter $\beta \in \mathbb{R}^{d}$. 
	Then the platform provides the personalized recommendation to user with the following protocol: 
	\begin{enumerate}[align=left,leftmargin=*,widest={2}]
		\item The platform provides a personalized recommendation to this user with a group of products and denote the best arm as $I_{t}$.     
		\item User chooses an action $a_{t} \in \cA$, $a_t$ might be not be equal to $I_{t}$, and then the platform receives noisy feedback $y_{t}(a_{t}) \in [0,1]$. Here we assume the feedback $y_{t}(a_{t})$ following the linear model 
		\begin{equation}
			\begin{aligned}
				y_{t}(a_{t}) = \mu(x_{t}, a_{t}) + \eta_{t,a_{t}},
			\end{aligned}
		\end{equation} where $\mu(x_{t}, a_{t}) = x_{t}\Tra \beta_{a_{t}}$ is the true mean reward, \footnote{The discussion of the nonlinear reward format is available in Appendix $\S$\ref{app-sec: nonlinear}.} and $\{\eta_{t,a_{t}}\}_{t\geq 1}$ are $\sigma$-subgaussian random variables and independent of the covariates $\{x_{t}\}_{t \geq 1}$.\footnote{if $\bE[e^{t\eta}] \leq e^{\sigma^{2}t^{2}/2}$ for every $t \in \bR$. For non-subgaussian r.v.s, we can refer the sub-exponential r.v. techniques \citep{jia2022online}}.
	\end{enumerate}
	
	Besides, for notation simplicity, we denote $y_{t} \in [0,1]^{K}$ as all potential rewards in vectorized format. Similarly,  $\mu(x_{t}) \in [0,1]^{K}$ as the all potential true rewards, and $\eta_{t}\in \bR^{K}$ as the all potential vector noise.
	Without loss of generality, we assume $\cX$ and $\beta$ are bounded. In addition, in this formulation, we have  two random sources: the covariate vector $x_{t}$ and noise $\eta_{t}$, different from the fixed design $\{x_{t}\}_{t\geq 1}$ \citep{lattimore2020bandit}. Here we assumed a shared prior belief $\cP_{0}$ that factorizes over products as $\cP_{0} = \cP_{1,0} \times ... \times \cP_{K,0}$. Each product parameter $\beta_{i} \sim \cP_{i,0}$ is drawn from this prior with mean $\beta_{i,0} = \bE[\beta_{i}]$  and covariance $\Sigma_{i,0}$. Given the stochastic covariate $x_{t}$, we define the prior mean reward for product $i$ as $\mu_{0}(x_{t},i) = \bE[\mu(x_{t},i)]$.

	A fundamental distinction between this protocol and standard sequential decision-making \citep{sutton2018reinforcement,lattimore2020bandit} is the potential for user non-compliance. While the platform  recommends the best arm $I_{t}$, the self-interested user $p_{t}$ selects the final action $a_{t}$ based on the recommendation and his posterior beliefs together. Consequently, $a_{t}$ may differ from $I_{t}$, and the platform observes feedback only for the user-selected arm $a_{t}$. This contrasts with standard bandits, where promoted content $I_{t}$ can always collect feedback.

	While the platform aims to recommend an arm $I_{t}$ that maximizes long-term social welfare (exploration), self-interested users are myopic: they seek to maximize their immediate expected reward conditional on their own feelings (priors) and information revealed by the platform. Consequently, this user may reject a recommendation if it appears suboptimal relative to their private belief. To align these objectives, we leverage information asymmetry to achieve better exploration and exploitation, and long term social welfare: the platform observes the full history of rewards, while the user observes only the recommended actions and the prior. 
	
	However, we find that with prior information and recommended products, a rational and myopic user would follow $I_{t}$, if the platform satisfies the \edbic{} constraint, formally defined as follows:
	\begin{defn}[\edbic{}]
		\label{def: bic property}
		Denote the public history under the assumption that previous users have followed recommendations as $\Gamma_{t-1} = \{I_{s} = a_{s}: s \in [t-1]\} \cup \cP_{0}$. Given an \textit{incentive budget} $\epsilon \geq 0$, a recommendation algorithm is $\epsilon$-dynamic Bayesian incentive-compatible (\dbic{}) if 
		\begin{equation}
			\begin{aligned}
				\bE[\mu(x_{t}, i) - \mu(x_{t},j)| & I_{t} = i, \Gamma_{t-1}] \geq -\epsilon, \\
				&\forall t \in [T], i,j \in [K], i\neq j.
			\end{aligned}
		\end{equation} 
		If $\epsilon=0$, we call it dynamic Bayesian incentive-compatible (\dbic). For brevity, we use the terms \dbic{} and 
		\edbic{} interchangeably throughout this paper, unless the distinction is explicitly emphasized.
	\end{defn}

	This definition establishes a \textit{compliance condition}: a rational, myopic user will follow the recommendation $I_t$ provided that the expected posterior utility of $I_t$ is not outperformed by any alternative product $j$ by more than the incentive budget $\epsilon$. Each user has a personalized and unknown incentive budget, which functions like a credit balance on the platform. Specifically, the user selects the product maximizing their posterior expected reward given the history $\Gamma_{t-1}$ and the recommended product $I_t$; the \edbic{} constraint ensures that the recommended product effectively satisfy this constraint, incentivizing the user to adhere to the platform's best suggestion $I_t$.
	
	\paragraph{Myopic User Model.}
	The myopic user assumption is standard in the incentivized exploration literature \citep{kremer2014implementing, mansour2020bayesian, sellke2023incentivizing}. It captures the empirically observed ``bounded rationality'' where users evaluate recommendations based on their current belief and conditional information about immediate rewards. By incorporating an $\epsilon$-budget, we model how users strategically choose content even if it is not their immediate myopic best, acting as a credit balance of trust.

	From the perspective of the platform, the goal is to map the available information to a recommendation arm. At each round $t$, the platform observes the stochastic covariate $x_t$ and the full private history $\textfrak{S}_{1:t-1}= \sigma(\{(x_{t},y_{t}, a_{t})\}_{1:t})$. The platform designs a sequential policy $\pi = \{\pi_{t}(\cdot)\}_{t \geq 1}$, where $\pi_{t}(\cdot| \textfrak{S}_{1:t-1}): \cX \rightarrow \cA$ maps the feature and history to a recommended arm. The objective is to maximize the cumulative expected reward over the horizon $T$, subject to the strict constraint that every recommendation must satisfy \edbic{} for each user. 
	We evaluate the performance of the policy $\pi$ using Bayesian Regret, defined as follows: 
	\begin{equation}
		\label{eq: total expected regret}
		\begin{aligned}
			\text{Reg}_{[T]}(\pi) &= \sum_{t=1}^{T} \bE\big[\mu(x_{t}, \pi^{*}_{t}(x_{t})) - \mu(x_{t}, \pi_{t}(x_{t}))\big]
		\end{aligned}
	\end{equation}
	
	where $\pi^{*}_{t}(x_{t})$ is the posterior optimal arm given all information up to $t-1$. Finally, we summarize the key challenge in the DBIC:
	\begin{mdframed}[frametitle={\colorbox{white}{\space Key challenge:\space}},
		frametitleaboveskip=-\ht\strutbox,
		frametitlealignment=\center
		]
		The principal challenge lies in the misalignment between the platform's long-term objective and the users' immediate self-interest. While the platform seeks to maximize cumulative rewards through exploration, users with \textbf{dynamic, context-dependent priors} are myopic: they reject recommendations that deviate from their greedy preference unless the expected loss is bounded by an incentive budget $\epsilon$. Consequently, the platform must design a mechanism that \textbf{simultaneously} satisfies the \edbic{} and efficiently gathers data for cold-start products to secure sublinear regret.
	\end{mdframed}

	\section{Algorithms}
	\label{sec: algorithm}
	To minimize regret while strictly adhering to \edbic{}, we propose the \textit{Recommendation Context Bandit} algorithm (\recon{}, Algo \ref{fig:rcb-algo-flow}). This two-stage algorithm begins with a $\start$ phase that collects the minimal optimal sample size required to satisfy \edbic{} constraint. Later, this transitions to an $\exploit$ phase, which employs a modular \textit{inverse proportional gap sampling} (IPGS) strategy compatible with any efficient offline learning oracle. By dynamically calibrating the $\epsilon$-budget via sequential spread parameters $\{\gamma_{m}\}_{m}$, \recon{} simultaneously secures sublinear regret and satisfies \edbic{}.

	\subsection{Cold Start Stage}
	\label{sec-sub: cold start}
	The objective of the cold start stage is to collect a minimum sample size $\mathsf{N}(\epsilon)$ for each arm to ensure the stability of the subsequent Exploitation stage, while strictly maintaining the \edbic{} in cold start stage. This stage relies on a calibrated \textit{exploration probability} $L$, which determines the frequency of exploratory recommendations relative to exploitative ones to satisfy the incentive constraint.

	\textbf{Notation}: Let $N_{i}(t)$ denote the number of times arm $i$ has been pulled up to time $t$. We define the set of ``saturated" arms that have met the sample requirement as $B_{t} = \{i \mid N_{i}(t) = \mathsf{N}\}$. The historical data for arm $i$ is denoted by $S_i$, comprising the set of observed covariates and rewards.
	
	The process proceeds in two phases to handle stochastic user covariates:
	
	\textbf{(1) most popular arm's sample collection (MPASC).} 
	The platform initially recommends the arm with the highest context-dependent prior mean reward. This continues until at least one arm enters $B_{t}$, establishing a ``safe" baseline for the user population.
	
	\textbf{(2) rest arm sample's collection (RASC).}
	Once a safe arm exists ($B_{t} \neq \emptyset$), the platform leverages information asymmetry to explore the remaining arms. It recommends an under-sampled arm with probability $1/L$ and exploits the safe arm with probability $1-1/L$. The parameter $L$ is strategically calculated to ensure the expected utility loss of exploration is fully subsidized by the high utility of the safe arm, thereby satisfying the \edbic{} constraint.

	\textbf{a) promoted recommendation (Exploration).} 
	When yields $q_t=1$, the platform enters an exploration mode. It identifies the set of under-sampled arms, $[K]/B_{t}$, which have not yet met the sample threshold $\mathsf{N}(\epsilon)$. To maximize the likelihood of user compliance within this set, the platform recommends the \textit{promoted arm} $\widetilde{a}_{t}$ with the highest context-dependent prior mean:
	\begin{equation}
		\label{eq: 1-prom}
		\widetilde{a}_{t} = \argmaxE_{i \in  [K]/B_{t}} \bE[\mu(x_{t}, i)].
	\end{equation}
	Upon the user's acceptance and the observation of reward $y_{t, \widetilde{a}_{t}}$, the platform updates the sufficient statistics for this arm: $N_{\widetilde{a}_{t}}(t) \leftarrow N_{\widetilde{a}_{t}}(t-1) + 1, S_{\widetilde{a}_{t}} \leftarrow S_{\widetilde{a}_{t} } \cup (x_{t}, y_{t, \widetilde{a}_{t} })$. Once the count $\widetilde{a}_{t}$ reaches $\mathsf{N}$, the arm is deemed "saturated" and added to the set $B_{t}$.

	\textbf{b) Organic Recommendation.} 
	When $q_{t} = 0$, the platform prioritizes incentive alignment by recommending the \textit{organic arm} $a^{*}_{t}$. This arm is selected to maximize the expected reward conditional on the currently trusted history $S_{B_{t}}$:
	\begin{equation}
		\label{eq: 1-org}
		a^{*}_{t} = \underset{i \in [K]}{\argmaxE} \bE[\mu(x_{t}, i)| S_{ B_{t}}].
	\end{equation}
	Note that the conditional expectation $\bE[\cdot | S_{B_{t}}]$ effectively utilizes the \textit{posterior mean} for saturated arms ($i \in B_{t}$) and the \textit{prior mean} for under-sampled arms ($i \notin B_{t}$). Crucially, while the agent $p_{t}$ generates a reward $y_{t, a^{*}_{t}}$, the platform does \textit{not} increment the sample counts ($\mathsf{N}$) or update the exploration history ($S$) for this organic recommendation. This ensures that the saturated rounds serve purely to "subsidize" the exploration risk, maintaining the \edbic{} without biasing the quick stop for the cold-start phase.
	
	During the cold start stage, saturated arms are recommended purely as ``organic'' choices to subsidize the risk of exploring under-sampled arms without losing trust from myopic users. However, by not counting these pulls in the exploration history, we prevent over-counting. This ensures that the confidence sets used in Stage 2 reflect only the ``useful'' samples that actively contribute to reducing uncertainty and triggering epoch transitions.

	\subsection{Exploitation Stage}
	Following the cold start stage, the platform possesses sufficient data $S$ (with $\mathsf{N}(\epsilon)$ samples per arm) to initialize the learning process. The objective in the exploitation stage is to maximize cumulative rewards by recommending arms with high posterior means, subject to the strict $\epsilon$-DBIC constraint. The core algorithmic challenge lies in dynamically calibrating the exploration rate to ensure that the expected loss of any recommendation never exceeds the incentive budget $\epsilon$.
	
	To address this, \recon{} employs a \textbf{doubling epoch schedule} to iteratively refine the exploration-exploitation balance. The timeline is partitioned into epochs $\mathcal{T}_m = \{t \in [2^{m-1}, 2^m) \mid m \ge m_0\}$, where the exploitation phase begins at epoch $m_0 = \lceil 2 + \log_2 N \rceil$. 
	At the beginning of each epoch $m$, the algorithm utilizes all historical data from previous epoch $W_{\mathcal{T}_{1:m-1}}$ to update a \textit{spread parameter} $\gamma_m$. This parameter governs the ``inverse proportional gap sampling'' (IPGS) distribution (defined in Algorithm \ref{algo: steady}), effectively tightening the exploration radius as the offline oracle's prediction error decreases.
	At epoch $m \in [m_0, m_1]$, the platform employs the offline oracle trained on $W_{\mathcal{T}_{m-1}}$ to compute posterior estimators $\widehat{\beta}_i$ and predictive rewards $\widehat{\mu}_t(x_t, i) = x_t^\top \widehat{\beta}_i$. Identifying the predicted best arm as $b_t = \operatorname*{argmax}_{i \in [K]} \widehat{\mu}_t(x_t, i)$, the algorithm samples recommendations $a_t$ via IPGS:
	\begin{equation}
		\label{eq: sampling distribution}
		p_{t}(i) =
		\begin{cases}
			1 - \sum_{j \neq b_{t}}p_{t}(j), & \text{if } i = b_{t}, \\
			\frac{1}{K + \gamma_{m}(\widehat{\mu}_{t}(x_{t},b_{t}) - \widehat{\mu}_{t}(x_{t},i))}, & \text{if } i \neq b_{t}.
		\end{cases}
	\end{equation}
	The spread parameter $\gamma_m = 4 \sqrt{K/\mathcal{E}_{\mathcal{F}, \delta}(|\mathcal{T}_{m-1}|)}$ scales inversely with the square root of the offline learner's Mean Squared Prediction Error (MSPE). Consequently, as the MSPE diminishes over time, $\gamma_m$ increases, dynamically concentrating recommendations on the predicted best arm to maintain incentive compatibility (see Remark \ref{rem: discuss gamma} for detailed discussion). We formally define the offline oracle's generalization error bound, $\mathcal{E}_{\mathcal{F}, \delta}(n)$, as follows:
	\begin{defn}
		\label{def:prediction error in expectation}
		Let $p$ be an arbitrary action selection kernel. Given a sample size of $n$ data of the format $(x_{i}, a_{i}, y_{i,a_{i}})$, which are i.i.d. according to $(x_{i}, y_{i}) \sim \cD, a_{i} \sim p(\cdot | x_{i})$, the offline learning algorithm $\texttt{Off}_{\cF}$ based on the data and a general function class $\cF$ returns a predictor $\whm_{t}(x, a): \cX \times \cA \rightarrow \bR$. For any $\delta > 0$, with probability at least $1-\delta$, we have $\bE_{x \sim \cP_{X}, a \sim p(\cdot|x)}[\whm_{t}(x,a) - \mu(x_{t},a_{t})]^{2} \leq \cE_{\cF, \delta}(n)$.
	\end{defn}
	
	\paragraph{Complexity Analysis.}
	The complexity are from sample complexity (data requirements) and the computational runtime:
	
	\textit{Cold Start Stage:} The primary cost is the sample complexity required to satisfy the sufficient data for the offline oracle. To collect $\mathsf{N}(\epsilon)$ samples for each of the $K$ arms with an exploration probability of $1/L$, the expected duration of this stage is $\mathcal{O}(KLN)$. The per-round computational overhead is negligible, requiring only the sorting of prior means.
	
	\textit{Exploitation Stage:} The computational cost is dominated by the training of the plug-in offline oracle. Our framework is modular: For \textit{parametric methods} (e.g., Ridge Regression), achieving a target generalization error $\epsilon'$ typically requires a sample size of $\tilde{\mathcal{O}}(Kd/\epsilon')$, with a training runtime scaling with matrix inversion (e.g., $\mathcal{O}(d^3)$); For \textit{non-parametric methods} (e.g., kernel methods or neural networks), the sample requirement generally scales as $\tilde{\mathcal{O}}(K/\epsilon'^{2})$ or higher, depending on the hypothesis class complexity.
	
	\section{Theory}
	\label{sec: theory}
	
	\subsection{Regularity Conditions}
	\label{ref: regularity}
	In order to satisfy the DBIC, we list two general assumptions over the prior distribution \citep{mansour2020bayesian, sellke2023price}.
	\begin{ass}[Sufficient Exploration / Non-Degeneracy]
		\label{ass:prior_posterior}
		Let $B_t = \{j \in [K] \mid N_j(t) \ge N\}$ denote the set of ``saturated'' arms for which the platform has collected sufficient samples, and let $S_{B_t}$ denote the history associated with these arms. For any under-sampled arm $i \in [K] \setminus B_t$ (where $N_i(t) = 0$), we define the \textbf{Prior-Posterior Gap}, $G_t(i)$, as the difference between the prior expected reward of arm $i$ and the posterior expected reward of the current organic choice $j \in B_t$:
		\begin{equation*}
			G_t(i) = \min_{j \in B_t} \left( \mathbb{E}[\mu(x_t, i)] - \mathbb{E}[\mu(x_t, j) \mid S_{B_t}] \right).
		\end{equation*}
		We assume there exist time-independent problem-dependent constants $N_{P0}, \tau_{P0}, \rho_{P0} > 0$ such that for all $N \ge N_{P0}$ and any $i \in [K] \setminus B_t$, the following probability bound holds:
		\begin{equation*}
			\mathbb{P}\left( G_t(i) \ge \tau_{P0} \right) \ge \rho_{P0}.
		\end{equation*}
	\end{ass}
	
	This non-degeneracy assumption simply ensures that a cold-start product has a ``fighting chance''—meaning its prior distribution is not entirely stochastically dominated by the saturated arms, which guarantees the incentivized exploration problem is mathematically feasible and not trivial.
	
	Additionally, we extend this non-degeneracy to the \textit{Exploitation Stage} (where all arms have $N$ samples). We assume that after collecting $N_{P*}$ samples, the gap between the optimal arm and the suboptimal arms is distinguishable. Specifically, the posterior gap exceeds $\tau_{P*}$ with probability at least $\rho_{P*}$. In practice, the constants $\tau_{P0}$ and $\tau_{P*}$ serve as hyperparameters regulating the exploration aggressiveness.

	\begin{ass}[Posterior Distribution Assumption]
		\label{ass: posterior assumption}
		Denote $G_{t}(b_{t}) = \min_{j\neq b_{t}}\bE[\mu(x_{t},b_{t}) - \mu(x_{t},j)|S]$ as the minimum posterior gap when we have $\mathsf{N}$ samples of each arms in the Exploitation stage.
		There exist a uniform time-independent posterior constants $n_{\cP_{*}}, \tau_{\cP_{*}}, \rho_{\cP_{*}} > 0$ such that $\forall n \geq n_{\cP_{*}}, i \in [K]$, then 
		$\pr(G_{t}(b_{t}) \geq \tau_{\cP_{*}}) \geq \rho_{\cP_{*}}$.
	\end{ass}
	This assumption ensures that the expected reward gap between arms is bounded away from zero, meaning the optimal arm is eventually distinguishable. This is naturally satisfied in recommendation systems where user features (e.g., demographics) have bounded or normalized support.
	
	Then we provide the regularity conditions over covariates $\cP_{X}$ as follows to avoid the singularity. 
	\begin{ass}[Minimum Eigenvalue of $\Sigma$]
		\label{ass: minimum eigenvalue}
		Define the minimum eigenvalue of the covariance matrix of $X$ as  $\lambda_{\min}(\Sigma) = \lambda_{\min}(\bE_{x \sim \cP_{X}}[xx\Tra])$.
		There exists such a $\phi_{0} > 0$ satisfying that $\lambda_{\min}(\Sigma) \geq \phi_{0}$.  
	\end{ass}
	
	\begin{ass}[Evolution of Trust]
		\label{ass: inflatting variance}
		We posit that user priors are not static but evolve to become more diffuse over time. Specifically, we assume the minimum eigenvalues of the prior covariance matrix $\Sigma_{i,0}$ increase with order $\cO(t)$. 
	\end{ass}
	This models a behavioral shift where users become more open to platform signals over time. In practice, this is a mild requirement following standard Bayesian consistency. If this assumption is violated and users remain stubborn, the algorithm remains robust, but the platform must pay a higher ``Price of Incentives'' by running a longer Cold Start phase (larger $N(\epsilon)$) or using a looser incentive budget.
	
	

	
	\subsection{Sample Complexity and Exploration Calibration}
	\label{sec: DBIC-constraint}
	Here we provide that $\mathsf{N}(\epsilon)$ represents the warm-start complexity required to satisfy the incentive constraints, and $L$ represents the subsidy ratio required to mask exploration.
	\begin{thm}[Sample Complexity and Exploration Calibration]
		\label{thm1: sampling requirements}
		Suppose Assumptions \ref{ass:prior_posterior}--\ref{ass: minimum eigenvalue} hold and priors are Gaussian. To guarantee that the \recon{} algorithm satisfies the $\epsilon$-DBIC with probability at least $\rho_{\cP_{0}}\rho_{\cP_{*}}$, it suffices to set the per-arm cold-start sample size $\mathsf{N}$ and the inverse exploration probability $L$ as follows:
		\begin{equation}
			N(\epsilon) \geq \frac{(\sigma^{2}d+1)K^{3}}{\phi_{0}(\tau_{\cP_{*}} + \epsilon)^{2}} 
			\quad \text{and} \quad 
			L \geq 1 + \frac{1 - \epsilon}{\tau_{\cP_{0}} \rho_{\cP_{0}} + \epsilon}.
		\end{equation}
		Consequently, the Exploitation stage is initialized at epoch $m_{0}(\epsilon) = \lceil 2 + \log_{2} N(\epsilon) \rceil$.
	\end{thm}
	Theorem \ref{thm1: sampling requirements} quantifies the \textit{Price of Incentivizing Exploration} in the presence of stochastic covariates. The bound on $N(\epsilon)$ reveals the structural dependencies required to maintain user trust during the learning process:
	\begin{itemize}
		\item \textit{cubic in arms ($K^3$):} The sample complexity scales cubically with $K$, reflecting the difficulty of maintaining incentives when many competing arms must be simultaneously explored and compared against a dynamic best arm \citep{mansour2020bayesian}.
		
		\item \textit{linear in context ($d$):} The complexity is linear in the covariate dimension $d$, ensuring scalability in high-dimensional feature spaces typical of modern recommendation systems.
		
		\item \textit{inverse quadratic in incentive budget ($(\tau + \epsilon)^{-2}$):} This term highlights the core economic trade-off: a tighter incentive budget $\epsilon$ (i.e., less tolerant users) necessitates a significantly longer cold-start phase to reduce estimator variance before the platform can safely transition to the Exploitation stage.
		
		\item \textit{spectral dependency ($\phi_0^{-1}$):} The complexity is inversely proportional to the minimum eigenvalue of the context covariance matrix, $\phi_0$. This ensures that RCB is robust only when the user contexts are sufficiently diverse to support learning across all dimensions.
	\end{itemize}
	
	\begin{rem}[The Decoupling of Incentives and Learning]
		\label{rem: discuss gamma}
		Theorem \ref{thm1: sampling requirements} highlights a critical economic trade-off: the cold-start sample complexity $N(\epsilon)$ scales inversely with the square of the incentive budget $\epsilon$, quantifying the ``price of incentives'' required to initialize the system. Crucially, \recon{} achieves an algorithmic \textit{decoupling} between incentive constraints and learning rates. The dependence on $\epsilon$ is encapsulated entirely within the initialization threshold $N(\epsilon)$. Once the Exploitation stage begins ($m \ge m_0$), the spread parameter $\gamma_m$ is derived solely from the offline oracle's prediction error (MSPE) and the epoch length, evolving independently of $\epsilon$. This ensures that as long as the cold-start condition is met, the natural reduction in prediction error is sufficient to satisfy the $\epsilon$-DBIC constraint dynamically.
	\end{rem}

	\subsection{Regret Upper Bound}
	\label{sec: regret upper}
	
	We now establish the regret bound for \recon{}. The total regret is structurally decomposed into two components: the \textit{Price of Incentivizing Exploration} (incurred during the Cold Start Stage) and the standard \textit{Learning Regret} (incurred during the Exploitation stage).
	
	\begin{thm}[Regret Decomposition]
		\label{thm: regret upper bound}
		Let $T_{\text{cold}} \approx m_0(\epsilon)$ denote the duration of the Cold Start stage required to satisfy the conditions in Theorem \ref{thm1: sampling requirements}. Under Assumptions \ref{ass:prior_posterior}--\ref{ass: inflatting variance}, for any horizon $T > T_{\text{cold}}$, with probability at least $1-\delta$, the cumulative regret of \recon{} is bounded by:
		\begin{equation}
			\mathcal{R}(T) \leq \underbrace{T_{\text{cold}}(\epsilon)}_{\text{Price of Incentives}} + \underbrace{\tilde{\mathcal{O}}\left(\sqrt{Kd(T - T_{\text{cold}})}\right)}_{\text{Learning Regret}}.
		\end{equation}
	\end{thm}
	The regret bound illustrates the fundamental trade-off in incentivized exploration:
	\textit{The Price of Incentivizing Exploration (Stage 1):} The first term, $T_{\text{cold}}(\epsilon)$, represents the unavoidable regret incurred to accumulate sufficient data to make the system DBIC. As established in Theorem \ref{thm1: sampling requirements}, this cost scales as $\mathcal{O}(1/\epsilon^2)$. This aligns with the theoretical characterization by \citet{sellke2023price}, who prove that any BIC algorithm requires an initial "warm-start" phase where performance is sacrificed to build the posterior precision required to persuade myopic agents.
	
	\textit{Efficiency of Modular Learning (Stage 2):} The second term, scaling with $\sqrt{T}$, confirms that once the incentive constraints are stabilized, \recon{} achieves the standard sublinear regret rate for contextual bandits \citep{lattimore2020bandit}. The dependence on $\sqrt{Kd}$ reflects the modularity of our approach: the regret is governed by the generalization error of the offline oracle (which scales with dimension $d$) and the spread parameter required to maintain DBIC (which scales with $\sqrt{K}$).

	\begin{table*}
		\small
		\centering
		\caption{Comparison $\recon$ and physician algorithm and distribution of patients.}
	\begin{tabular}{cc|ccc|ccc|c}
		\toprule
		& & \multicolumn{3}{c|}{\makecell{\textbf{RCB Algo}\\\textbf{Assigned Dosage}}} & \multicolumn{3}{c|}{\makecell{\textbf{Physician Algo}\\\textbf{Assigned Dosage}}} & \makecell{\textbf{\% of}\\ \textbf{Patients}} \\ 
		& & Low & Medium & High & Low & Medium & High &  \\ \hline 
		\multirow{3}{*}{\rotatebox{90}{\makecell{\textbf{True}\\\textbf{Dosage}}}} & Low & \textcolor{blue}{\bf{50\%}} & 48\% & \textcolor{red}{2\%} & \textcolor{blue}{\bf{0\%}} & 100\% & \textcolor{red}{0\%} & \textbf{27\%} \\ 
		& Medium & 14\% & \textcolor{blue}{\bf{84\%}} & 2\% & 0\% & \textcolor{blue}{\bf{100\%}} & 0\% & \textbf{60\%} \\ 
		& High & \textcolor{red}{2\%} & 93\% & \textcolor{blue}{\bf{5\%}} & \textcolor{red}{0\%} & 100\% & \textcolor{blue}{\bf{0\%}}& \textbf{13\%} \\ \bottomrule
	\end{tabular}
	\label{table:correct-table}
\end{table*}

\section{Experiments}
\label{sec: experiment}

\subsection{Real Data}
\label{sec: real data}

We leverage the PharmGKB dataset (5,528 patients) \citep{international2009estimation} to simulate a Clinical Decision Support system for personalized warfarin dosing, aiming to mitigate the adverse effects of traditional fixed-dose strategies. In this model, the system acts as a ``second opinion'' for clinicians. Detailed data specifications and pre-processing steps are provided in Appendix \ref{sec: app_real_data}.

\textbf{Arms Construction.} Following the protocol in \citet{bastani2020online}, we formulate the problem as a $K$-armed bandit ($K=3$) with patient covariates. We discretize the continuous dosage space into three buckets based on clinically relevant thresholds:
\begin{itemize}
	\item \textit{Low (Arm 1):} $<3$mg/day (33\% of patients).
	\item \textit{Medium (Arm 2):} $3-7$mg/day (54\% of patients).
	\item \textit{High (Arm 3):} $>7$mg/day (13\% of patients).
\end{itemize}

In this setting, the ``Physician Assigned Dosage'' (Standard of Care) corresponds to a fixed strategy of always recommending the \textit{Medium} dose. This serves a dual purpose in our simulation: it acts as a baseline and, crucially, defines the \textit{Agent's Prior} ($\mathcal{P}_0$). Patients requiring Low or High doses are at risk of excessive anticoagulation or thrombosis, respectively, under the Physician's prior. The challenge for \recon{} is to incentivize the clinician to explore the Low/High arms despite their strong prior belief in the Medium arm.

\textbf{Reward Construction.} We define a binary reward outcome: $y_t = 1$ if the recommended dosage matches the patient's true optimal arm, and $0$ otherwise. Consequently, the cumulative regret directly quantifies the total count of incorrect dosing decisions. Although the outcome is discrete, \recon{} utilizes a linear regression oracle to estimate the \textit{expected} reward (i.e., the conditional probability of a correct diagnosis). This setup serves to empirically validate the algorithm's robustness to model misspecification, demonstrating its efficacy even when a linear predictor is applied to a discrete classification task.

\textbf{Ground Truth:} We estimate the true arm parameters $\beta_{i}$ using the linear regression with the entire dataset for specific group. Besides, we scale the optimal warfarin dosing into $[0,1]$ with minimum dosing as 0, and maximum dosing as 1.
The true mean warfarin dosage is obtained from the inner product of $\beta_{i}$ (based on the optimal arm) multiples the covariate of this patient. Besides, for the counterfactual arm, the true mean dosage are set to be 0.

\textbf{\(\recon\) Setup}:
The total number of trials is set at \(T = 5528\), with reward noise \(\hat{\sigma} = 0.054\) estimated from the true optimal dosing of warfarin after scaling. To create an online decision-making scenario, we simulate the process across 10 random permutations of patient arrivals, averaging the results over these permutations. The exploration budget \(\epsilon\) is varied among \([0.025, 0.035, 0.045]\). The minimum gap \(\tau_{\cP_{0}}\) is set at \(0.005\). The prior variance is defined as \(\Sigma = [0.4, 0.6, 0.8]\M{I}_{d}\), and the prior means are \(\beta_{2,0} = 0.05 \times \M{I}_{d}\), \(\beta_{1,0} = \beta_{3,0} = \M{0}_{d}\). Further details on hyperparameters are available in $\S$\ref{sec: app_real_data}.

\paragraph{Evaluation Criteria.}
We employ four complementary metrics to assess it:
\begin{itemize}
	\item \textit{cumulative regret:} We define the binary reward $y_t=1$ if the recommended dosage matches the patient's true optimal arm, and $y_t=0$ otherwise. Consequently, the cumulative regret $\mathcal{R}(T)$ directly quantifies the \textit{total count of incorrect dosing decisions} over the horizon $T$. 
	\item \textit{$\epsilon$-DBIC Gain:} This metric tracks the instantaneous incentive compatibility of the recommendation. It measures the difference between the user's expected reward from the recommended arm $I_t$ and their best alternative action given the history $\Gamma_{t-1}$. A value greater than $-\epsilon$ indicates that the user is incentivized to follow the recommendation.
	
	\item \textit{fraction of incorrect decisions:} We report the error rate ($\mathcal{R}(T)/T$) to provide an interpretable metric of clinical failure. This is critical in medical settings where non-optimal arms imply specific health risks (e.g., thrombosis or hemorrhage) rather than merely lower utility.
	
	\item \textit{weighted risk score:} To penalize ``safe'' heuristics (such as the Physician baseline, which always selects the `Medium' dose), a score that accounts for the population class imbalance (Low: 27\%, Medium: 60\%, High: 13\%). The score assigns $+1$ point for a correct decision and $-1$ point for an incorrect decision, weighted by the true dosage prevalence $p_k$, $\text{Score} = \sum_{k \in \{L, M, H\}} p_k \times \left( \mathbb{I}(\text{Correct}|k) - \mathbb{I}(\text{Incorrect}|k) \right)$. This metric exposes the weakness of the Physician baseline, which achieves high accuracy on the majority class but fails on high-risk patients requiring Low/High dosages.
\end{itemize}

\subsubsection{Result Analysis}
\begin{figure}[H]
	\centering    \includegraphics[scale=0.2]{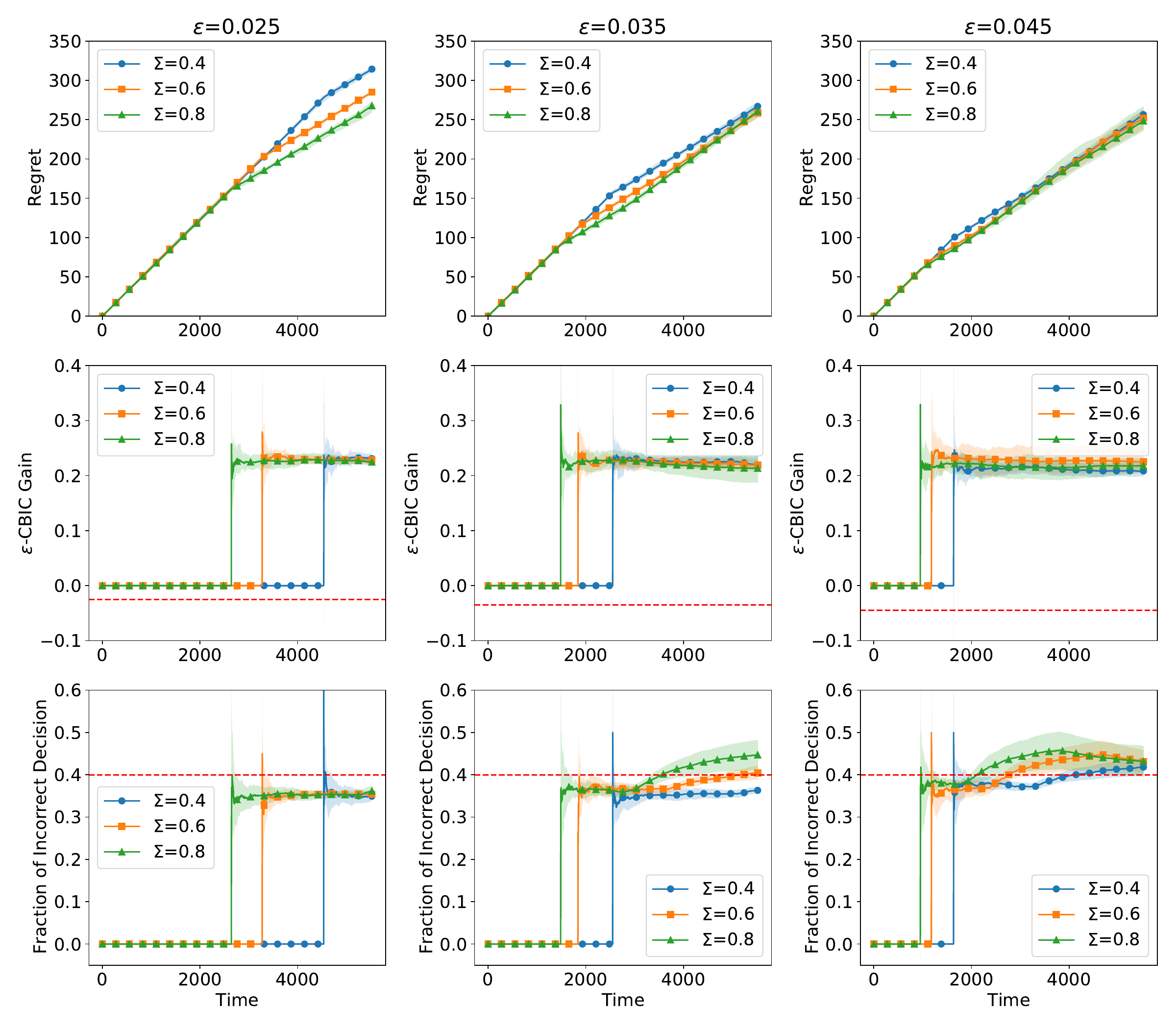}
	\caption{Left to right: fraction of incorrect decision under different setups of budgets ($\epsilon$) $[2.5, 3.5, 4.5]\times 10^{-2}$. Dotted line represents the lasso bandit's error rate.}
	\label{fig:real-data-frac}
\end{figure}

In Table \ref{table:correct-table}, we exhibits the $\recon$'s dosage correction ratio and physician assigned dosage correction ratio and weighted risk scores. As for cumulative regret and $\epsilon$-DBIC gain, we put it in the Appendix.

\paragraph{Fraction of Incorrect Decisions.} Figure \ref{fig:real-data-frac} presents the decision error rate, a critical metric in clinical settings where non-optimal arms carry significant health risks (e.g., thrombosis). We observe a distinct performance hierarchy governed by the incentive budget $\epsilon$ and the strength of the prior $\Sigma_0$:
\begin{itemize}
	\item \textit{tight budget ($\epsilon = 0.025$):} \recon{} achieves its best performance (approx. $0.35$ error rate) across all prior variances. Notably, this matches the state-of-the-art performance of the Lasso Bandit baseline \citep{bastani2020online} (represented by the dotted line in Figure \ref{fig:real-data-frac}), despite the fact that \recon{} operates under much stricter conditions: it successfully enforces the DBIC constraints and does not require prior knowledge of non-zero feature counts.
	
	\item \textit{intermediate budget ($\epsilon = 0.035$):} Performance degrades for weaker priors.
	
	\item \textit{loose budget ($\epsilon = 0.045$):} The error rate exceeds $0.40$ for all settings.
\end{itemize}

\paragraph{Weighted Risk Score Analysis.} 
Table \ref{table:correct-table} breaks down the clinical decision quality by dosage stratum. The patient population exhibits significant class imbalance: 60\% require Medium dosage, while 27\% and 13\% require Low and High dosages, respectively.
\begin{itemize}
	\item \textit{standard of care (Physician):} The fixed ``Always Medium'' strategy acts as a strong prior. It achieves 100\% accuracy on the majority class (Medium) but fails completely on the critical Low and High tails.  This yields a static baseline score of $0.20$.
	\item \textit{\recon{} performance:} By incentivizing exploration, \recon{} successfully identifies patients in the tails of the distribution. It attains correction rates of 50\% for Low dosages and 5\% for High dosages, while maintaining a robust 84\% accuracy for the Medium group. Crucially, the rate of \textit{extreme} errors (e.g., prescribing High to a Low patient) is limited to 2\%, indicating the algorithm preserves safety constraints.
	
	\item \textit{net clinical utility:} At a tight budget of $\epsilon = 0.025$, \recon{} achieves a weighted risk score of \textbf{0.291}, significantly outperforming the physician baseline (0.20). Even as the prior becomes weaker ($\epsilon=0.035$), the score remains competitive (0.265). This confirms that \recon{} effectively trades off a marginal decrease in majority-class accuracy for significant gains in identifying high-risk minority patients.
\end{itemize}

\paragraph{Scalability and Large $K$ Mitigation.}
The cold-start sample complexity scales as $\mathcal{O}(K^3 \cdot d / \epsilon^2)$, which can be cost-prohibitive for systems with many arms. To mitigate this in practice, we propose four strategies: (1) \textit{Arm clustering}: grouping similar arms based on features to explore cluster representatives, reducing the effective $K$; (2) \textit{Progressive exploration}: starting with a reduced arm set (e.g., top-$K$ by prior mean) and expanding as the system learns; (3) \textit{Warm starting}: incorporating historical data or A/B tests as an informative prior to reduce the required cold-start samples; and (4) \textit{Contextual arm elimination}: eliminating clearly suboptimal arms early to reduce the effective $K$ per context.

\section{Conclusion}
We introduced \recon{}, a framework that resolves the tension between myopic user incentives and long-term learning via a modular two-stage architecture. By replacing rigid posterior sampling with \textit{Inverse Proportional Gap Sampling}, \recon{} allows for the integration of arbitrary offline regression oracles. A core contribution of this modularity is that non-linear oracles (e.g., neural networks) can be integrated seamlessly. While theoretical DBIC structures hold, verifying BIC with non-linear oracles may require bootstrap-based uncertainty quantification or conformal prediction, and regret bounds would scale with the complexity of the function class. Theoretically, we achieved a regret bound of $\tilde{\mathcal{O}}(\sqrt{KdT})$ and explicitly quantified the ``Price of Incentivizing Exploration'', deriving the cold-start cost required to satisfy the $\epsilon$-DBIC constraint. Empirical validation confirms that \recon{} significantly outperforms in identifying optimal treatments for high-risk groups. Future work will extend this modular approach to combinatorial semi-bandits, deep retrieval and ranking systems, and dynamically adjusting the budget $\epsilon$ to accommodate adaptive user behavior.


\section*{Impact Statement}
This work advances the field of contextual bandits by reconciling exploration with user incentives, potentially improving fairness in recommendation systems and personalized healthcare by identifying optimal outcomes for under-served populations. However, the reliance on information asymmetry to incentivize exploration raises ethical considerations regarding transparency and user autonomy. To mitigate risks in high-stakes domains, our framework enforces a rigorous DBIC constraint, ensuring recommendations remain within a safety margin of the user's myopic best interest. We emphasize that in clinical settings, such algorithms should function as decision-support tools for experts rather than autonomous agents. Practitioners must strictly validate prior beliefs and carefully calibrate the incentive budget to maintain these safety guarantees in deployment.

\section*{Acknowledgement}
We would like to thank the area chair and anonymous referees for their constructive suggestions that improve the paper.
Xiaowu Dai acknowledges support from the National Science Foundation DMS 2515903, the National Institutes of Health R01DK142026, the National Institutes of Health dkNet AI Pilot Award (parent award U24DK097771), the Merck Biostatistics and Research Decision Sciences, and the Hellman Fellowship.


\bibliography{learning}

@inproceedings{jia2022online,
	title={Online learning and pricing with reusable resources: Linear bandits with sub-exponential rewards},
	author={Jia, Huiwen and Shi, Cong and Shen, Siqian},
	booktitle={International Conference on Machine Learning},
	pages={10135--10160},
	year={2022},
	organization={PMLR}
}

@article{ramprasad2023online,
	title={Online bootstrap inference for policy evaluation in reinforcement learning},
	author={Ramprasad, Pratik and Li, Yuantong and Yang, Zhuoran and Wang, Zhaoran and Sun, Will Wei and Cheng, Guang},
	journal={Journal of the American Statistical Association},
	volume={118},
	number={544},
	pages={2901--2914},
	year={2023},
	publisher={Taylor \& Francis}
}

@article{li2023two,
	title={Two-sided competing matching recommendation markets with quota and complementary preferences constraints},
	author={Li, Yuantong and Cheng, Guang and Dai, Xiaowu},
	journal={arXiv preprint arXiv:2301.10230},
	year={2023}
}

@article{jeon2024epinet,
	title={Epinet for Content Cold Start},
	author={Jeon, Hong Jun and Liu, Songbin and Li, Yuantong and Lyu, Jie and Song, Hunter and Liu, Ji and Wu, Peng and Zhu, Zheqing},
	journal={arXiv preprint arXiv:2412.04484},
	year={2024}
}

@article{zhai2024actions,
	title={Actions speak louder than words: Trillion-parameter sequential transducers for generative recommendations},
	author={Zhai, Jiaqi and Liao, Lucy and Liu, Xing and Wang, Yueming and Li, Rui and Cao, Xuan and Gao, Leon and Gong, Zhaojie and Gu, Fangda and He, Michael and others},
	journal={arXiv preprint arXiv:2402.17152},
	year={2024}
}

@book{hastie2009elements,
	title={The elements of statistical learning: data mining, inference, and prediction},
	author={Hastie, Trevor and Tibshirani, Robert and Friedman, Jerome H and Friedman, Jerome H},
	volume={2},
	year={2009},
	publisher={Springer}
}

@article{naumov2019deep,
	title={Deep learning recommendation model for personalization and recommendation systems},
	author={Naumov, Maxim and Mudigere, Dheevatsa and Shi, Hao-Jun Michael and Huang, Jianyu and Sundaraman, Narayanan and Park, Jongsoo and Wang, Xiaodong and Gupta, Udit and Wu, Carole-Jean and Azzolini, Alisson G and others},
	journal={arXiv preprint arXiv:1906.00091},
	year={2019}
}

@inproceedings{bao2023tallrec,
	title={Tallrec: An effective and efficient tuning framework to align large language model with recommendation},
	author={Bao, Keqin and Zhang, Jizhi and Zhang, Yang and Wang, Wenjie and Feng, Fuli and He, Xiangnan},
	booktitle={Proceedings of the 17th ACM Conference on Recommender Systems},
	pages={1007--1014},
	year={2023}
}

@article{koren2009matrix,
	title={Matrix factorization techniques for recommender systems},
	author={Koren, Yehuda and Bell, Robert and Volinsky, Chris},
	journal={Computer},
	volume={42},
	number={8},
	pages={30--37},
	year={2009},
	publisher={IEEE}
}

@inproceedings{covington2016deep,
	title={Deep neural networks for youtube recommendations},
	author={Covington, Paul and Adams, Jay and Sargin, Emre},
	booktitle={Proceedings of the 10th ACM conference on recommender systems},
	pages={191--198},
	year={2016}
}

@inproceedings{mcinerney2018explore,
	title={Explore, exploit, and explain: personalizing explainable recommendations with bandits},
	author={McInerney, James and Lacker, Benjamin and Hansen, Samantha and Higley, Karl and Bouchard, Hugues and Gruson, Alois and Mehrotra, Rishabh},
	booktitle={Proceedings of the 12th ACM conference on recommender systems},
	pages={31--39},
	year={2018}
}

@inproceedings{li2010contextual,
	title={A contextual-bandit approach to personalized news article recommendation},
	author={Li, Lihong and Chu, Wei and Langford, John and Schapire, Robert E},
	booktitle={Proceedings of the 19th international conference on World wide web},
	pages={661--670},
	year={2010}
}

@inproceedings{zheng2018drn,
	title={DRN: A deep reinforcement learning framework for news recommendation},
	author={Zheng, Guanjie and Zhang, Fuzheng and Zheng, Zihan and Xiang, Yang and Yuan, Nicholas Jing and Xie, Xing and Li, Zhenhui},
	booktitle={Proceedings of the 2018 world wide web conference},
	pages={167--176},
	year={2018}
}

@incollection{wang2017deep,
	title={Deep \& cross network for ad click predictions},
	author={Wang, Ruoxi and Fu, Bin and Fu, Gang and Wang, Mingliang},
	booktitle={Proceedings of the ADKDD'17},
	pages={1--7},
	year={2017}
}

@article{lewis2020retrieval,
	title={Retrieval-augmented generation for knowledge-intensive nlp tasks},
	author={Lewis, Patrick and Perez, Ethan and Piktus, Aleksandra and Petroni, Fabio and Karpukhin, Vladimir and Goyal, Naman and K{\"u}ttler, Heinrich and Lewis, Mike and Yih, Wen-tau and Rockt{\"a}schel, Tim and others},
	journal={Advances in Neural Information Processing Systems},
	volume={33},
	pages={9459--9474},
	year={2020}
}

@article{villar2015multi,
	title={Multi-armed bandit models for the optimal design of clinical trials: benefits and challenges},
	author={Villar, Sof{\'\i}a S and Bowden, Jack and Wason, James},
	journal={Statistical science: a review journal of the Institute of Mathematical Statistics},
	volume={30},
	number={2},
	pages={199},
	year={2015},
	publisher={Europe PMC Funders}
}

@article{ostrovsky2023reserve,
	title={Reserve prices in internet advertising auctions: A field experiment},
	author={Ostrovsky, Michael and Schwarz, Michael},
	journal={Journal of Political Economy},
	volume={131},
	number={12},
	pages={3352--3376},
	year={2023},
	publisher={The University of Chicago Press Chicago, IL}
}

@article{hu2022incentivizing,
	title={Incentivizing combinatorial bandit exploration},
	author={Hu, Xinyan and Ngo, Dung and Slivkins, Aleksandrs and Wu, Steven Z},
	journal={Advances in Neural Information Processing Systems},
	volume={35},
	pages={37173--37183},
	year={2022}
}

@article{sellke2023price,
	title={The Price of Incentivizing Exploration: A Characterization via Thompson sampling and Sample Complexity},
	author={Sellke, Mark and Slivkins, Aleksandrs},
	journal={Operations Research},
	volume={71},
	number={5},
	pages={1706--1732},
	year={2023},
	publisher={INFORMS}
}

@inproceedings{sellke2023incentivizing,
	title={Incentivizing exploration with linear contexts and combinatorial actions},
	author={Sellke, Mark},
	booktitle={International Conference on Machine Learning},
	pages={30570--30583},
	year={2023},
	organization={PMLR}
}

@article{kalvit2024incentivized,
	title={Incentivized Exploration via Filtered Posterior Sampling},
	author={Kalvit, Anand and Slivkins, Aleksandrs and Gur, Yonatan},
	journal={arXiv preprint arXiv:2402.13338},
	year={2024}
}

@article{mansour2020bayesian,
	title={Bayesian incentive-compatible bandit exploration},
	author={Mansour, Yishay and Slivkins, Aleksandrs and Syrgkanis, Vasilis},
	journal={Operations Research},
	volume={68},
	number={4},
	pages={1132--1161},
	year={2020},
	publisher={INFORMS}
}

@article{bastani2020online,
	title={Online decision making with high-dimensional covariates},
	author={Bastani, Hamsa and Bayati, Mohsen},
	journal={Operations Research},
	volume={68},
	number={1},
	pages={276--294},
	year={2020},
	publisher={INFORMS}
}

@book{lattimore2020bandit,
	title={Bandit algorithms},
	author={Lattimore, Tor and Szepesv{\'a}ri, Csaba},
	year={2020},
	publisher={Cambridge University Press}
}

@inproceedings{abe1999associative,
	title={Associative reinforcement learning using linear probabilistic concepts},
	author={Abe, Naoki and Long, Philip M},
	booktitle={ICML},
	pages={3--11},
	year={1999},
	organization={Citeseer}
}

@inproceedings{foster2020beyond,
	title={Beyond ucb: Optimal and efficient contextual bandits with regression oracles},
	author={Foster, Dylan and Rakhlin, Alexander},
	booktitle={International Conference on Machine Learning},
	pages={3199--3210},
	year={2020},
	organization={PMLR}
}

@article{simchi2022bypassing,
	title={Bypassing the monster: A faster and simpler optimal algorithm for contextual bandits under realizability},
	author={Simchi-Levi, David and Xu, Yunzong},
	journal={Mathematics of Operations Research},
	volume={47},
	number={3},
	pages={1904--1931},
	year={2022},
	publisher={INFORMS}
}

@article{mourtada2022elementary,
	title={An elementary analysis of ridge regression with random design},
	author={Mourtada, Jaouad and Rosasco, Lorenzo},
	journal={Comptes Rendus. Math{\'e}matique},
	volume={360},
	number={G9},
	pages={1055--1063},
	year={2022}
}

@article{international2009estimation,
	title={Estimation of the warfarin dose with clinical and pharmacogenetic data},
	author={International Warfarin Pharmacogenetics Consortium},
	journal={New England Journal of Medicine},
	volume={360},
	number={8},
	pages={753--764},
	year={2009},
	publisher={Mass Medical Soc}
}

@article{che2015optimal,
	title={Optimal design for social learning},
	author={Che, Yeon-Koo and Horner, Johannes},
	year={2015},
	publisher={Cowles Foundation Discussion Paper}
}

@inproceedings{frazier2014incentivizing,
	title={Incentivizing exploration},
	author={Frazier, Peter and Kempe, David and Kleinberg, Jon and Kleinberg, Robert},
	booktitle={Proceedings of the fifteenth ACM conference on Economics and Computation},
	pages={5--22},
	year={2014}
}

@article{keller2005strategic,
	title={Strategic experimentation with exponential bandits},
	author={Keller, Godfrey and Rady, Sven and Cripps, Martin},
	journal={Econometrica},
	volume={73},
	number={1},
	pages={39--68},
	year={2005},
	publisher={Wiley Online Library}
}

@article{besbes2009dynamic,
	title={Dynamic pricing without knowing the demand function: Risk bounds and near-optimal algorithms},
	author={Besbes, Omar and Zeevi, Assaf},
	journal={Operations research},
	volume={57},
	number={6},
	pages={1407--1420},
	year={2009},
	publisher={INFORMS}
}

@misc{babaioff2015dynamic,
	title={Dynamic pricing with limited supply},
	author={Babaioff, Moshe and Dughmi, Shaddin and Kleinberg, Robert and Slivkins, Aleksandrs},
	year={2015},
	publisher={ACM New York, NY, USA}
}

@inproceedings{ho2014adaptive,
	title={Adaptive contract design for crowdsourcing markets: Bandit algorithms for repeated principal-agent problems},
	author={Ho, Chien-Ju and Slivkins, Aleksandrs and Vaughan, Jennifer Wortman},
	booktitle={Proceedings of the fifteenth ACM conference on Economics and Computation},
	pages={359--376},
	year={2014}
}

@article{kamenica2011bayesian,
	title={Bayesian persuasion},
	author={Kamenica, Emir and Gentzkow, Matthew},
	journal={American Economic Review},
	volume={101},
	number={6},
	pages={2590--2615},
	year={2011},
	publisher={American Economic Association}
}

@article{rayo2010optimal,
	title={Optimal information disclosure},
	author={Rayo, Luis and Segal, Ilya},
	journal={Journal of political Economy},
	volume={118},
	number={5},
	pages={949--987},
	year={2010},
	publisher={University of Chicago Press Chicago, IL}
}

@article{ely2015suspense,
	title={Suspense and surprise},
	author={Ely, Jeffrey and Frankel, Alexander and Kamenica, Emir},
	journal={Journal of Political Economy},
	volume={123},
	number={1},
	pages={215--260},
	year={2015},
	publisher={University of Chicago Press Chicago, IL}
}

@article{horner2016selling,
	title={Selling information},
	author={H{\"o}rner, Johannes and Skrzypacz, Andrzej},
	journal={Journal of Political Economy},
	volume={124},
	number={6},
	pages={1515--1562},
	year={2016},
	publisher={University of Chicago Press Chicago, IL}
}

@article{kremer2014implementing,
	title={Implementing the “wisdom of the crowd”},
	author={Kremer, Ilan and Mansour, Yishay and Perry, Motty},
	journal={Journal of Political Economy},
	volume={122},
	number={5},
	pages={988--1012},
	year={2014},
	publisher={University of Chicago Press Chicago, IL}
}

@article{freidlin2008multi,
	title={Multi-arm clinical trials of new agents: some design considerations},
	author={Freidlin, Boris and Korn, Edward L and Gray, Robert and Martin, Alison},
	journal={Clinical Cancer Research},
	volume={14},
	number={14},
	pages={4368--4371},
	year={2008},
	publisher={AACR}
}

@article{thompson1933likelihood,
	title={On the likelihood that one unknown probability exceeds another in view of the evidence of two samples},
	author={Thompson, William R},
	journal={Biometrika},
	volume={25},
	number={3-4},
	pages={285--294},
	year={1933},
	publisher={Oxford University Press}
}

@article{arango2016good,
	title={Good clinical practice training: identifying key elements and strategies for increasing training efficiency},
	author={Arango, Jaime and Chuck, Tina and Ellenberg, Susan S and Foltz, Bridget and Gorman, Colleen and Hinrichs, Heidi and McHale, Susan and Merchant, Kunal and Seltzer, Jonathan and Shapley, Stephanie and others},
	journal={Therapeutic innovation \& regulatory science},
	volume={50},
	number={4},
	pages={480--486},
	year={2016},
	publisher={SAGE Publications Sage CA: Los Angeles, CA}
}

@inproceedings{agarwal2012contextual,
	title={Contextual bandit learning with predictable rewards},
	author={Agarwal, Alekh and Dud{\'\i}k, Miroslav and Kale, Satyen and Langford, John and Schapire, Robert},
	booktitle={Artificial Intelligence and Statistics},
	pages={19--26},
	year={2012},
	organization={PMLR}
}

@inproceedings{li2021online,
	title={Online Forgetting Process for Linear Regression Models},
	author={Li, Yuantong and Wang, Chi-Hua and Cheng, Guang},
	booktitle={International Conference on Artificial Intelligence and Statistics},
	pages={217--225},
	year={2021},
	organization={PMLR}
}

@article{wang2023online,
	title={Online Regularization toward Always-Valid High-Dimensional Dynamic Pricing},
	author={Wang, Chi-Hua and Wang, Zhanyu and Sun, Will Wei and Cheng, Guang},
	journal={Journal of the American Statistical Association},
	pages={1--13},
	year={2023},
	publisher={Taylor \& Francis}
}

@book{sutton2018reinforcement,
	title={Reinforcement learning: An introduction},
	author={Sutton, Richard S and Barto, Andrew G},
	year={2018},
	publisher={MIT press}
}

@article{lai1985asymptotically,
	title={Asymptotically efficient adaptive allocation rules},
	author={Lai, Tze Leung and Robbins, Herbert},
	journal={Advances in applied mathematics},
	volume={6},
	number={1},
	pages={4--22},
	year={1985},
	publisher={Academic Press}
}

@article{auer2002using,
	title={Using confidence bounds for exploitation-exploration trade-offs},
	author={Auer, Peter},
	journal={Journal of Machine Learning Research},
	volume={3},
	number={Nov},
	pages={397--422},
	year={2002}
}

@article{auer2002finite,
	title={Finite-time analysis of the multiarmed bandit problem},
	author={Auer, Peter and Cesa-Bianchi, Nicolo and Fischer, Paul},
	journal={Machine learning},
	volume={47},
	pages={235--256},
	year={2002},
	publisher={Springer}
}

@article{robbins1952some,
	title={Some aspects of the sequential design of experiments},
	author={Robbins, Herbert},
	year={1952}
}

@article{hao2022regret,
	title={Regret bounds for information-directed reinforcement learning},
	author={Hao, Botao and Lattimore, Tor},
	journal={Advances in Neural Information Processing Systems},
	volume={35},
	pages={28575--28587},
	year={2022}
}

@inproceedings{abbasi2009forced,
	title={Forced-exploration based algorithms for playing in stochastic linear bandits},
	author={Abbasi-Yadkori, Yasin and Antos, Andr{\'a}s and Szepesv{\'a}ri, Csaba},
	booktitle={COLT Workshop on On-line Learning with Limited Feedback},
	volume={92},
	pages={236},
	year={2009}
}

@article{han2022online,
	title={Online statistical inference for matrix contextual bandit},
	author={Han, Qiyu and Sun, Will Wei and Zhang, Yichen},
	journal={arXiv preprint arXiv:2212.11385},
	year={2022}
}

@article{li2022rate,
	title={Rate-optimal contextual online matching bandit},
	author={Li, Yuantong and Wang, Chi-hua and Cheng, Guang and Sun, Will Wei},
	journal={arXiv preprint arXiv:2205.03699},
	year={2022}
}

@article{chen2021statisticalb,
	title={Statistical inference for online decision making: In a contextual bandit setting},
	author={Chen, Haoyu and Lu, Wenbin and Song, Rui},
	journal={Journal of the American Statistical Association},
	volume={116},
	number={533},
	pages={240--255},
	year={2021},
	publisher={Taylor \& Francis}
}

@article{shi2022statistical,
	title={Statistical inference of the value function for reinforcement learning in infinite-horizon settings},
	author={Shi, Chengchun and Zhang, Sheng and Lu, Wenbin and Song, Rui},
	journal={Journal of the Royal Statistical Society Series B: Statistical Methodology},
	volume={84},
	number={3},
	pages={765--793},
	year={2022},
	publisher={Oxford University Press}
}

@article{li2024jiayi,
	title={Jiayi Li, Yuantong Li and Xiaowu Dai's contribution to the Discussion of ‘Estimating means of bounded random variables by betting'by Waudby-Smith and Ramdas},
	author={Li, Jiayi and Li, Yuantong and Dai, Xiaowu},
	journal={Journal of the Royal Statistical Society Series B: Statistical Methodology},
	volume={86},
	number={1},
	pages={41--43},
	year={2024},
	publisher={Oxford University Press US}
}

@article{waudby2022anytime,
	title={Anytime-valid off-policy inference for contextual bandits},
	author={Waudby-Smith, Ian and Wu, Lili and Ramdas, Aaditya and Karampatziakis, Nikos and Mineiro, Paul},
	journal={ACM/JMS Journal of Data Science},
	year={2022},
	publisher={ACM New York, NY}
}

@article{dai2024incentive,
	title={Incentive-aware recommender systems in two-sided markets},
	author={Dai, Xiaowu and Xu, Wenlu and Qi, Yuan and Jordan, Michael},
	journal={ACM Transactions on Recommender Systems},
	volume={2},
	number={4},
	pages={1--38},
	year={2024},
	publisher={ACM New York, NY}
}

@article{russo2014learning,
	title={Learning to optimize via posterior sampling},
	author={Russo, Daniel and Van Roy, Benjamin},
	journal={Mathematics of Operations Research},
	volume={39},
	number={4},
	pages={1221--1243},
	year={2014},
	publisher={INFORMS}
}

@inproceedings{wu2022residual,
	title={Residual bootstrap exploration for stochastic linear bandit},
	author={Wu, Shuang and Wang, Chi-Hua and Li, Yuantong and Cheng, Guang},
	booktitle={Uncertainty in Artificial Intelligence},
	pages={2117--2127},
	year={2022},
	organization={PMLR}
}

@article{wang2020residual,
	title={Residual bootstrap exploration for bandit algorithms},
	author={Wang, Chi-Hua and Yu, Yang and Hao, Botao and Cheng, Guang},
	journal={arXiv preprint arXiv:2002.08436},
	year={2020}
}

@inproceedings{kveton2019garbage,
	title={Garbage in, reward out: Bootstrapping exploration in multi-armed bandits},
	author={Kveton, Branislav and Szepesvari, Csaba and Vaswani, Sharan and Wen, Zheng and Lattimore, Tor and Ghavamzadeh, Mohammad},
	booktitle={International Conference on Machine Learning},
	pages={3601--3610},
	year={2019},
	organization={PMLR}
}
\bibliographystyle{icml2026}

	\newpage 
	\appendix 
	\onecolumn 
	\renewcommand\thesection{\Alph{section}} \numberwithin{equation}{section}

	\section{Related Works}
	\label{ref:add_related}
	
	\textbf{Incentivized Exploration}.
	The study of incentivized exploration was initiated by \citep{kremer2014implementing} and \cite{che2015optimal}, who focused on deriving Bayesian-optimal policies for two-action settings. While \cite{frazier2014incentivizing} extended this to include monetary transfers, subsequent research has analyzed exploration incentives in diverse scenarios, including decentralized agents \citep{keller2005strategic}, dynamic pricing \citep{besbes2009dynamic}, auctions \citep{ostrovsky2023reserve, babaioff2015dynamic}, and human computation \citep{ho2014adaptive}.
	
	\textbf{Information Design}
	Another related work is Bayesian persuasion as introduced by \cite{kamenica2011bayesian}, focusing on a single round where the planner's signal is informed by the "history" of previous interactions. In exploring strategic information disclosure, \cite{rayo2010optimal} investigated how planners can encourage better decision-making among agents by controlling information flows. The temporal aspect of information release is addressed by \citep{ely2015suspense, horner2016selling}, who studied the optimization of suspense and the commercial strategy of selling information over time, respectively. These contributions highlight different facets of information design.

	\textbf{Bandit algorithms}.
	$\epsilon$-greedy \citep{auer2002finite, chen2021statisticalb, han2022online, shi2022statistical}, explore-then-commit \citep{robbins1952some, abbasi2009forced, li2022rate}, upper confidence bound (UCB) \citep{lai1985asymptotically, auer2002using, li2021online, wang2023online}, Thompson sampling \citep{thompson1933likelihood, russo2014learning, li2023two}, 
	boostrap sampling \citep{kveton2019garbage, wang2020residual, wu2022residual, ramprasad2023online}, information directed sampling \citep{russo2014learning, hao2022regret}, 
	inversely proportional to the gap sampling  \citep{abe1999associative, foster2020beyond, simchi2022bypassing}, and betting \citep{waudby2022anytime, li2024jiayi}.

	\textbf{Applications in Medical Fields}.
	Patients' incentives are a significant barrier to conducting medical trials, especially large-scale ones for affordable treatments. BIC exploration represents a theoretical effort to overcome this challenge.
	Medical trials initially motivated the study of multi-arm bandits (MABs) and exploration-exploitation tradeoffs \citep{villar2015multi}. However, disclosing information about the medical trial is necessary to meet the ``informed consent'' standards set by various regulations \citep{arango2016good}.
	In addition, medical trials, particularly those involving multiple treatments, underscore the relevance of BIC bandit exploration with multiple actions where traditional trials typically compare a new treatment against a placebo, but the designs incorporating multiple treatments are gaining practical importance and have been explored in biostatistics literature \citep{freidlin2008multi}.
	BIC bandit exploration with contexts consideration is increasingly applied in adaptive trial designs, leveraging patients' "background information" to tailor treatments. 
	
	\section{Algorithm}
	\begin{algorithm}[!ht]
		\setstretch{1}
		\SetAlgoLined
		\DontPrintSemicolon
		\SetKwInOut{Input}{Input}
		\SetKwInOut{Output}{Output}
		\Input{$K, \mathsf{N}, L, B, S, \{N_{i}(t)\}_{i \in [K]}, t=1$.}
		\textsc{Step 1 - The Most Popular Arm Sample Collection (MPASC)}\\
		\While{there is no arm been pulled $\mathsf{N}$ times}{
			Agent $p_{t}$ is recommended with arm $i = \argmaxE_{j \in [K]} \bE[\mu(x_{t}, j)]$ and receives reward $y_{t,i}$.\\
			The platform updates pulls and rewards: $N_{i}(t) \leftarrow N_{i}(t-1) + 1$, $S_{i} \leftarrow S_{i} \cup (x_{t}, y_{t, i})$.\\
			If $N_{i}(t) = \mathsf{N}$, add $i$ to $B_{t}$. $t \leftarrow t + 1$. STEP 1 stopped.\\
			Update $t \leftarrow t + 1$.
		}
		\textsc{Step 2 - Rest Arm Sample Collection (RASC)}\\
		\While{there exists an arm $i$ such that the number of pulled $N_{i}(t)$ has not reached $\mathsf{N}$}{
			Samples $q_{t} \sim \text{Ber}(1/L)$.\\
			\eIf{$q_{t} = 1$}{
				$p_{t}$ is recommended to explore with the arm $\widetilde{a}_{t}$ based on Eq.\ref{eq: 1-prom} and receives $y_{t, \widetilde{a}_{t}}$.\\
				Updates $N_{\widetilde{a}_{t}}(t) \leftarrow N_{\widetilde{a}_{t}}(t-1) + 1$ and dataset $S_{\widetilde{a}_{t}} \leftarrow S_{\widetilde{a}_{t}} \cup (x_{t}, y_{t, \widetilde{a}_{t}})$.\\
				If $N_{\widetilde{a}_{t}}(t) = \mathsf{N}$, add $\widetilde{a}_{t}$ to $B_{t}$.
			}{
				$p_{t}$ is recommended to exploit with the arm $a^{*}_{t}$ based on Eq.\ref{eq: 1-org} and receives $y_{t, a^{*}_{t}}$.\\
			}
			Update $t \leftarrow t + 1$.
		}
		\caption{\texttt{Cold Start Stage}}
		\label{algo: cold}
	\end{algorithm}

	\begin{algorithm}[t]
		\setstretch{1}
		\SetAlgoLined
		\DontPrintSemicolon
		\SetAlgoLined
		\SetKwInOut{Input}{Input}
		\SetKwInOut{Output}{Output}
		\Input{$S$, epochs $m_{0}, m_{1}$, function class $\cF$, learning algorithm $\texttt{Off}_{\cF}$, confidence level $\delta$.} 
		\For{epoch $m \in [m_{0}, m_{1}]$}{
			Set $\gamma_{m} = 4 \sqrt{K/\cE_{\cF, \delta}(|\mathcal{T}_{m-1}|)}$.\\
			Feed $m-1$ epoch's data $W_{\mathcal{T}_{m-1}}$ 
			into the \texttt{OffPos} and get $\{\wh{\beta}_{m,i}\}_{i\in [K]}$.\\
			\For{$t \in \mathcal{T}_{m}$}{
				Agent $p_{t}$ arrives with covariate $x_{t}$.
				Compute estimate $\whm_{m(t)}(x_{t},i) = x_{t}\Tra \wh{\beta}_{m, i}, \forall i \in [K]$.\\
				Obtain the optimal arm $b_{t} = \argmaxE_{i \in [K]}\whm_{m(t)}(x_{t},i)$.\\
				Sample $a_{t} \sim p_{m}(i)$ according to Eq.\ref{eq: sampling distribution} and observe reward $y_{t}(a_{t})$.
			} 
		}
		\caption{\texttt{Exploitation stage}}
		\label{algo: steady}
	\end{algorithm}
	
	\section{\dbic{} Property}
	\label{app-sec: proof of bic}
	\subsection{Proof of Theorem \ref{thm1: sampling requirements} - Cold Start Stage}
	\begin{proof}
		To guarantee the \dbic{} property for the cold start of \recon{}, it suffices to have a lower bound on parameter $L$ to avoid too many samples wasted in the cold start stage.
		
		The cold start stage can be split into $K$ phases and each phase last  $L\mathsf{N}$ round in expectation based on the algorithm design except the most popular arm. Although the first phase (most popular arm) last unknown rounds, it usually lasts a pretty short period. So in the following analysis, we ignore the \dbic property in the initial sample collection stage (MPASC stage). 
		
		Due to the design of cold start stage,
		agents are unaware which phase they belong to, they are only aware they have $1/L$ probability to be chosen in the cold start stage. 
		We first argue that for each agent $p_{t}$ in phase $l \in [2,K]$ (except the MPASC), she has no incentive not to follow the recommended arm. 
		
		\textbf{(1).} If agent $p_{t}$ is recommended with the arm $j \neq \widetilde{a}_{t}$, then she knows since this arm $j$ is the organic arm $a_{t}^{*}$ and is not the promoted arm; so by the definition of the organic arm, it is \dbic{} for the agent to follow it. 
		
		\textbf{(2).} If agent $p_{t}$ is recommended with the arm $\widetilde{a}_{t}$ and does not want to deviate to some other arms $j \neq \widetilde{a}_{t}$. That is to say, we need to prove that when the platform recommends arm $i$, the agent $p_{t}$ has no incentive to deviate the current recommendation arm $i$ to other arm $j$ in expected reward. From the user's perspective, the platform needs to demonstrate this, 
		\begin{equation}
			\label{eq: final goal}
			\begin{aligned}
				\bE[\mu(x_{t}, i) - \mu(x_{t}, j)|\ipt = i]\pr(\ipt = i) \geq 0.  
			\end{aligned}
		\end{equation}
		Denote the time dependent posterior gap $G_{tij}:=\bE[\mu(x_{t},i) - \mu(x_{t},j)|S_{ B_{t}}]$ where arm $i$ is the recommended arm by $\recon$ and $j \neq i$, and the corresponding minimal posterior gap $G_{t}(i) = \min_{j\neq i}G_{tij}$. The $G_{tij}$ represents the posterior gap between arm $i$ and arm $j$ at time $t$.
		The $G_{t}(i)$ represents the minimal gap given the current accumulative samples which is composed of two cases: (1) $G_{t}(i) > 0$, that means arm $i$ is the posterior best arm. (2) $G_{t}(i) \leq 0$, that means arm $i$ is not the posterior best arm.
		
		To satisfy the \edbic property, we need the Eq.\ref{eq: final goal} satisfied.
		By the law of iterated expectations $E[X] = E[E[X|Y]]$, we have 
		\begin{equation}
			\begin{aligned}
				&\bE[\mu(x_{t},i) - \mu(x_{t},j)| \ipt = i]\pr(\ipt = i)\\ 
				&= \bE[\bE[\mu(x_{t},i) - \mu(x_{t},j)|S_{ B_{t}}]|\ipt = i]\pr(\ipt = i)\\
				&=\bE[G_{tij}|\ipt = i]\pr(\ipt = i) >-\epsilon.
			\end{aligned}
		\end{equation}
		Define two events $Q_{t,1} = \{q_{t} = 1\}$ and $Q_{t,0} = \{q_{t} = 0\}$, representing agent $p_{t}$ is recommended with the promoted arm or organic arm respectively. Thus, there are two disjoint events under which agent $p_{t}$ is recommended arm $i$, either $E_{t1} = \{G_{t}(i) > 0\}$ or $E_{t2} =  \{G_{t}(i) \leq 0\} = \{G_{t}(i) \leq 0 \text{ and } p_{t} \in Q_{t,1}\}$. For notation simplicity, we denote $E_{1} = E_{t1}$ and $E_{2} = E_{t2}$.
		The reason $\{G_{t}(i) \leq 0\} = \{G_{t}(i) \leq 0 \text{ and } p_{t} \in Q_{t, 1}\}$ is because $G_{t}(i) \leq 0$ happens only when $p_{t} \in Q_{t, 1}$. So the above equation is equivalent to prove
		\begin{equation}
			\begin{aligned}
				\bE[G_{tij}|I_{p_{t}} = i] \pr(I_{t} = i)
				= \bE[G_{tij}|E_{1}]\pr(E_{1}) + \bE[G_{tij}|E_{2}]\pr(E_{2}) > 0.
			\end{aligned}
		\end{equation}
		We observe that $\pr(E_{2}) = \pr(p_{t} \in Q_{t,1} | G_{t}(i) \leq 0)\pr(G_{t}(i) \leq 0) = \pr(G_{t}(i) \leq 0)/L_{t}$, where $q_{t} \sim \text{Ber}(1/L_{t})$ and is time dependent and independent of other random variables.
		Since the event $p_{t} \in Q$ is independent of $G_{tij}$ and agent $p_{t}$ in $Q_{t,1}$ is randomly selected according to the Bernoulli distribution with expectation $1/L_{t}$. Therefore, we get:
		\begin{equation}
			\begin{aligned}
				\bE&[\mu(x_{t},i) - \mu(x_{t},j)|\ipt = i]\pr(\ipt = i) \\
				&=\bE[G_{tij}|E_{1}]\pr(E_{1}) + \bE[G_{tij}|E_{2}]\pr(E_{2})\\
				&=\bE[G_{tij}|G_{t}(i) > 0]\pr(G_{t}(i) > 0) + \bE[G_{tij}| G_{t}(i)\leq0 \text{ and } p_{t}\in Q_{t,1}] \frac{1}{L_{t}}\pr(G_{t}(i) \leq 0)\\
				&=\bE[G_{tij}|G_{t}(i) > 0]\pr(G_{t}(i) > 0) + \frac{1}{L_{t}} \bE[G_{tij}|G_{t}(i) \leq 0]\pr(G_{t}(i) \leq 0),
			\end{aligned}
		\end{equation}
		where the second equation holds by the independent property.
		By the fact that $\bE[G_{tij}] = \bE[G_{tij}|G_{t}(i)\leq 0] \pr(G_{t}(i) \leq 0) + \bE[G_{tij}|G_{t}(i)> 0] \pr(G_{t}(i) > 0)$, so the above equation becomes
		\begin{equation}
			\begin{aligned}
				&=\bE[G_{tij}|G_{t}(i) > 0]\pr(G_{t}(i) > 0) + \frac{1}{L_{t}}\bigg(\bE[G_{tij}] - \bE[G_{tij}|G_{t}(i)> 0] \pr(G_{t}(i) > 0)\bigg)\\
				&= (1-\frac{1}{L_{t}})\bE[G_{tij}|G_{t}(i) > 0]\pr(G_{t}(i) > 0) + \frac{1}{L_{t}}\bE[G_{tij}].
			\end{aligned}
		\end{equation}
		We know $\bE[G_{tij}] = \bE[\bE[\mu(x_{t},i) - \mu(x_{t},j)|S_{ B_{t}}]] = \bE[\mu(x_{t},i) - \mu(x_{t},j)] =x_{t}\Tra\beta_{i,0}  -  x_{t}\Tra \beta_{j,0} = \mu_{0}(t,i) - \mu_{0}(t,j)$. Thus, the above equation will be 
		\begin{equation}
			= (1-\frac{1}{L_{t}})\bE[G_{tij}|G_{t}(i) > 0]\pr(G_{t}(i) > 0) + \frac{1}{L_{t}} (\mu_{0}(t,i) - \mu_{0}(t,j)).
		\end{equation} 
		
		To make the process be \edbic, we need $\bE[\mu(x_{t},i) - \mu(x_{t},j)|\ipt = i]\pr(\ipt = i) > -\epsilon$. Since we know $G_{tij} > G_{t}(i)$ by definition, so we have $\bE[G_{tij}|G_{t}(i) > 0] > \bE[G_{t}(i)|G_{t}(i) > 0]$. To combine them all, we get
		\begin{equation}
			\begin{aligned}
				&\geq (1-\frac{1}{L_{t}})\bE[G_{t}(i)|G_{t}(i) > 0]\pr(G_{t}(i) > 0) + \frac{1}{L_{t}}(\mu_{0}(t,i) - \mu_{0}(t,j)) \geq -\epsilon.
			\end{aligned}
		\end{equation}
		Thus, $\forall i,j \in [K]$, it suffices to pick $L_{t}$ at time $t$ such that:
		\begin{equation}
			\begin{aligned}
				L_{t} &\geq 1 - \frac{\mu_{0}(t,i) - \mu_{0}(t,j) + \epsilon}{\bE[G_{t}(i)|G_{t}(i) > 0]\pr(G_{t}(i) > 0) + \epsilon}\\
				&=1 + \frac{\mu_{0}(t,j) - \mu_{0}(t,i) - \epsilon}{\bE[G_{t}(i)|G_{t}(i) > 0]\pr(G_{t}(i) > 0)+ \epsilon},
			\end{aligned}
		\end{equation}
		Thus we need,
		\begin{equation}
			\begin{aligned}
				L_{t} \geq 1 + \frac{\overline{\Delta}^{0}_{t} - \epsilon}{\tau_{\cP_{0,t}} \rho_{\cP_{0,t}}+ \epsilon},
			\end{aligned}
		\end{equation}
		where $\overline{\Delta}^{0}_{t} = \max_{i\neq j} [\mu_{0}(t,j) - \mu_{0}(t,i)]$, and $\bE[G_{t}(i)|G_{t}(i) > 0]\pr(G_{t}(i) > 0) \geq \tau_{\cP_{0,t}} \rho_{\cP_{0,t}}$.
		
		By the design of the cold start stage, we know that arm $i$ is the platform recommended arm and arm $j$ is the arm agent $p_{t}$ potentially wants to deviate to. Therefore, based on the prior knwoledge, $\mu_{0}(t,j) \geq \mu_{0}(t,i)$. Since this $L_{t}$ is time dependent, to get a time uniform $L$ to let all agents have the \dbic{} property, we need 
		\begin{equation}
			\begin{aligned}
				\max_{t} L_{t} = 1 + \frac{\overline{\Delta}^{0} - \epsilon}{\tau_{\cP_{0}} \rho_{\cP_{0}} + \epsilon},
			\end{aligned}
		\end{equation}
		where $\overline{\Delta}^{0} = \max_{t} \overline{\Delta}^{0}_{t}$ and we know $\overline{\Delta}^{0} \leq 1$, and $\tau_{\cP_{0}} = \min_{t} \tau_{\cP_{0,t}}$, $\rho_{\cP_{0}} = \min_{t} \rho_{\cP_{0,t}}$.
		So we have $L$ needs to be at least
		\begin{equation}
			\begin{aligned}
				L \geq 1 + \frac{1 - \epsilon}{\tau_{\cP_{0}} \rho_{\cP_{0}} + \epsilon}.
			\end{aligned}
		\end{equation}
		By selecting the time uniform $L$, we have the \dbic{} property.
		
	\end{proof}

	\subsection{Proof of Theorem \ref{thm1: sampling requirements} - Exploitation stage}
	
	\begin{proof}
		To satisfy the \dbic property, which is any agent $p_{t}$ who is recommended arm $i$ ($\ipt = i$) does not to want to switch to some other arm $j$ in expectation. Besides, we assert that when the platform satisfies the \dbic property at the cold start stage and the \dbic property also holds when we have a minimum requirement of $\mathsf{N}$, then in the following epochs, the \recon{} algorithm will automatically satisfy the \dbic in the Exploitation stage.
		More formally, we need that
		\begin{equation}
			\begin{aligned}
				\bE[\mu(x_{t},i) - \mu(x_{t},j)|\ipt = i]\pr(\ipt = i)\geq -\epsilon/K, \forall t \in \text{Exploitation stage.}
			\end{aligned}
		\end{equation}
		Similarly to the construction of $L$ in the previous analysis, we denote the time dependent posterior gap $G_{tij}:=\bE[\mu(x_{t},i) - \mu(x_{t},j)|S_{*}]$ where arm $i$ is the recommended arm by $\recon$ and $j \neq i$, where $S_{*}$ is the dataset collected in the the cold start stage.
		The corresponding minimal posterior gap $G_{t}(i) = \min_{j\neq i}G_{tij}$. The $G_{tij}$ represents the posterior gap between arm $i$ and arm $j$ at time $t$.
		The $G_{t}(i)$ represents the minimal gap given the current accumulative samples which is composed of two cases: (1) If $G_{t}(i) > 0$, that means arm $i$ is the  best arm in terms of the posterior. (2) If $G_{t}(i) \leq 0$, that means arm $i$ is not the posterior best arm.
		Recall the definition of $G_{t}(i)$, it suffices to show that 
		\begin{equation}
			\begin{aligned}
				\bE[G_{t}(i)|\ipt = i] &=
				\bE\bigg[ \bE[\mu(x_{t},i) - \max_{j\in [K]/i}\mu(x_{t},j)|S_{*}]|\ipt = i\bigg] \\
				&=
				\bE\bigg[ \bE[\mu(x_{t},i)|S_{*}] - \max_{j\in [K]/i}\bE[\mu(x_{t},j)|S_{*}]|\ipt = i\bigg].
			\end{aligned}
		\end{equation}
		Let $S_{*}$ be the data set collected by the algorithm by the beginning of Exploitation stage. The reward gap can be decomposed as  
		\begin{equation}
			\begin{aligned}
				&\bE\bigg[ \bE[\mu(x_{t},i)|S_{*}] - \max_{j\in [K]/i}\bE[\mu(x_{t},j)|S_{*}]|\ipt = i\bigg]\pr(\ipt = i)\\ 
				=&
				\underbrace{\pr(i = b_{t})\bE\bigg[\bE[\mu(x_{t},i)|S_{*}] - \max_{j\in [K]/i}\bE[\mu(x_{t},j)|S_{*}]|i = b_{t}\bigg]}_{\text{Part I Reward Gap}} \\
				& + 
				\underbrace{\pr(i \neq b_{t})\bE\bigg[\bE[\mu(x_{t},i)|S_{*}] - \max_{j\in [K]/i}\bE[\mu(x_{t},j)|S_{*}]|i \neq b_{t}\bigg]}_{\text{Part II Reward Gap}},
			\end{aligned}
		\end{equation}
		where $b_{t}$ is the highest posterior mean arm $b_{t} = \argmaxE_{j \in [K]}\bE[\mu(x_{t},j)|S_{*}]$.

		\paragraph{Part I Reward Gap:} The platform selects the highest posterior mean reward arm $b_{t} =  \argmaxE_{j \in [K]}\bE[\mu(x_{t},j)|S_{*}] = \argmaxE_{j\in [K]}\whm_{m}(x_{t}, j)$ according to the Algorithm \ref{algo: steady}'s design with probability $\pr(\ipt = b_{t}) = 1 - \sum_{i \neq b_{t}}\frac{1}{K + \gamma_{m} u_{i}}$, where $u_{i} = \whm_{m}(x_{t}, b_{t}) - \whm_{m}(x_{t}, i)$. 
		Denote $G_{t}(b_{t})$ as the minimal optimal posterior gap  $\bE[\mu(x_{t},b_{t})|S_{*}] - \max_{j\in [K]/b_{t}}\bE[\mu(x_{t},j)|S_{*}]$, which is the gap between the highest posterior mean utility and second highest posterior mean utility.
		By the sampling design of $\recon$ and $\gamma_{m} > 0, \forall m \geq m_{0}$, 
		we get that  $p(b_{t}) \geq 1/K$, where $p(b_{t}) $ is the probability of selecting the highest posterior mean arm.
		\begin{equation}
			\begin{aligned}
				\text{Part I Reward Gap} \geq \frac{1}{K}G_{t}(b_{t}).
			\end{aligned}
		\end{equation}

		\paragraph{Part II Reward Gap:}
		According to the sampling structure, it has the probability that the platform recommended arm is not $b_{t}$, we have 
		\begin{equation}
			\begin{aligned}
				\text{Part II Reward Gap} = &\pr(i \neq b_{t})\bE\bigg[\bE[\mu(x_{t},i)|S_{*}] - \max_{j \neq i}\bE[\mu(x_{t},j)|S_{*}]|i \neq b_{t}\bigg]\\
				=& \sum_{i \neq b_{t}}p_{t} (i)\bigg[\bE[\mu(x_{t},i)|S_{*}] - \bE[\mu(x_{t}, b_{t})|S_{*}]\bigg]\\
				=&
				- \sum_{i \neq b_{t}}p_{t}(i)\bigg[ \bE[\mu(x_{t}, b_{t})|S_{*}] - \bE[\mu(x_{t},i)|S_{*}]\bigg]\\
				=& - r_{t}
			\end{aligned}
		\end{equation}
		where $r_{t} = \sum_{i \neq b_{t}}p_{t}(i)( \bE[\mu(x_{t}, b_{t})|S_{*}] - \bE[\mu(x_{t},i)|S_{*}])$.
		Therefore, to achieve \dbic property, we can lower bound the following term,
		\begin{equation}
			\begin{aligned}
				\bE[\mu(x_{t},i) - \mu(x_{t},j)|\ipt = i]\pr(\ipt = i) \geq \frac{G_{t}(b_{t})}{K} - r_{t} \geq \frac{G_{t}(b_{t})}{K} - \frac{K}{\gamma_{m}}
			\end{aligned}
		\end{equation}
		The $G_{t}(b_{t})/K$ is each step's expected gain and $r_{t}$ is each step's expected loss, and by Lemma \ref{lem: upper bound of empirical regret}, we have $r_{t} \leq K/\gamma_{m}$ used in the last inequality. In order to satisfy the \dbic{} property, we need 
		\begin{equation}
			\begin{aligned}
				\frac{G_{t}(b_{t})}{K} - \frac{K}{\gamma_{m}} > -\frac{\epsilon}{K}
			\end{aligned}
		\end{equation}
		which is equivalent to need 
		\begin{equation}
			\gamma_{m}(\epsilon) \geq \frac{K^{2}}{G_{t}(b_{t}) + \epsilon}.
		\end{equation}
		That is, in order to satisfy the \dbic{} property, we need the spread parameter at each epoch $m (\geq m_{0})$ is at least greater than $\gamma_{m}(\epsilon)$. 
		Here $\tau_{m} = 2^{m}$ is the time step where epoch $m$ stops. $\cE_{\cF, \delta}(m-1)$ represents the prediction error in the functional class $\cF$ when using training data collected in epoch $m-1$ that is in the time interval $(\tau_{m-2}, \tau_{m-1}]$. 
		Based on the offline learning's result from Definition \ref{def:prediction error in expectation} given the epoch $m$, we have $\gamma_{m_{0}} = c \sqrt{K/\cE_{\cF, \delta}(\tau_{m_{0}-1}- \tau_{m_{0}-2})}$. 
		So we can derive the requirement of the minimum prediction error at epoch $m_{0}$. We need
		\begin{equation}
			\begin{aligned}
				\gamma_{m_{0}} &\geq \gamma_{m}(\epsilon)\\
				c\sqrt{\frac{K}{\cE_{\cF, \frac{\delta}{2K^{2}}}(\tau_{m_{0}-1}-\tau_{m_{0}-2})}} 
				&\geq
				\frac{K^{2}}{G_{t}(b_{t})+\epsilon}\\
				\cE_{\cF, \frac{\delta}{2K^{2}}}(\tau_{m_{0}-1}-\tau_{m_{0}-2}) 
				&\leq 
				\frac{c^{2}(G_{t}(b_{t})+\epsilon)^{2}}{K^{3}}\\
				\frac{c_{3}\sigma^{2}d}{\phi_{0}n} 
				&\leq
				\frac{c^{2}(G_{t}(b_{t})+\epsilon)^{2}}{K^{3}}\\
				n&\geq 
				\frac{(\sigma^{2}d+1)K^{3}}{\phi_{0}(G_{t}(b_{t})+\epsilon)^{2}}
			\end{aligned}
		\end{equation}
		where $\cE_{\cF, \frac{\delta}{2K^{2}}}(\tau_{m_{0}-1} - \tau_{m_{0}-2})$ is the prediction error with training sample size with $n = \tau_{m-1} - \tau_{m-2}$, which bounds the squared $L_{2}$ distance between $\whm$ and $\mu$ on the test data sampled following the same data generation process as the training data. For the forth inequality, based on Corollary \ref{cor:ridge order}, we need the minimum sample size as $\mathsf{N}(\epsilon) = \frac{(\sigma^{2}d+1)K^{3}}{\phi_{0}(G_{t}(b_{t})+\epsilon)^{2}}$. 
		We have $\tau_{m} = 2^{m}$, $\tau_{m-1} - \tau_{m-2} = 2^{m-1} - 2^{m-2} = 2^{m-2}$. By the minimum sample size requirement for the cold start stage's $\mathsf{N}(\epsilon)$ for each arm, we know in Exploitation stage, the starting epoch $m_{0}$ should be   
		\begin{equation}
			\begin{aligned}
				\mathsf{N}&\leq \tau_{m-1}- \tau_{m-2},\\
				\log_{2} \mathsf{N} &\leq  m - 2, \\
				m &\geq m_{0} =  \lceil 2 + \log_{2} \frac{(\sigma^{2}d+1)K^{3}}{\phi_{0}(\tau_{\cP_{*}}+\epsilon)^{2}} \rceil.
			\end{aligned}
		\end{equation}
		where $\tau_{\cP_{*}}$ is the minimum posterior mean gap based on Assumption \ref{ass:prior_posterior}.
	\end{proof}
	
	\section{Prediction Error of Ridge Regression with Random Design}
	From \cite{mourtada2022elementary}, we have the following lemmas of the prediction error of ridge regression with random design.
	\begin{lem}
		\label{lem:excess risk}
		Assume the noise has gaussian distribution, then the excess risk bound is
		\begin{equation}
			\begin{aligned}
				\bE\bigg[\norm{\whb - \beta}_{\Sigma}^{2}\bigg]\leq \bigg(1 + \frac{R^{2}}{\lambda n}\bigg)^{2} \inf_{\beta \in \bR^{d}}\{L(\beta) + \lambda \norm{\beta}^{2} - L(\beta^{*})\} + \bigg(1 + \frac{R^{2}}{\lambda n}\bigg)\frac{\sigma^{2}\text{Tr}[(\Sigma + \lambda)^{-1}\Sigma]}{n}
			\end{aligned}
		\end{equation}
		where $\norm{X}_{2} \leq R$ and risk $L(\beta) = \bE[(Y - \langle \beta, X\rangle)^{2}]$.
	\end{lem}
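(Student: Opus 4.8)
The plan is to read the left-hand side as the excess risk of the ridge estimator, split it exactly into a regularization-bias term and a noise-variance term, and reduce each to its population analogue at the cost of a single multiplicative slack $s:=R^2/(\lambda n)$ coming from a comparison between the empirical and population regularized covariances. Throughout I write $\widehat{\Sigma}=\frac1n\sum_i X_iX_i\Tra$, $\Sigma_\lambda=\Sigma+\lambda\M{I}_d$, $\widehat{\Sigma}_\lambda=\widehat{\Sigma}+\lambda\M{I}_d$, and let $\beta^*=\argminE_\beta L(\beta)$ be the population risk minimizer, so that $L(\whb)-L(\beta^*)=\norm{\whb-\beta^*}_\Sigma^2$ is exactly the quantity to bound. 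Since the noise is centered Gaussian, $\beta^*$ is the true parameter and the infimum term reduces to the pure regularization bias $\lambda(\beta^*)\Tra\Sigma\Sigma_\lambda\Inv\beta^*$, which one checks by diagonalizing $\Sigma$.

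First I would record the exact algebraic identity. Writing $Y_i=\langle\beta^*,X_i\rangle+\epsilon_i$ and $\widehat g=\frac1n\sum_i X_i\epsilon_i$, the closed form $\whb=\widehat{\Sigma}_\lambda\Inv\,\frac1n\sum_i X_iY_i$ together with $\widehat{\Sigma}_\lambda\Inv\widehat{\Sigma}=\M{I}_d-\lambda\widehat{\Sigma}_\lambda\Inv$ gives $\whb-\beta^*=-\lambda\widehat{\Sigma}_\lambda\Inv\beta^*+\widehat{\Sigma}_\lambda\Inv\widehat g$. Conditioning on the design $\{X_i\}$, the matrix $\widehat{\Sigma}_\lambda$ is fixed and $\bE[\widehat g\mid X]=0$, so the cross term vanishes and
\begin{equation*}
\bE\big[\norm{\whb-\beta^*}_\Sigma^2\big]=\lambda^2\,\bE\big[(\beta^*)\Tra\widehat{\Sigma}_\lambda\Inv\Sigma\widehat{\Sigma}_\lambda\Inv\beta^*\big]+\bE\big[\mathrm{Tr}(\Sigma\,\widehat{\Sigma}_\lambda\Inv\,\bE[\widehat g\widehat g\Tra\mid X]\,\widehat{\Sigma}_\lambda\Inv)\big].
\end{equation*}
For the variance term I would use $\bE[\widehat g\widehat g\Tra\mid X]=\frac{\sigma^2}{n}\widehat{\Sigma}\preceq\frac{\sigma^2}{n}\widehat{\Sigma}_\lambda$, which cancels one inverse and leaves $\frac{\sigma^2}{n}\mathrm{Tr}(\Sigma\,\bE[\widehat{\Sigma}_\lambda\Inv])$; the first-order comparison $\bE[\widehat{\Sigma}_\lambda\Inv]\preceq(1+s)\Sigma_\lambda\Inv$ then yields exactly $(1+s)\frac{\sigma^2\mathrm{Tr}[(\Sigma+\lambda)\Inv\Sigma]}{n}$. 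For the bias term the two inverse factors cannot be cancelled, so I would invoke the second-order comparison $\bE[\widehat{\Sigma}_\lambda\Inv\Sigma\widehat{\Sigma}_\lambda\Inv]\preceq(1+s)^2\Sigma_\lambda\Inv\Sigma\Sigma_\lambda\Inv$, after which $\lambda^2(\beta^*)\Tra\Sigma_\lambda\Inv\Sigma\Sigma_\lambda\Inv\beta^*\le\lambda(\beta^*)\Tra\Sigma\Sigma_\lambda\Inv\beta^*$ (again by diagonalizing, since $\lambda/(s_j+\lambda)\le1$ for each eigenvalue $s_j$ of $\Sigma$) bounds it by $(1+s)^2$ times the infimum term. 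Summing the two contributions gives the claim.

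The main obstacle is establishing the two covariance comparisons with the sharp constant $1+s=1+R^2/(\lambda n)$; everything else is bookkeeping. For these I would avoid matrix-concentration machinery and use the elementary leave-one-out / Sherman--Morrison identity at the heart of \cite{mourtada2022elementary}: writing $\widehat{\Sigma}_\lambda=\widehat{\Sigma}_\lambda^{(-i)}+\frac1n X_iX_i\Tra$ and applying the rank-one inverse update produces a self-bounding recursion in the sample size whose per-sample defect is controlled by $\frac1n X_i\Tra\widehat{\Sigma}_\lambda\Inv X_i\le R^2/(\lambda n)$, which integrates to the first-order bound; the second-order bound follows by iterating the same device (or by a Cauchy--Schwarz step in the $\Sigma_\lambda$ geometry), accounting for the square. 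The only remaining delicacy is that the cross-term cancellation above relies on $\bE[\widehat g\mid X]=0$, i.e.\ on the noise being conditionally centered; under the stated Gaussian noise this is immediate, whereas in a fully misspecified reading one would instead decouple the noise from $\widehat{\Sigma}_\lambda$ by the same leave-one-out argument before taking expectations.
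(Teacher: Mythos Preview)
The paper does not prove this lemma at all: the line introducing it reads ``From \cite{mourtada2022elementary}, we have the following lemmas of the prediction error of ridge regression with random design,'' and the statement is imported verbatim from that reference without argument. There is therefore no in-paper proof to compare your proposal against.

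Your sketch is precisely the route taken in the cited source: the exact bias--variance identity $\whb-\beta^*=-\lambda\widehat{\Sigma}_\lambda\Inv\beta^*+\widehat{\Sigma}_\lambda\Inv\widehat g$, the cancellation $\widehat{\Sigma}\preceq\widehat{\Sigma}_\lambda$ collapsing one inverse in the variance term, and the exchangeability / Sherman--Morrison recursion to trade each remaining $\widehat{\Sigma}_\lambda\Inv$ for $\Sigma_\lambda\Inv$ at multiplicative cost $1+R^2/(\lambda n)$, producing the advertised factors $(1+s)$ and $(1+s)^2$. The only place your writeup is slightly loose is the second-order Loewner claim $\bE[\widehat{\Sigma}_\lambda\Inv\Sigma\widehat{\Sigma}_\lambda\Inv]\preceq(1+s)^2\Sigma_\lambda\Inv\Sigma\Sigma_\lambda\Inv$: the leave-one-out device most directly controls the version with $\Sigma_\lambda$ in the middle (or the corresponding scalar quadratic form), and getting $\Sigma$ back on both sides is exactly where the ``Cauchy--Schwarz step in the $\Sigma_\lambda$ geometry'' you mention has to do real work. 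This is a detail of execution, not a gap in the plan.
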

	\begin{lem} 
		\label{lem:ridge risk}
		For every $\lambda > 0$, we have
		\begin{equation}
			\begin{aligned}
				\inf_{\beta \in \bR^{d}}\{L(\beta) + \lambda \norm{\beta}^{2} - L(\beta^{*})\} = \lambda \norm{(\Sigma + \lambda)^{-1/2}\Sigma^{1/2}\beta^{*}}^{2} \leq \lambda \norm{\beta^{*}}^{2}.
			\end{aligned}
		\end{equation}
	\end{lem}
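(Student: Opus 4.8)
The plan is to exploit the exact quadratic structure of the excess risk, which turns the stated infimum into a strongly convex ridge problem solvable in closed form. First I would expand $L(\beta) = \bE[(Y - \langle\beta, X\rangle)^2] = \bE[Y^2] - 2\beta\Tra\bE[YX] + \beta\Tra\Sigma\beta$ with $\Sigma = \bE[XX\Tra]$, and characterize the population minimizer $\beta^*$ through its first-order condition $\Sigma\beta^* = \bE[YX]$, so that $\beta^* = \Sigma\Inv\bE[YX]$ (well-defined since Assumption~\ref{ass: minimum eigenvalue} gives $\lambda_{\min}(\Sigma) \geq \phi_0 > 0$). Substituting $\bE[YX] = \Sigma\beta^*$ back into the expansion, the linear and constant terms cancel against $L(\beta^*)$ and yield the key identity $L(\beta) - L(\beta^*) = (\beta - \beta^*)\Tra\Sigma(\beta - \beta^*) = \norm{\beta - \beta^*}_\Sigma^2$.

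With this identity the target becomes $\inf_{\beta\in\bR^d}\{\norm{\beta - \beta^*}_\Sigma^2 + \lambda\norm{\beta}^2\}$, a $\lambda$-strongly convex quadratic in $\beta$. Setting its gradient $2(\Sigma + \lambda\M{I})\beta - 2\Sigma\beta^*$ to zero gives the unique minimizer, the population ridge solution $\whb = (\Sigma + \lambda\M{I})\Inv\Sigma\beta^*$. Plugging $\whb$ back in, I would use the normal equation $(\Sigma + \lambda\M{I})\whb = \Sigma\beta^*$ to cancel the quadratic term, collapsing the objective to $(\beta^* - \whb)\Tra\Sigma\beta^*$; and since $\beta^* - \whb = \lambda(\Sigma + \lambda\M{I})\Inv\beta^*$, this equals $\lambda\,(\beta^*)\Tra(\Sigma + \lambda\M{I})\Inv\Sigma\,\beta^*$.

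To recognize this as the stated expression, I would invoke that $\Sigma$ and $(\Sigma + \lambda\M{I})\Inv$ are simultaneously diagonalizable and therefore commute, so that $\Sigma^{1/2}(\Sigma + \lambda\M{I})\Inv\Sigma^{1/2} = (\Sigma + \lambda\M{I})\Inv\Sigma$; this rewrites the value as $\lambda\norm{(\Sigma + \lambda\M{I})^{-1/2}\Sigma^{1/2}\beta^*}^2$, establishing the equality. For the final inequality, the matrix $(\Sigma + \lambda\M{I})\Inv\Sigma$ has eigenvalues $\sigma_i/(\sigma_i + \lambda) \in [0,1)$ (with $\sigma_i$ the eigenvalues of $\Sigma$), so it satisfies $(\Sigma + \lambda\M{I})\Inv\Sigma \preceq \M{I}_d$, whence $(\beta^*)\Tra(\Sigma + \lambda\M{I})\Inv\Sigma\,\beta^* \leq \norm{\beta^*}^2$.

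The derivation is essentially routine once the quadratic identity is in place; the only steps needing care are the cancellation via the normal equation and the commutativity used to introduce the symmetric square roots. The mildest obstacle is simply the matrix bookkeeping, avoiding sign errors. I would also remark that the inequality half of the claim holds verbatim even when $\Sigma$ is singular (via a pseudoinverse or limiting argument), although the minimum-eigenvalue assumption lets me run the clean computation directly.
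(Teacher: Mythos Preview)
Your argument is correct and is the standard direct computation: reduce $L(\beta)-L(\beta^{*})$ to the quadratic $\norm{\beta-\beta^{*}}_{\Sigma}^{2}$, solve the resulting ridge problem in closed form, and read off the value via the normal equation and commutativity of $\Sigma$ with $(\Sigma+\lambda\M{I})\Inv$. The paper does not actually supply its own proof of this lemma; it is quoted verbatim from \cite{mourtada2022elementary}, so there is nothing to compare against beyond noting that your derivation is exactly the intended one.
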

	
	\begin{cor}
		\label{cor:ridge order}
		The prediction error can be upper bounded bounded by
		\begin{equation*}
			\begin{aligned}
				\bE\bigg[ (\whb\Tra X_{t} - \beta\Tra X_{t})^{2}
				\bigg] \leq \bE\bigg[\norm{\whb - \beta}_{\Sigma}^{2}\bigg]
				\bE\bigg[\norm{X_{t}}_{\Sigma^{-1}}^{2}\bigg] \leq \frac{R^{2}}{\lambda_{\min}(\Sigma)}
				\bE\bigg[\norm{\whb - \beta}_{\Sigma}^{2}\bigg] \leq \frac{c_{3}\sigma^{2}d}{\phi_{0}n}.   
			\end{aligned}
		\end{equation*}
	\end{cor}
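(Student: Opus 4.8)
The plan is to establish the stated bound as a chain of three inequalities, one for each $\leq$ sign, each justified by a separate elementary argument. The left-hand quantity $\bE[(\whb\Tra X_{t} - \beta\Tra X_{t})^{2}]$ is the out-of-sample prediction error of the ridge estimator $\whb$ (fit on the $n$ in-epoch samples) evaluated at a fresh test covariate $X_{t}$ drawn independently from $\cP_{X}$, matching the prediction error of Definition~\ref{def:prediction error in expectation} specialized to the linear model. The strategy is: (i) a Cauchy--Schwarz step in the $\Sigma$-geometry, combined with independence of $X_{t}$ from the training data, to factor the error into an estimation-error factor and a test-point factor; (ii) a spectral bound plus the covariate norm bound for the test-point factor; and (iii) Lemma~\ref{lem:excess risk} and Lemma~\ref{lem:ridge risk} to control the estimation-error factor at the parametric $\sigma^{2}d/n$ rate.

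For the first inequality, I would write $(\whb - \beta)\Tra X_{t} = \langle \Sigma^{1/2}(\whb - \beta),\ \Sigma^{-1/2}X_{t}\rangle$ and apply Cauchy--Schwarz to get the pointwise bound $((\whb-\beta)\Tra X_{t})^{2} \leq \norm{\whb - \beta}_{\Sigma}^{2}\,\norm{X_{t}}_{\Sigma^{-1}}^{2}$. Taking expectations and using that $\whb$ is measurable with respect to the training sample while $X_{t}$ is an independent test draw, the expectation of the product factors as $\bE[\norm{\whb-\beta}_{\Sigma}^{2}]\,\bE[\norm{X_{t}}_{\Sigma^{-1}}^{2}]$, which is exactly the middle expression.

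For the second inequality, I would bound the test-point factor spectrally: $\norm{X_{t}}_{\Sigma^{-1}}^{2} = X_{t}\Tra\Sigma^{-1}X_{t} \leq \lambda_{\max}(\Sigma^{-1})\,\norm{X_{t}}_{2}^{2} = \norm{X_{t}}_{2}^{2}/\lambda_{\min}(\Sigma) \leq R^{2}/\lambda_{\min}(\Sigma)$, using $\norm{X_{t}}_{2}\leq R$, and then invoke Assumption~\ref{ass: minimum eigenvalue} ($\lambda_{\min}(\Sigma)\geq\phi_{0}$) to replace $\lambda_{\min}(\Sigma)$ by $\phi_{0}$. For the third inequality, I would combine Lemma~\ref{lem:excess risk} with Lemma~\ref{lem:ridge risk} to replace the infimum by $\lambda\norm{\beta^{*}}^{2}$, and bound the variance trace by $\text{Tr}[(\Sigma+\lambda)^{-1}\Sigma]\leq d$ since each eigenvalue $\lambda_{i}/(\lambda_{i}+\lambda)\leq 1$. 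Choosing the regularization at scale $\lambda \asymp R^{2}/n$ keeps the prefactors $(1+R^{2}/(\lambda n))$ at constant order and pushes the bias term $\lambda\norm{\beta^{*}}^{2}$ down to $O(1/n)$, yielding $\bE[\norm{\whb-\beta}_{\Sigma}^{2}]\leq c\,\sigma^{2}d/n$; multiplying by the $R^{2}/\phi_{0}$ factor gives $c_{3}\sigma^{2}d/(\phi_{0}n)$.

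The main obstacle I anticipate is the bias--variance balance in the last step: the excess-risk bound of Lemma~\ref{lem:excess risk} carries an additive bias $\lambda\norm{\beta^{*}}^{2}$ that does not vanish at rate $d/n$ unless $\lambda$ is tuned, while $\lambda$ must stay large enough relative to $R^{2}/n$ to keep the multiplicative factor $(1+R^{2}/(\lambda n))$ bounded. Making these two requirements simultaneously compatible, and justifying that the residual $O(1/n)$ bias can be folded into the constant $c_{3}$ (the source of the ``$+1$'' that appears alongside $\sigma^{2}d$ in Theorem~\ref{thm1: sampling sample requirement of n and L}), is the only nonroutine point; the two Cauchy--Schwarz and spectral steps are immediate. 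A secondary point to verify is that $\Sigma=\bE[XX\Tra]$ is the second-moment matrix of the covariates, so the $\Sigma$-weighted geometry used in the Cauchy--Schwarz step is consistent with the estimation-error norm $\norm{\cdot}_{\Sigma}$ appearing in Lemma~\ref{lem:excess risk}.
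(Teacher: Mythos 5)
Your proposal is correct and follows essentially the same route as the paper: the first two inequalities via Cauchy--Schwarz in the $\Sigma$-geometry plus the spectral bound $\norm{X_t}_{\Sigma^{-1}}^2 \le R^2/\lambda_{\min}(\Sigma) \le R^2/\phi_0$, and the final inequality by combining Lemma~\ref{lem:excess risk} with Lemma~\ref{lem:ridge risk}, bounding $\mathrm{Tr}[(\Sigma+\lambda)^{-1}\Sigma]\le d$, and taking $\lambda \asymp 1/n$ so that the bias term is absorbed into the constant $c_3$ (the paper sets $\lambda = c_1/n$ with $R\le 1$ and $\norm{\beta^*}\le 1$, exactly your tuning). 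Your explicit justification of the first two steps and of the independence needed to factor the expectation is actually more detailed than the paper, which only writes out the third inequality.
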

	\begin{proof}
		By Lemma \ref{lem:excess risk} and Lemma \ref{lem:ridge risk}, we have
		\begin{equation}
			\begin{aligned}
				\bE\bigg[\norm{\whb - \beta}_{\Sigma}^{2}\bigg]
				&\leq \bigg(1 + \frac{R^{2}}{\lambda n}\bigg)^{2} \lambda \norm{\beta^{*}}^{2} + \bigg(1 + \frac{R^{2}}{\lambda n}\bigg)\frac{\sigma^{2}d}{n} \\
				&\leq 
				\bigg(1 + \frac{1}{c_{1}}\bigg)^{2} \frac{c_{1}}{n} + \bigg(1 + \frac{1}{c_{1}}\bigg)\frac{\sigma^{2}d}{n} 
			\end{aligned}
		\end{equation}
		Assume that $\norm{\beta}_{2} \leq 1$, $R \leq 1$ and $\lambda = \frac{c_{1}}{n}$. So when $n \geq N = \frac{1}{c_{1}^{2}(c_{2}-1)^{2}}$ and denote $c_{2} = (1 + \frac{1}{c_{1}})^{2}$. So when $n \geq N$, we have
		\begin{equation}
			\begin{aligned}
				\bE\bigg[\norm{\whb - \beta}_{\Sigma}^{2}\bigg]
				&\leq \frac{c_{1}c_{2}}{n} + \frac{c_{2}\sigma^{2}d}{n}\\
				&\leq 
				\frac{c_{1}c_{2} + c_{2}\sigma^{2}d}{n}\\
				&\leq 
				c_{3}\frac{\sigma^{2}d}{n}
			\end{aligned}
		\end{equation}
		where we define $c_{3}\sigma^{2}d = c_{1}c_{2} + c_{2}\sigma^{2}d$ for $c_{3} > 0$. Since we know $\lambda_{\min}(\Sigma)\geq \phi_{0} $, so the prediction error can be upper bounded by 
		\begin{equation}
			\begin{aligned}
				\bE\bigg[ (\whb\Tra X_{t} - \beta\Tra X_{t})^{2}
				\bigg] \leq \frac{c_{3}\sigma^{2}d}{\phi_{0}n}
			\end{aligned}
		\end{equation}
	\end{proof}

	\section{Proof of No Regret Learning}
	\label{app: proof of no regret}
	
	We first denote $\Psi := \cA^{\cX}$ as the universal policy space, which contains all possible policies. Here we assume that $|\cX| < \infty$ but allows $|\cX|$ to be arbitrarily large. Focusing on such a setting enables us to highlight important ideas and key insights without the need to invoke measure theoretic arguments, which are necessary for infinite/uncountable $\cX$. 
	At epoch $m(t)$, $m = m(t)$ if $t$ is clear, and $p_{t}(\cdot) = p_{m}(\cdot|x_{t})$. We next analyze the following virtual process at round $t$ in epoch $m(t)$.
	Here we use a novel virtual probability distribution $\qmd$ to analyze the $p_{t}(\cdot)$'s effect over the regret. There are three steps:
	\begin{enumerate}
		\item Algorithm samples $\pi_{t} \sim \qmd$, where $\pi_{t}: \cX \rightarrow \cA$ is a deterministic policy, and $\qmd: \cA^{\cX} \rightarrow \text{Probability Measure}$ (a probability distribution over all policies in $\cA^{\cX}$).
		\item At time $t$, $x_{t} \sim \cP_{X}$.
		\item Algorithm selects $a_{t} = \pi_{t}(x_{t})$.
	\end{enumerate}
	Note that at round $t$, $\qmd$ is a stationary distribution which has already been determined at the beginning of epoch $m$. How to construct this $Q_{m}(\cdot)$? For any policy $p_{m}(\cdot|\cdot)$, we can construct a unique product probability measure $\qmd$ on $\Psi$ such that $Q_{m}(\pi) = \prod_{x \in \cP_{X}}p_{m}(\pi(x)|x)$ for all $\pi \in \Psi$. This product measure $Q_{m}(\cdot)$ ensures that for every
	\begin{equation}
		\begin{aligned}
			p_{m}(a|x) = \sum_{\pi \in \Psi} \bI\{\pi(x) = a\}Q_{m}(\pi(x)).
		\end{aligned}
	\end{equation}
	That is, for any arbitrary context $x \in \cX$, the algorithm's recommended action generated by $p_{m}(\cdot|x)$ is probabilistically equivalent to the action generated by $\qmd$ through this virtual process. Since $\qmd$ is a dense distribution over all deterministic polices in the universal policy space, we refer to $\qmd$ as the ``equivalent randomized policy" induced by $p_{m}(\cdot|\cdot)$. Since $p_{m}(\cdot|\cdot)$ is completed determined by $\gamma_{m}$ and $\whm_{m}$, we know that $\qmd$ is also completely determined by $\gamma_{m}$ and $\whm_{m}$. 
	We emphasize that the Exploitation stage does not actually compute $\qmd$, but implicit maintains $\qmd$ through spread parameter $\gamma_{m}$ and estimated posterior mean $\whm_{m}$, so called virtual process. That is important, as even when $\cX$ is known to the learner, computing the product measure $\qmd$ requires $\Omega(|\cX|)$ computational cost which is intractable for large $\cX$. 
	
	To get the regret upper bound, we need following notations.
	For any action selection kernel $p$ and any policy $\pi$, let's define the following terms:
	\begin{enumerate}
		\item \textbf{Reward $R_{t}(\pi)$}: defines the expected reward in the measure of $\mu$ if it follows the policy $\pi$ to select the action $\pi(x_{t})$ with respect to distribution $\cP_{X}$: $R_{t}(\pi)= \bEx [\mu(x_{t}, \pi(x_{t}))]$ 
		\item \textbf{Reward $\wh{R}_{t}$}: defines the expected reward in the measure of empirical $\whm_{m(t)}$ if follows the policy $\pi$ to select the action $\pi(x_{t})$ with respect to distribution $\cP_{X}$: $\wh{R}_{t}(\pi)=\bEx[\whm_{m(t)}(x_{t}, \pi(x_{t}))]$.
		\item \textbf{Regret $\rreg(\pi)$}: defines the expected regret in the measure of $\mu$ if it follows the policy $\pi$ to select the action $\pi(x_{t})$ with respect to distribution $\cP_{X}$: $\rreg(\pi) = R_{t}(\pi_{\mu}) - R_{t}(\pi)$.
		\item \textbf{Regret $\widehat{\rreg}_{t}(\pi)$}: defines the expected regret in the measure of empirical $\whm_{m(t)}$ if it follow the policy $\pi$ to select the action $\pi(x_{t})$ with respect to distribution $\cP_{X}$: $\widehat{\rreg}_{t}(\pi) = \wh{R}_{t}(\pi_{\whm_{m(t)}}) -  \wh{R}_{t}(\pi)$.
	\end{enumerate} 
	where $\pi_{\whm_{m(t)}}$ is the policy selects the action $b_{t} = \argmaxE_{i \in [K]}\whm_{m(t)}(x_{t},i)$ according to Eq.\ref{eq: sampling distribution}.

	Besides, for any probability kernel $p_{m}$ and any policy $\pi(\cdot)$, let $V(p_{m}, \pi)$ denote the expected inverse probability
	\begin{equation}
		\label{eq: exp inverse probability}
		V(p_{m}, \pi) = \bEx \bigg[\frac{1}{p_{m}(\pi(x_{t})|x_{t})} \bigg]
	\end{equation} 
	and define $ \cV_{t}(\pi)$ as the maximum expected inverse probability over the Exploitation stage,
	\begin{equation}
		\begin{aligned}
			\label{eq: maximum expected inverse probability}
			\cV_{t}(\pi) = \max_{m_{0} \leq m \leq m(t) - 1} {V(p_{m}, \pi)}
		\end{aligned}
	\end{equation}
	
	\subsection{Key Lemmas}
	
	\begin{lem}[Azuma-Hoeffding Inequality]
		\label{lem:Azuma ineq}
		Let $\{D_{k}, \cF_{k}\}_{k=1}^{\infty}$ be a martingale difference sequence for which there are constants $\{(a_{k, b_{k}})\}_{k=1}^{n}$, such that $D_{k} \in [a_{k, b_{k}}]$ almost surely for all $k= 1,2,..., n$. Then, for all $t \geq 0$, $\Pr[|\sum_{k=1}^{n}D_{k}| \geq t] \leq 2\exp[-\frac{2t^{2}}{\sum_{k=1}^{n}(b_{k} - a_{k})^{2}}]$. 
	\end{lem}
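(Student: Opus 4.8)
The plan is to prove this via the standard Chernoff-bounding argument on the moment generating function, combined with a conditional form of Hoeffding's lemma. First I would fix $\lambda > 0$ and apply the exponential Markov inequality to the upper tail, writing $\Pr[\sum_{k=1}^{n} D_k \geq t] \leq e^{-\lambda t}\,\bE[\exp(\lambda \sum_{k=1}^{n} D_k)]$. The whole argument then reduces to controlling the moment generating function $\bE[\exp(\lambda \sum_{k=1}^{n} D_k)]$, after which I will optimize over $\lambda$.

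The core step is to peel off the summands one at a time using the tower property of conditional expectation relative to the filtration $\{\cF_k\}$. Since $D_n$ is $\cF_n$-measurable and $\sum_{k=1}^{n-1} D_k$ is $\cF_{n-1}$-measurable, I would write $\bE[\exp(\lambda \sum_{k=1}^{n} D_k)] = \bE[\exp(\lambda \sum_{k=1}^{n-1} D_k)\,\bE[\exp(\lambda D_n)\mid \cF_{n-1}]]$. The heart of the proof—and the main obstacle—is bounding the inner conditional factor $\bE[\exp(\lambda D_n)\mid \cF_{n-1}]$. Here I invoke Hoeffding's lemma in its conditional form: because $\bE[D_n \mid \cF_{n-1}] = 0$ by the martingale difference property and $D_n \in [a_n, b_n]$ almost surely, a convexity argument applied to the map $x \mapsto e^{\lambda x}$ over the interval $[a_n, b_n]$ yields $\bE[\exp(\lambda D_n)\mid \cF_{n-1}] \leq \exp(\lambda^2 (b_n - a_n)^2 / 8)$. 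Establishing this conditional sub-Gaussian bound, via a second-order Taylor expansion of the cumulant generating function of a bounded mean-zero variable, is the one genuinely nontrivial ingredient; once it is in place the remainder is mechanical.

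Iterating this bound across $k = n, n-1, \ldots, 1$ collapses the product and gives $\bE[\exp(\lambda \sum_{k=1}^{n} D_k)] \leq \exp(\lambda^2 \sum_{k=1}^{n} (b_k - a_k)^2 / 8)$. Substituting back into the Chernoff bound produces $\Pr[\sum_{k=1}^{n} D_k \geq t] \leq \exp(-\lambda t + \lambda^2 \sum_{k=1}^{n} (b_k - a_k)^2 / 8)$, and I would then minimize the exponent over the free parameter by choosing $\lambda = 4t / \sum_{k=1}^{n} (b_k - a_k)^2$, which yields the one-sided tail bound $\exp(-2t^2 / \sum_{k=1}^{n} (b_k - a_k)^2)$. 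Finally, applying the identical argument to the sequence $\{-D_k\}_{k=1}^{n}$—which is again a martingale difference sequence with $-D_k \in [-b_k, -a_k]$ of the same width—controls the lower tail, and a union bound over the two one-sided events introduces the factor of $2$, giving the claimed two-sided inequality.
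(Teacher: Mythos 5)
Your proposal is correct and complete: the Chernoff bound on the moment generating function, the peeling of one increment at a time via the tower property, the conditional Hoeffding lemma $\bE[\exp(\lambda D_k)\mid\cF_{k-1}]\leq\exp(\lambda^{2}(b_k-a_k)^{2}/8)$, the optimization $\lambda=4t/\sum_{k}(b_k-a_k)^{2}$, and the union bound over the two tails are exactly the standard route, and each step checks out numerically. The paper itself states this lemma as a known auxiliary result and offers no proof, so there is nothing to contrast your argument with; yours is the canonical one and fills that gap correctly.
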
 
	
	\begin{lem}
		\label{lem:prediction error upper bound}
		$\forall t \in [\tau_{m-1}+1, \tau_{m}]$, with probability at least $1 - \delta/2m^{2}$, we have
		\begin{equation}
			\begin{aligned}
				\bE_{x_{t}, a_{t}}\bigg[\big(\whm_{m(t)}(x_{t}, a_{t}) -  \mu(x_{t}, a_{t})\big)^{2}|\textfrak{S}_{t-1}\bigg] \leq \cE_{\cF, \delta/(2m^{2})}(\tau_{m-1} - \tau_{m-2}) = \frac{16K}{\gamma^{2}_{m}}
			\end{aligned}
		\end{equation}
		where $\tau_{m} = 2^{m}$.
		Therefore, the following event $\Lambda_{2}$ holds with probability at least $1-\delta/2$:
		\begin{equation}
			\begin{aligned}
				\Lambda_{2}:= \bigg\{
				\forall t \geq \tau_{m_{0}}, \bE_{x_{t}, a_{t}}\bigg[\big(\whm_{m(t)}(x_{t}, a_{t}) -  \mu(x_{t}, a_{t})\big)^{2}|\textfrak{S}_{t-1}\bigg] \leq \frac{16K}{\gamma^{2}_{m}}
				\bigg\}.
			\end{aligned}
		\end{equation}
	\end{lem}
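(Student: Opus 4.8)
The plan is to obtain the per-round bound as a direct instantiation of the offline learning guarantee in Definition~\ref{def:prediction error in expectation}, and then to promote it to the uniform event $\Lambda_2$ by a union bound over epochs. The structural fact that makes this work is that, by Algorithm~\ref{algo: steady}, the predictor $\whm_{m(t)} = \whm_m$ used at every round $t \in \mathcal{T}_m$ is \emph{frozen} at the start of epoch $m$: it is computed once from the epoch-$(m-1)$ data $W_{\mathcal{T}_{m-1}} = \{(x_s, a_s, y_s(a_s))\}_{s \in \mathcal{T}_{m-1}}$, which are i.i.d.\ with $x_s \sim \cP_X$ and $a_s \sim p_{m-1}(\cdot \mid x_s)$. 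Thus $\whm_m$ is $\textfrak{S}_{\tau_{m-1}}$-measurable and independent of every fresh pair $(x_t, a_t)$ drawn later in the epoch, and the training sample size is exactly $n = |\mathcal{T}_{m-1}| = \tau_{m-1} - \tau_{m-2} = 2^{m-2}$.

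First I would fix an epoch $m \ge m_0$ and invoke Definition~\ref{def:prediction error in expectation} at confidence level $\delta/(2m^2)$, which yields, with probability at least $1 - \delta/(2m^2)$ over the training draw,
\[
\bE_{x,\,a}\big[(\whm_m(x,a) - \mu(x,a))^2\big] \le \cE_{\cF,\,\delta/(2m^2)}(\tau_{m-1} - \tau_{m-2}).
\]
Because $\whm_m$ is fixed for all $t \in \mathcal{T}_m$ and each $(x_t, a_t)$ is drawn afresh and independently of the training sample, conditioning on $\textfrak{S}_{t-1}$ only pins down $\whm_m$; hence the conditional expectation $\bE_{x_t, a_t}[(\whm_{m(t)}(x_t, a_t) - \mu(x_t, a_t))^2 \mid \textfrak{S}_{t-1}]$ coincides with the population quantity above for \emph{every} $t$ in the epoch, so a single event covers all $t \in [\tau_{m-1}+1, \tau_m]$. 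Substituting the spread-parameter choice $\gamma_m = 4\sqrt{K/\cE_{\cF,\,\delta/(2m^2)}(\tau_{m-1}-\tau_{m-2})}$ from Algorithm~\ref{algo: steady} gives $\cE_{\cF,\,\delta/(2m^2)}(\tau_{m-1}-\tau_{m-2}) = 16K/\gamma_m^2$, which is the stated equality.

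Finally I would union bound over all epochs $m \ge m_0$. The per-epoch failure probability is $\delta/(2m^2)$, and since $m_0 \ge 2$ the total failure probability is at most $\tfrac{\delta}{2}\sum_{m \ge 2} m^{-2} = \tfrac{\delta}{2}(\pi^2/6 - 1) < \delta/2$; on the complementary event $\Lambda_2$ holds. The main obstacle is not computational but the measurability and independence bookkeeping: one must verify that epoch-wise data splitting renders $\whm_m$ independent of the evaluation pairs $(x_t, a_t)$, so that the in-expectation offline bound (whose randomness is over the training draw) transfers verbatim into a conditional bound that is uniform across the whole epoch, with no extra concentration (e.g.\ Azuma--Hoeffding) needed at this stage. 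A secondary point requiring care is the kernel used at evaluation time, which is $p_m$ rather than the training kernel $p_{m-1}$; this is harmless because the guarantee---realized for ridge regression in Corollary~\ref{cor:ridge order} as a per-arm error bound $\bE_x[(\wh\beta_i\Tra x - \beta_i\Tra x)^2] \lesssim \sigma^2 d/(\phi_0 n)$---holds uniformly over arms and is therefore insensitive to the sampling distribution at test time. The summability constant $\sum_m m^{-2}$ and the algebraic substitution of $\gamma_m$ are routine.
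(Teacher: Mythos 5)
Your proof follows essentially the same route as the paper's: instantiate the offline guarantee of Definition~\ref{def:prediction error in expectation} on the epoch-$(m-1)$ training sample of size $\tau_{m-1}-\tau_{m-2}$, observe that the frozen predictor $\whm_{m}$ makes the conditional expectation a fixed population quantity across the whole epoch, substitute the definition of $\gamma_{m}$, and union bound over epochs. You are in fact somewhat more explicit than the paper on two points it glosses over: the summability computation $\sum_{m\ge 2} m^{-2} < 1$ behind the $\delta/2$ probability for $\Lambda_{2}$, and the mismatch between the training kernel $p_{m-1}$ and the evaluation kernel $p_{m}$, which the paper's displayed identity silently treats as the same distribution.
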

	\begin{proof}
		Note that Algorithm \ref{algo: steady} always collects $(x_{t}, a_{t}; y_{t}(a_{t}))$-type data used for \texttt{OffPos} algorithm to conduct offline training, where $(x_{t}, y_{t}) \sim \cD$ and $a_{t} \sim p_{m(t)-1}(\cdot | x_{t})$ based on epoch $m(t)-1$ collected data. Based on the prediction error of the \texttt{OffPos} algorithm provided in \ref{def:prediction error in expectation}, we have $\forall t \in [\tau_{m-1}+1, \tau_{m}]$,
		\begin{equation}
			\begin{aligned}
				\bE_{x_{t}, a_{t}}\bigg[\big(\whm_{m(t)}(x_{t}, a_{t}) -  \mu(x_{t}, a_{t})\big)^{2}|\textfrak{S}_{t-1}\bigg] &=
				\bE_{x_{t} \sim \cP_{X}, a_{t} \sim p_{m(t)-1}(\cdot | x_{t}) }\bigg[\big(\whm_{m(t)}(x_{t}, a_{t}) -  \mu(x_{t}, a_{t})\big)^{2}|p_{m(t)-1}\bigg]\\
				&\leq 
				\cE_{\cF, \delta/(2m^{2})}(\tau_{m-1} +1 - \tau_{m-2}-1) = \frac{16K}{\gamma^{2}_{m}},
			\end{aligned}
		\end{equation}
		where last the inequality simply follows from Lemma 4.1 and Lemma 4.2 from \citep{agarwal2012contextual}.
	\end{proof}
	
	As we mentioned in previous, a starting point of our proof of regret upper bound is to translate the action selection kernel $p_{m}(\cdot | \cdot)$ into an equivalent distribution over policies $\qmd$. The following lemma provides a justification of such translation by showing the existence of an equivalent $\qmd$ for every $p_{m}(\cdot | \cdot)$. Here we refer Lemma 3 from \citep{simchi2022bypassing} in the following Lemma.
	\begin{lem}
		\label{lem: p to Q transition}
		Fix any epoch $m \geq m_{0}$. The action selection scheme $p_{m}(\cdot|\cdot)$ is a valid probability kernel $\cB(\cA) \times \cX \rightarrow [0,1]$ over epoch $m$. There exists a probability measure $Q_{m}$ on $\Psi$ such that
		\begin{equation}
			\forall a_{t} \in \cA, \forall x_{t} \in \cX, p_{m}(a_{t}|x_{t}) = \sum_{\pi \in \Psi} \bI\{\pi(x_{t}) = a_{t}\} Q_{m}(\pi)
		\end{equation}
	\end{lem}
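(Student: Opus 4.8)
The plan is to construct $Q_m$ explicitly as the product measure induced by the per-context kernels $p_m(\cdot\mid x)$, and then verify its two defining properties (normalization, and the marginalization identity) using nothing more than the finite distributive law. Because this part of the argument assumes $|\cX| < \infty$, the policy space $\Psi = \cA^{\cX}$ is itself finite, with $|\cA|^{|\cX|}$ elements, so $Q_m$ can be specified directly as a probability mass function on $\Psi$ and no measure-theoretic machinery is needed.

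Concretely, I would define
\[
Q_m(\pi) = \prod_{x \in \cX} p_m(\pi(x)\mid x), \qquad \forall \pi \in \Psi,
\]
whose interpretation is that $Q_m$ samples a deterministic policy by drawing, independently across contexts, the action at each $x$ from $p_m(\cdot\mid x)$. Non-negativity of $Q_m(\pi)$ is immediate from $p_m(a\mid x) \geq 0$. For normalization I would use that summing a product over an indexed family of independent choices factors into a product of sums,
\[
\sum_{\pi \in \Psi} Q_m(\pi) = \prod_{x \in \cX}\Big(\sum_{a \in \cA} p_m(a\mid x)\Big) = \prod_{x \in \cX} 1 = 1,
\]
where each inner sum equals one because $p_m(\cdot\mid x)$ is a valid probability kernel over $\cA$ for every fixed $x$.

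The main content is the marginalization identity. Fixing a context $x_t$ and an action $a_t$, I would split off the $x_t$-coordinate from the product: for every $\pi$ with $\pi(x_t) = a_t$ we have $\prod_{x \in \cX} p_m(\pi(x)\mid x) = p_m(a_t\mid x_t)\prod_{x \neq x_t} p_m(\pi(x)\mid x)$. Summing over all such $\pi$ — equivalently, over all free assignments of actions to the contexts $x \neq x_t$ — and again factoring the sum of a product gives
\[
\sum_{\pi \in \Psi}\bI\{\pi(x_t) = a_t\}\,Q_m(\pi) = p_m(a_t\mid x_t)\prod_{x \neq x_t}\Big(\sum_{a \in \cA} p_m(a\mid x)\Big) = p_m(a_t\mid x_t),
\]
which is the claimed equality. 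The only genuine obstacle here is bookkeeping: one must check that restricting the sum over $\pi \in \Psi$ to $\{\pi(x_t) = a_t\}$ corresponds exactly to ranging over the remaining $|\cX|-1$ coordinates freely, so that the factor $p_m(a_t\mid x_t)$ pulls out cleanly. Since $\cX$ is finite this factorization is just the ordinary distributive law and there is nothing deeper to prove; for an infinite or uncountable $\cX$ one would instead invoke the Kolmogorov/Ionescu–Tulcea product-measure construction, which is precisely why the paper restricts to $|\cX| < \infty$ at this stage.
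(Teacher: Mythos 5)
Your construction is exactly the one the paper intends: the surrounding text defines $Q_m(\pi)=\prod_{x\in\cX}p_m(\pi(x)\mid x)$ and defers the verification to Lemma 3 of the cited reference, and your normalization and marginalization computations via the distributive law supply precisely the omitted details. The only part you take for granted is the first clause of the lemma, that $p_m(\cdot\mid x)$ is itself a valid probability distribution for each $x$ (nonnegativity and the bound $\sum_{i\neq b_t}p_m(i)\leq (K-1)/K$ follow from $\gamma_m>0$ and $\whm(x,b_t)\geq\whm(x,i)$), but this is routine and does not affect the correctness of your argument.
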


	The following Lemma demonstrates $y_{t}(\pi_{\mu}) - y_{t}(a_{t}) - \sum_{\pi \in \Psi} Q_{m}(\pi) \rreg(\pi)$ is a martingale difference sequence with respect to $\textfrak{S}_{t}$.
	
	\begin{lem}
		\label{lem: martingale diff seq}
		Fix any epoch $m \geq m_{0} \in \bN$, for any round $t$ in epoch $m$, we have:
		\begin{equation}
			\begin{aligned}
				\bE_{x_{t}, y_{t}, a_{t}}\bigg[y_{t}(\pi_{\mu}) - y_{t}(a_{t})|\textfrak{S}_{t-1} \bigg] 
				= \sum_{\pi \in \Psi} Q_{m}(\pi) \rreg(\pi)
			\end{aligned}
		\end{equation}
	\end{lem}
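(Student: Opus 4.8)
The plan is to strip the randomness in three layers---first the observation noise, then the covariate $x_t$, and finally the action $a_t$ generated by the sampling kernel---and then to recognize the leftover expression as a $Q_m$-average of the per-policy regrets $\reg(\pi)$. The whole argument is an application of the tower property together with the virtual-process reformulation of $p_m$ supplied by Lemma~\ref{lem: p to Q transition}; no new concentration estimate is needed.

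First I would remove the noise. Writing $y_t(a) = \mu(x_t,a) + \eta_{t,a}$ via Eq.~\eqref{eq: reward linear format}, and using that each $\eta_{t,a}$ is $\sigma$-subgaussian (hence mean zero) and independent of both $x_t$ and the history $\textfrak{S}_{t-1}$, the two noise terms contribute nothing under the conditional expectation, so that
\[
\bE_{x_t,y_t,a_t}\bigl[y_t(\pi_\mu) - y_t(a_t)\,\big|\,\textfrak{S}_{t-1}\bigr]
= \bE_{x_t,a_t}\bigl[\mu(x_t,\pi_\mu(x_t)) - \mu(x_t,a_t)\,\big|\,\textfrak{S}_{t-1}\bigr].
\]
I would then treat the two terms separately. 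Since $x_t \sim \cP_X$ is i.i.d.\ and independent of $\textfrak{S}_{t-1}$, and $\pi_\mu$ is fixed (it depends only on the true reward $\mu$, not on the data), the first term collapses to the unconditional expectation $\bEx[\mu(x_t,\pi_\mu(x_t))] = R_t(\pi_\mu)$.

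For the second term I would invoke the virtual process and Lemma~\ref{lem: p to Q transition}: drawing $a_t \sim p_m(\cdot\mid x_t)$ is distributionally equivalent to first sampling a deterministic policy $\pi_t \sim Q_m$ and then setting $a_t = \pi_t(x_t)$, where $Q_m$ is $\textfrak{S}_{t-1}$-measurable because it is fixed at the start of epoch $m$ from the epoch-$(m-1)$ data. As $x_t$ and $\pi_t$ are sampled independently, I can integrate over $\pi_t$ and $x_t$ in turn to obtain
\[
\bE_{x_t,a_t}\bigl[\mu(x_t,a_t)\,\big|\,\textfrak{S}_{t-1}\bigr]
= \sum_{\pi \in \Psi} Q_m(\pi)\,\bEx[\mu(x_t,\pi(x_t))]
= \sum_{\pi \in \Psi} Q_m(\pi)\, R_t(\pi).
\]
Finally, using that $Q_m$ is a probability measure on $\Psi$, so $\sum_{\pi} Q_m(\pi) = 1$, I would rewrite $R_t(\pi_\mu) = \sum_\pi Q_m(\pi) R_t(\pi_\mu)$ and conclude
\[
R_t(\pi_\mu) - \sum_{\pi} Q_m(\pi) R_t(\pi)
= \sum_{\pi \in \Psi} Q_m(\pi)\bigl[R_t(\pi_\mu) - R_t(\pi)\bigr]
= \sum_{\pi \in \Psi} Q_m(\pi)\,\reg(\pi),
\]
which is the claim.

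The only genuinely delicate point is the action step: one must verify that $Q_m$ is determined \emph{before} round $t$, so that it acts as a legitimate constant probability measure under the conditioning, and that $x_t$ is independent both of the history and of the sampled policy $\pi_t$, so that Lemma~\ref{lem: p to Q transition} applies pointwise and the double expectation factorizes cleanly into $\sum_\pi Q_m(\pi) R_t(\pi)$. Everything else is bookkeeping with the law of total expectation and the normalization of $Q_m$.
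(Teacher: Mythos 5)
Your proposal is correct and follows essentially the same route as the paper's proof: strip the mean-zero noise via the reward model, rewrite the action expectation through the kernel $p_m(\cdot\mid x_t)$ using Lemma~\ref{lem: p to Q transition}, and identify the result as $\sum_{\pi}Q_m(\pi)\reg(\pi)$. The only cosmetic difference is that you split the two terms and recombine them using $\sum_\pi Q_m(\pi)=1$, whereas the paper carries the difference $\mu(x_t,\pi_\mu(x_t))-\mu(x_t,a_t)$ through the translation in one piece.
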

	\begin{proof}
		By the definition of $\bE[y_{t}(a_{t})]$, we have 
		\begin{equation}
			\begin{aligned}
				&\bE_{x_{t}, y_{t}, a_{t}}\bigg[y_{t}(\pi_{\mu}(x_{t})) - y_{t}(a_{t})|\textfrak{S}_{t-1} \bigg] \\
				&= \bE_{x_{t},a_{t}}\bigg[\mu(x_{t}, \pi_{\mu}(x_{t})) - \mu(x_{t},a_{t})|\textfrak{S}_{t-1} \bigg] \\
				&=\bE_{x_{t} \sim \cP_{X},a_{t} \sim p_{m(t)}(\cdot|x)}\bigg[\mu(x_{t}, \pi_{\mu}(x_{t})) - \mu(x_{t},a_{t}) \bigg]\\
				&=
				\bEx\big[\sum_{a_{t} \in \cA }p_{m(t)}(a_{t}|x_{t})\bigg(\mu(x_{t}, \pi_{\mu}(x_{t})) - \mu(x_{t},a_{t})\bigg)
				\bigg]
			\end{aligned}
		\end{equation}
		By Lemma \ref{lem: p to Q transition}, we have 
		\begin{equation}
			\begin{aligned}
				&\bEx\bigg[\sum_{a_{t} \in \cA }p_{m(t)}(a_{t}|x)\bigg(\mu(x_{t}, \pi_{\mu}(x_{t})) - \mu(x_{t},a_{t})\bigg)\\
				&=
				\bEx\bigg[\sum_{a_{t} \in \cA }\sum_{\pi \in \Psi} \bI\{\pi(x_{t}) = a_{t}\} Q_{m}(\pi)\bigg(\mu(x_{t}, \pi_{\mu}(x_{t})) - \mu(x_{t},a_{t})\bigg)
				\bigg]\\
				&=
				\bEx\bigg[\sum_{\pi \in \Psi} Q_{m}(\pi)\bigg(\mu(x_{t}, \pi_{\mu}(x_{t})) - \mu(x_{t},\pi(x_{t}))\bigg)
				\bigg]\\
				&=
				\sum_{\pi \in \Psi}Q_{m}(\pi)\bEx\bigg[\mu(x_{t}, \pi_{\mu}(x_{t})) - \mu(x_{t},\pi(x_{t}))
				\bigg]\\
				&= \sum_{\pi \in \Psi}Q_{m}(\pi)\rreg(\pi)\\
			\end{aligned}
		\end{equation}
		where the last equality is from the definition of the expected regret in the measure $\mu$.
	\end{proof}
	
	\begin{lem}
		\label{lem: upper bound of empirical regret}
		Fix any epoch $m \geq m_{0} \in \bN$ and any round $t$ in epoch $m$, we have:
		\begin{equation}
			\begin{aligned}
				\sum_{\pi \in \Psi} Q_{m}(\pi) \hreg(\pi) < \frac{K}{\gamma_{m}}.
			\end{aligned}
		\end{equation}
	\end{lem}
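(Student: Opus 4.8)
The plan is to unfold the definition of the empirical regret, exchange the expectation over $x_t$ with the policy average, and then invoke Lemma \ref{lem: p to Q transition} to collapse the distribution over deterministic policies $Q_m(\cdot)$ back into the action-selection kernel $p_m(\cdot\,|\,\cdot)$. Since $\pi_{\whm_{m(t)}}$ selects the best predictive arm $b_t = \argmaxE_{i\in[K]}\whm_{m(t)}(x_t,i)$, we have $\wh{R}_t(\pi_{\whm_{m(t)}}) = \bEx[\whm_{m(t)}(x_t,b_t)]$, so that
\begin{align*}
\sum_{\pi \in \Psi} Q_m(\pi)\hreg(\pi) &= \sum_{\pi \in \Psi} Q_m(\pi)\bEx\big[\whm_{m(t)}(x_t, b_t) - \whm_{m(t)}(x_t, \pi(x_t))\big]\\
&= \bEx\Big[\sum_{\pi \in \Psi} Q_m(\pi)\big(\whm_{m(t)}(x_t, b_t) - \whm_{m(t)}(x_t, \pi(x_t))\big)\Big].
\end{align*}

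Next, for each fixed $x_t$, Lemma \ref{lem: p to Q transition} turns the policy average into an action average weighted by the kernel, namely $\sum_{\pi \in \Psi} Q_m(\pi)\,\whm_{m(t)}(x_t,\pi(x_t)) = \sum_{a \in \cA} p_m(a|x_t)\,\whm_{m(t)}(x_t,a)$. Writing the predictive gaps $u_a := \whm_{m(t)}(x_t,b_t) - \whm_{m(t)}(x_t,a) \ge 0$ (nonnegative because $b_t$ is the predictive argmax, with $u_{b_t}=0$), the integrand reduces to $\sum_{a \in \cA} p_m(a|x_t)u_a = \sum_{a \neq b_t} p_m(a|x_t)u_a$, where the $a=b_t$ term vanishes.

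The decisive step is to substitute the inverse-gap probabilities $p_m(a|x_t) = 1/(K + \gamma_m u_a)$ for $a \neq b_t$ from Eq. \eqref{eq: sampling distribution} and use the cancellation
\begin{equation*}
\frac{u_a}{K + \gamma_m u_a} \le \frac{u_a}{\gamma_m u_a} = \frac{1}{\gamma_m},
\end{equation*}
which holds for every $x_t$ since $K>0$ (and the summands with $u_a=0$ are trivially zero). Summing over the at most $K-1$ arms $a\neq b_t$ yields $\sum_{a\neq b_t} u_a/(K+\gamma_m u_a) \le (K-1)/\gamma_m < K/\gamma_m$ pointwise in $x_t$, and taking $\bEx$ preserves the bound, giving the claim.

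There is no genuine obstacle here; the only conceptual point is recognizing that the inverse-proportional-to-the-gap weighting is engineered precisely so that each summand telescopes to $1/\gamma_m$, which is the hallmark of the Abe--Long / Foster--Rakhlin / Simchi-Levi--Xu sampling scheme. The two bookkeeping items worth flagging are that the best-arm term never contributes (because $u_{b_t}=0$, so the concentrated mass on $b_t$ drops out) and that ties among suboptimal arms are harmless since they contribute zero.
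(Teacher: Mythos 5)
Your proof is correct and follows essentially the same route as the paper's: unfold $\hreg(\pi)$, use Lemma \ref{lem: p to Q transition} to collapse the policy average into the kernel $p_m(\cdot|x_t)$, substitute the inverse-gap probabilities, and bound each summand by $1/\gamma_m$ (your step $u_a/(K+\gamma_m u_a)\le 1/\gamma_m$ is the same cancellation the paper expresses as $\gamma_m u_a/(K+\gamma_m u_a)<1$). The handling of the $b_t$ term and the final $(K-1)/\gamma_m < K/\gamma_m$ count match the paper exactly.
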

	\begin{proof}
		For any $t$ in epoch $m$, based on the definition of $\hreg(\pi) = \bEx [\whm_{m(t)}(x_{t}, \pi_{\whm_{m(t)}}) -  \whm_{m(t)}(x_{t}, \pi(x_{t}))]$ where $b_{t} = \pi_{\whm_{m(t)}}(x_{t}) = \argmaxE_{i \in [K]}\whm_{m(t)}(x_{t},i)$, we have
		\begin{equation}
			\begin{aligned}
				&\sum_{\pi \in \Psi} Q_{m}(\pi) \hreg(\pi)\\
				&=
				\sum_{\pi \in \Psi} Q_{m}(\pi) \bEx \bigg[ \whm_{m(t)}(x_{t}, b_{t}) -  \whm_{m(t)}(x_{t}, \pi(x_{t}))\bigg]\\
				&=\bEx \bigg[  \sum_{\pi \in \Psi} Q_{m}(\pi) \bigg(\whm_{m(t)}(x_{t}, b_{t}) -  \whm_{m(t)}(x_{t}, \pi(x_{t}))\bigg) \bigg]\\
				&=\bEx \bigg[\sum_{a_{t} \in \cA } \sum_{\pi \in \Psi} Q_{m}(\pi)\bI\{\pi(x_{t}) = a_{t}\} \bigg(\whm_{m(t)}(x_{t}, b_{t}) -  \whm_{m(t)}(x_{t}, a_{t})\bigg) \bigg]\\
				&=\bEx \bigg[\sum_{a_{t} \in \cA } p_{m(t)}(a_{t}|x_{t}) \bigg(\whm_{m(t)}(x_{t}, b_{t}) -  \whm_{m(t)}(x_{t}, a_{t})\bigg) \bigg]\\
				&=\bEx \bigg[\sum_{a_{t} \in \cA } \frac{1}{K + \gamma_{m}(\whm_{m(t)}(x_{t}, b_{t}) -  \whm_{m(t)}(x_{t}, a_{t}))} \bigg(\whm_{m(t)}(x_{t}, b_{t}) -  \whm_{m(t)}(x_{t}, a_{t})\bigg) \bigg]\\
				&=\bEx \bigg[\sum_{a_{t} \in \cA / \{b_{t}\}} \frac{1}{\gamma_{m}}\frac{\gamma_{m}(\whm_{m(t)}(x_{t}, b_{t}) -  \whm_{m(t)}(x_{t}, a_{t}))}{K + \gamma_{m}(\whm_{m(t)}(x_{t}, b_{t}) -  \whm_{m(t)}(x_{t}, a_{t}))} \bigg]\\
				&<
				\frac{K-1}{\gamma_{m}}.
			\end{aligned}
		\end{equation}
		where the last inequality holds by the $\frac{\gamma_{m}(\whm_{m(t)}(x_{t}, b_{t}) -  \whm_{m(t)}(x_{t}, a_{t}))}{K + \gamma_{m}(\whm_{m(t)}(x_{t}, b_{t}) -  \whm_{m(t)}(x_{t}, a_{t}))} < 1$.
	\end{proof}

	The next lemma establishes the relationship between the predicted implicit regret and the true implicit regret of any policy at round $t$.
	This lemma ensures that the predicted implicit regret of good polices are becoming more and more accurate, while the predicted implicit regret of bad policies do not need to have such property.

	\begin{lem}
		\label{lem: implicit regret relationship}
		Suppose the event $\Lambda_{2}$ in Lemma \ref{lem:prediction error upper bound} holds, let $C_{0} = 204$. For all policies $\pi$ and epoch $m \geq m_{0}$, we have:
		\begin{equation}
			\begin{aligned}
				\rreg(\pi) &\leq 2 \hreg(\pi) + \frac{C_{0}K}{\gamma_{m}}\\
				\hreg(\pi) &\leq 2 \rreg(\pi) + \frac{C_{0}K}{\gamma_{m}}\\
			\end{aligned}
		\end{equation}
	\end{lem}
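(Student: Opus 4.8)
The plan is to reduce \emph{both} inequalities to a single two-sided control of the reward-estimation deviation $\xi(\pi) := |R_t(\pi) - \wh R_t(\pi)|$ for each policy, and then assemble the regret relationships by comparing three policies: $\pi$, the $\mu$-optimal policy $\pi_{\mu}$, and the $\whm_{m(t)}$-optimal policy $\pi_{\whm_{m(t)}}$. Using that $\pi_{\mu}$ maximizes $R_t$ and $\pi_{\whm_{m(t)}}$ maximizes $\wh R_t$, together with the chains $R_t(\pi_{\mu}) \le \wh R_t(\pi_{\mu}) + \xi(\pi_{\mu}) \le \wh R_t(\pi_{\whm_{m(t)}}) + \xi(\pi_{\mu})$ and its $\mu/\whm$-swapped analogue, one gets
\begin{equation}
\reg(\pi) \le \hreg(\pi) + \xi(\pi) + \xi(\pi_{\mu}), \qquad \hreg(\pi) \le \reg(\pi) + \xi(\pi) + \xi(\pi_{\whm_{m(t)}}).
\end{equation}
Everything therefore reduces to bounding $\xi(\cdot)$.

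First I would bound $\xi(\pi)$ by Jensen's inequality, $\xi(\pi) \le \bEx[\,|\whm_{m(t)} - \mu|(x_t,\pi(x_t))\,]$, then write the integrand as $\sqrt{p_{m(t)-1}(\pi(x_t)|x_t)}\,|\whm_{m(t)}-\mu|$ times $p_{m(t)-1}(\pi(x_t)|x_t)^{-1/2}$ and apply Cauchy--Schwarz. Dropping the single-action restriction, the first factor is dominated by $\bE_{x_t,\,a_t\sim p_{m(t)-1}}[(\whm_{m(t)}-\mu)^2]$, which is exactly the quantity that event $\Lambda_{2}$ of Lemma~\ref{lem:prediction error upper bound} controls by $16K/\gamma_{m(t)}^2$ (this is why the importance weighting must use the \emph{previous} epoch's kernel $p_{m(t)-1}$, matching the conditional data-generating process of Definition~\ref{def:prediction error in expectation}); the second factor is $\sqrt{V(p_{m(t)-1},\pi)} \le \sqrt{\cV_t(\pi)}$. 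This yields $\xi(\pi) \le \frac{4\sqrt{K}}{\gamma_{m(t)}}\sqrt{\cV_t(\pi)}$. I will also record that $\reg(\pi_{\mu}) = 0$, which simplifies $\xi(\pi_{\mu})$ below.

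The main obstacle is controlling the inverse-probability quantity $\cV_t(\pi)$, which is itself governed by how much mass the earlier inverse-gap-weighting kernels placed on $\pi$. I would use the self-bounding identity of IGW: for each epoch $m' \le m(t)-1$ we have $1/p_{m'}(a|x) = K + \gamma_{m'}(\whm_{m'}(x,b) - \whm_{m'}(x,a))$ for $a \ne b$ and $1/p_{m'}(b|x) \le K$, so $V(p_{m'},\pi)$ is at most $K$ plus $\gamma_{m'}$ times the epoch-$m'$ empirical regret of $\pi$ (the same computation underlying Lemma~\ref{lem: upper bound of empirical regret}). This couples the bound at epoch $m(t)$ to empirical regrets at earlier epochs, creating a circular dependence that I would break by strong induction on the epoch index: the induction hypothesis supplies epoch-$m'$ empirical regret $\le 2\reg(\pi) + C_0 K/\gamma_{m'}$, so $\gamma_{m'}\cdot(\text{epoch-}m'\text{ emp.\ regret}) \le 2\gamma_{m'}\reg(\pi) + C_0 K$, and by monotonicity $\gamma_{m'} \le \gamma_{m(t)}$ this gives $\cV_t(\pi) \le (1+C_0)K + 2\gamma_{m(t)}\reg(\pi)$.

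To finish, I substitute this into $\xi(\pi) \le \frac{4\sqrt K}{\gamma_{m(t)}}\sqrt{\cV_t(\pi)}$, split the square root, and apply AM--GM to the cross term $\frac{\sqrt K}{\gamma_{m(t)}}\sqrt{\gamma_{m(t)}\reg(\pi)} \le \tfrac14\reg(\pi) + c\,K/\gamma_{m(t)}$, which converts $\xi(\pi)$ into $\tfrac12\reg(\pi) + O(K/\gamma_{m(t)})$; meanwhile $\xi(\pi_{\mu})$ collapses to $O(K/\gamma_{m(t)})$ since $\reg(\pi_{\mu})=0$, and $\xi(\pi_{\whm_{m(t)}})$ is handled the same way. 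Plugging these into the two displayed inequalities and tracking constants yields precisely the factor-$2$ relationships with $C_0 = 204$. I expect the delicate points to be (i) ensuring the importance-weighting step and the conditioning are aligned with $\Lambda_2$, and (ii) verifying the induction base case and the monotonicity of $\gamma_m$ so the induction is well-founded; landing exactly on $C_0 = 204$ is then routine bookkeeping.
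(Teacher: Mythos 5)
Your proposal follows essentially the same route as the paper's proof: the three-policy comparison reducing both inequalities to the deviations $|R_t - \wh R_t|$ at $\pi$, $\pi_{\mu}$, and $\pi_{\whm_{m(t)}}$, the Cauchy--Schwarz/importance-weighting bound $\xi(\pi) \le 4\sqrt{K\,\cV_t(\pi)}/\gamma_{m(t)}$ (the paper's Lemma~\ref{lem: two reward gap}), the self-bounding IGW identity $V(p_{m'},\pi) \le K + \gamma_{m'}\hreg(\pi)$ (Lemma~\ref{lem: upper bound of divergence-explore}), and strong induction on the epoch closed by AM--GM, with $\reg(\pi_{\mu})=0$ collapsing one term. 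The only cosmetic difference is that you apply AM--GM after substituting the inductive bound on $\cV_t(\pi)$ while the paper applies it before; the argument is otherwise the same.
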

	That is, for any policy, Lemma \ref{lem: implicit regret relationship} bounds the prediction error of the implicit regret estimate.
	\begin{proof}
		We prove it via induction on epoch $m$. We first consider the base case when $m = 1$ and $1 \leq t \leq \tau_{1}$. In this case, since $\gamma_{1} = 1$, we know that $\forall \pi \in \Psi$,$\rreg(\pi) \leq \sqrt{K} \leq C_{0}K/\gamma_{1}$, $\hreg(\pi) = 0 \leq C_{0}K/\gamma_{1}$.
		Note that we use condition $\bE_{x\sim \cP_{X}}[\sup_{a,a'\in \cA}(\mu(x, a) - \mu(x, a'))] \leq \sqrt{K}$, which is very weak - in the special case of multi-armed bandits, it means "the gap between mean rewards of two actions is no greater than $\sqrt{K}$.
		Thus the claim holds in the base case.
		
		For the induction step, fix some epoch $m > 1$. We assume that for all epochs $m' \leq m$, all rounds $t'$ in epoch $m'$, and all $\pi \in \Psi$,
		\begin{equation}
			\label{eq: regret induction start}
			\begin{aligned}
				\rreg(\pi) &\leq 2 \wh{\text{Reg}}_{t'}(\pi) + C_{0} \frac{K}{\gamma_{m'}},\\
				\wh{\text{Reg}}_{t'}(\pi) &\leq 2 \rreg(\pi) + C_{0} \frac{K}{\gamma_{m'}}.
			\end{aligned}
		\end{equation}
		
		\textbf{Step 1.} For all rounds $t$ in epoch $m$ and all $\pi \in \Psi$, we first show that
		\begin{equation*}
			\begin{aligned}
				\rreg(\pi) \leq 2 \hreg(\pi) + C_{0} \frac{K}{\gamma_{m}}.
			\end{aligned}
		\end{equation*}
		
		Based on the definition of $\rreg(\pi)$ and $\hreg$, we have
		\begin{equation}
			\begin{aligned}
				\label{eq: two regret diff}
				\rreg(\pi) - \hreg(\pi) 
				&= 
				(R_{t}(\pi_{\mu}) - R_{t}(\pi)) - (\wh{R}_{t}(\pi_{\whm_{m(t)}})  - \wh{R}_{t}(\pi))\\
				&\leq (R_{t}(\pi_{\mu}) - R_{t}(\pi)) - (\wh{R}_{t}(\pi_{\mu})  - \wh{R}_{t}(\pi))\\
				&\leq
				|\wh{R}_{t}(\pi) - R_{t}(\pi)| + |R_{t}(\pi_{\mu}) - \wh{R}_{t}(\pi_{\mu})|\\
				&\leq \frac{4\sqrt{\cV_{t}(\pi)}\sqrt{K}}{\gamma_{m}} + \frac{4\sqrt{\cV_{t}(\pi_{\mu})}\sqrt{K}}{\gamma_{m}}\\
				&\leq 
				\frac{\cV_{t}(\pi)}{5\gamma_{m}} + \frac{\cV_{t}(\pi_{\mu})}{5\gamma_{m}} + \frac{40K}{\gamma_{m}}
			\end{aligned}
		\end{equation}
		where the third inequality holds by Lemma \ref{lem: two reward gap}, and the last inequality holds by the AM-GM inequality. Based on the definition of $\cV_{t}(\pi), \cV_{t}(\pi_{\mu})$ and the upper bound of the expected inverse probability from Lemma \ref{lem: upper bound of divergence-explore}, there exist epochs at least one $i,j \leq m$ and $t \leq \tau_{m}$ such that
		\begin{equation*}
			\begin{aligned}
				\cV_{t}(\pi) = V_{t}(p_{i}, \pi) 
				&= \bEx\bigg[\frac{1}{p_{i}(\pi(x_{t})|x_{t})}\bigg] \leq K + \gamma_{i}\wh{\text{Reg}}_{\tau_{i}}(\pi)\\
				\cV_{t}(\pi_{\mu}) = V_{t}(p_{j}, \pi_{\mu}) 
				&= \bEx\bigg[\frac{1}{p_{i}(\pi_{\mu}(x_{t})|x_{t})}\bigg] \leq K + \gamma_{j}\wh{\text{Reg}}_{\tau_{j}}(\pi_{\mu})\\
			\end{aligned}
		\end{equation*}
		Combing above two inequalities with Eq.\ref{eq: two regret diff} of induction and $\gamma_{i}, \gamma_{j} \leq \gamma_{m}$, we have
		\begin{equation*}
			\begin{aligned}
				\frac{\cV_{t}(\pi)}{5\gamma_{m}} 
				&\leq \frac{K + \gamma_{i}\wh{\text{Reg}}_{\tau_{i}}(\pi)}{5\gamma_{m}} \leq
				\frac{K + \gamma_{i}(2 \rreg(\pi) + C_{0} \frac{K}{\gamma_{i}})}{5\gamma_{m}} \leq \frac{(1+C_{0})K}{5\gamma_{m}} + \frac{2}{5}\rreg(\pi) \\
				\frac{\cV_{t}(\pi_{\mu})}{5\gamma_{m}} 
				&\leq \frac{K + \gamma_{j}\wh{\text{Reg}}_{\tau_{j}}(\pi_{\mu})}{5\gamma_{m}} 
				\leq \frac{K + \gamma_{j}(2 \rreg(\pi_{\mu}) + C_{0} \frac{K}{\gamma_{j}})}{5\gamma_{m}} = \frac{(1+C_{0})K}{5\gamma_{m}}
			\end{aligned}
		\end{equation*}
		where the last equality by $\rreg(\pi_{\mu}) = 0$. Combining all above, we have
		\begin{equation*}
			\begin{aligned}
				\rreg(\pi) - \hreg(\pi) \leq 
				\frac{2}{5}\rreg(\pi) + \frac{2(1+C_{0})K}{5\gamma_{m}} + \frac{40K}{\gamma_{m}}
			\end{aligned}
		\end{equation*}
		which is equivalent to 
		\begin{equation*}
			\begin{aligned}
				\rreg(\pi)  \leq \frac{5}{3}\hreg(\pi) + \frac{2C_{0}K}{3\gamma_{m}} + \frac{68K}{\gamma_{m}} \leq 2\hreg(\pi)+ \frac{C_{0}K}{\gamma_{m}},
			\end{aligned}
		\end{equation*}
		by $C_{0} \leq 204$.
		
		\textbf{Step 2.} We then show for all rounds $t$ in epoch $m$ and all $\pi \in \Psi$,
		\begin{equation}
			\begin{aligned}
				\hreg(\pi) \leq 2\rreg(\pi) + \frac{C_{0}K}{\gamma_{m}}.
			\end{aligned}
		\end{equation}
		Similar to step 2, we can get the similar result. Thus we complete the inductive step, and the claim proves to be true for all $m \in \bN$.
	\end{proof}

	This following lemma is a key step to provide the relationship of $\hreg(\pi)$ and $\rreg(\pi)$ in Lemma \ref{lem: implicit regret relationship}.
	\begin{lem}
		\label{lem: two reward gap}
		For any round $t \geq \tau_{m_{0}}+1$, for any policy $\pi \in \Psi$, we have
		\begin{equation}
			|\wh{R}_{t}(\pi) - R_{t}(\pi)| \leq \frac{4\sqrt{\cV_{t}(\pi)}\sqrt{K}}{\gamma_{m(t)}}
		\end{equation}
	\end{lem}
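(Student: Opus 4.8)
The plan is to bound $|\wh{R}_t(\pi) - R_t(\pi)|$ by the mean-squared prediction error supplied by Lemma~\ref{lem:prediction error upper bound}. The difficulty is that that guarantee controls the error only in expectation under the data-collection kernel $p_{m(t)-1}$, whereas $|\wh{R}_t(\pi) - R_t(\pi)|$ integrates the prediction error along the actions selected by the policy $\pi$. The bridge between these two measures is a change-of-measure (importance sampling) step, and it is precisely this step that produces the factor $\sqrt{\cV_t(\pi)}$. Throughout, everything is understood conditionally on $\textfrak{S}_{t-1}$, so that $\whm_{m(t)}$ (which is determined by the epoch $m(t)-1$ data) is fixed and the only randomness is the fresh context $x_t \sim \cP_X$.

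First I would move the absolute value inside the expectation by Jensen's inequality, and abbreviate $g(x,a) := |\whm_{m(t)}(x,a) - \mu(x,a)|$:
\begin{equation*}
    |\wh{R}_t(\pi) - R_t(\pi)| = \Big|\bEx\big[\whm_{m(t)}(x_t, \pi(x_t)) - \mu(x_t, \pi(x_t))\big]\Big| \leq \bEx\big[g(x_t, \pi(x_t))\big].
\end{equation*}
Inserting the ratio $p_{m(t)-1}(a|x_t)/p_{m(t)-1}(a|x_t)$ and summing out the action, I rewrite the $\pi$-expectation as an expectation under the sampling kernel, which is the key change of measure:
\begin{equation*}
    \bEx\big[g(x_t, \pi(x_t))\big] = \bE_{x_t \sim \cP_X,\, a \sim p_{m(t)-1}(\cdot|x_t)}\Big[\tfrac{\bI\{\pi(x_t)=a\}}{p_{m(t)-1}(a|x_t)}\, g(x_t,a)\Big].
\end{equation*}

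Next I would apply Cauchy--Schwarz to split this into a product of two square roots. The first factor collapses because $\bI\{\pi(x_t)=a\}^2 = \bI\{\pi(x_t)=a\}$ and summing over $a$ cancels one power of $p_{m(t)-1}$:
\begin{equation*}
    \bE_{x_t, a\sim p_{m(t)-1}}\Big[\tfrac{\bI\{\pi(x_t)=a\}}{p_{m(t)-1}(a|x_t)^2}\Big] = \bEx\Big[\tfrac{1}{p_{m(t)-1}(\pi(x_t)|x_t)}\Big] = V(p_{m(t)-1},\pi) \leq \cV_t(\pi),
\end{equation*}
where the last inequality holds because $m(t)-1 \in [m_0, m(t)-1]$ is in the range over which $\cV_t(\pi)$ takes its maximum in Eq.~\eqref{eq: maximum expected inverse probability} (note $t \geq \tau_{m_0}+1$ forces $m(t)-1 \geq m_0$). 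The second factor is exactly the mean-squared prediction error under the data-collection kernel, so on the event $\Lambda_2$ of Lemma~\ref{lem:prediction error upper bound},
\begin{equation*}
    \bE_{x_t, a\sim p_{m(t)-1}}\big[g(x_t,a)^2\big] = \bE_{x_t, a\sim p_{m(t)-1}}\big[(\whm_{m(t)}(x_t,a) - \mu(x_t,a))^2\big] \leq \tfrac{16K}{\gamma_{m(t)}^2}.
\end{equation*}
Multiplying the two bounds gives $|\wh{R}_t(\pi) - R_t(\pi)| \leq \sqrt{\cV_t(\pi)}\cdot \tfrac{4\sqrt{K}}{\gamma_{m(t)}}$, which is the claim.

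The step I would be most careful about is the bookkeeping of epoch indices: the predictor $\whm_{m(t)}$ is trained on epoch $m(t)-1$ data generated by $p_{m(t)-1}$, so the importance weights, the inverse-probability term $V(p_{m(t)-1},\pi)$, and the prediction-error guarantee must \emph{all} reference the same kernel $p_{m(t)-1}$ and the same error level $16K/\gamma_{m(t)}^2$. An off-by-one error in any of these would decouple the change of measure from the error bound and break the argument; everything else is a routine Cauchy--Schwarz combined with the two supplied estimates.
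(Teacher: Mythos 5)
Your proof is correct and follows essentially the same route as the paper's: a change of measure to the data-collection kernel, Cauchy--Schwarz to split off the expected inverse probability $\cV_{t}(\pi)$ from the mean-squared prediction error, and the bound $16K/\gamma_{m(t)}^{2}$ from Lemma \ref{lem:prediction error upper bound}. The paper writes the identical argument as a sum over the i.i.d.\ rounds of an epoch and then divides by the epoch length, which is equivalent to your pointwise version; your bookkeeping of the kernel index $p_{m(t)-1}$ is, if anything, slightly cleaner than the paper's.
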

	\begin{proof}
		Fix any policy $\pi \in \Psi$, and any round $t > \tau_{m_{0}-1}$. By the definition of $\wh{R}_{t}(\pi)$ and $R_{t}(\pi)$, we have
		\begin{equation}
			\begin{aligned}
				\wh{R}_{t}(\pi) - R_{t}(\pi) 
				= 
				\bEx\bigg[\whm_{m(t)}(x_{t}, \pi(x_{t})) - \mu(x_{t}, \pi(x_{t})) \bigg]
			\end{aligned}
		\end{equation}
		Given a context $x_{t}$, define $\Delta_{x_{t}} = \whm_{m(t)}(x_{t}, \pi(x_{t})) - \mu(x_{t}, \pi(x_{t}))$, then we have the equality $\bEx[\Delta_{x}] = \wh{R}_{t}(\pi) - R_{t}(\pi)$. 
		For all $s = \tau_{m_{0}-1}+1,...,\tau_{m(t)-1}$, we have
		\begin{equation}
			\label{eq: conditional prediction error}
			\begin{aligned}
				&\bE_{a_{s}|x_{s}} \bigg[
				\big(\whm_{m(t)}(x_{s}, a_{s}) - \mu(x_{s}, a_{s}) \big)^{2}|\filtra_{t-1}
				\bigg]\\
				=&
				\sum_{a_{s}\in \cA} p_{m(s)}(a_{s}|x_{s})\bigg[\whm_{m(t)}(x_{s}, a_{s}) - \mu(x_{s}, a_{s}) \bigg]^{2}\\
				&\geq 
				p_{m(s)}(\pi(x_{s})|x_{s})\bigg[\whm_{m(t)}(x_{s}, \pi(x_{s})) - \mu(x_{s}, \pi(x_{s})) \bigg]^{2}\\
				&=p_{m(s)}(\pi(x_{s})|x_{s})\Delta_{x_{s}}^{2}
			\end{aligned}
		\end{equation}
		where the first inequality holds by the kernel and squared terms both positive and ignoring other actions $a_{s} \neq \pi(x_{s})$.
		Then we can take a sum of regret difference over the epoch $m$ and multiply it by the maximum expected inverse probability $\cV_{t}(\pi)$, defined in Eq.\ref{eq: maximum expected inverse probability}. 
		For the start of the epoch $m(t)$, we define $s_{0} = \tau_{m(t)-1} + 1$ and assume $m(t) > m_{0}$, we have
		\begin{equation}
			\begin{aligned}
				&\cV_{t}(\pi)\sum_{s = s_{0}}^{\tau_{m(t)}-1}\bE_{x_{s}, a_{s}}\bigg[
				\bigg(\whm_{m(t)}(x_{s}, a_{s}) - \mu(x_{s}, a_{s}) \bigg)^{2}|\filtra_{t-1}
				\bigg]\\
				&\geq
				\sum_{s = s_{0}}^{\tau_{m(t)}-1}
				V(p_{m(s)}, \pi)\bE_{x_{s}, a_{s}}\bigg[
				\bigg(\whm_{m(t)}(x_{s}, a_{s}) - \mu(x_{s}, a_{s}) \bigg)^{2}|\filtra_{t-1}
				\bigg]\\
				&=
				\sum_{s = s_{0}}^{\tau_{m(t)}-1}
				\bE_{x_{s}}\bigg[\frac{1}{p_{m(s)}(\pi(x_{s})|x_{s})}\bigg]
				\bE_{x_{s}}\bE_{a_{s}|x_{s}}
				\bigg[
				\bigg(\whm_{m(t)}(x_{s}, a_{s}) - \mu(x_{s}, a_{s}) \bigg)^{2}|\filtra_{t-1}
				\bigg]
				\\
				&\geq
				\sum_{s = s_{0}}^{\tau_{m(t)}-1}
				\Bigg(\bE_{x_{s}}\Bigg[
				\sqrt{
					\frac{1}{p_{m(s)}(\pi(x_{s})|x_{s})}\bE_{a_{s}|x_{s}}\bigg[
					\bigg(\whm_{m(t)}(x_{s}, a_{s}) - \mu(x_{s}, a_{s}) \bigg)^{2}|\filtra_{t-1}
				}
				\bigg]
				\Bigg]\Bigg)^{2}
			\end{aligned}
		\end{equation}
		where the first inequality from the definition of $\cV_{t}(\pi)$ and the second follows the Cauchy-Schwarz inequality. By the above inequality from Eq.\ref{eq: conditional prediction error}, we have the following
		\begin{equation}
			\begin{aligned}
				&\geq
				\sum_{s = s_{0}}^{\tau_{m(t)}-1}
				\Bigg(\bE_{x_{s}}\Bigg[
				\sqrt{
					\frac{1}{p_{m(s)}(\pi(x_{s})|x_{s})}p_{m(s)}(\pi(x_{s})|x_{s})\Delta_{x_{s}}^{2}}
				\Bigg]\Bigg)^{2}\\
				&=
				\sum_{s = s_{0}}^{\tau_{m(t)}-1}\bigg(\bE_{x_{s}}[|\Delta_{x_{s}}|]\bigg)^{2}\\
				&\geq
				\sum_{s = s_{0}}^{\tau_{m(t)}-1}|\wh{R}_{t}(\pi) - R_{t}(\pi)|^{2}\\
				&=(\tau_{m(t)}-s_{0})|\wh{R}_{t}(\pi) - R_{t}(\pi)|^{2}
			\end{aligned}
		\end{equation}
		and the last inequality follows from the convexity of the $l_{1}$ norm and last equality holds by the definition of $\wh{R}_{t}(\pi)$ and $R_{t}(\pi)$. So we have
		\begin{equation}
			\begin{aligned}
				|\wh{R}_{t}(\pi) - R_{t}(\pi)| &\leq \sqrt{\cV_{t}(\pi)}\sqrt{\frac{\sum_{s = s_{0}}^{\tau_{m(t)}-1}\bE_{x_{s}, a_{s}}\bigg[
						\bigg(\whm_{m(t)}(x_{s}, a_{s}) - \mu(x_{s}, a_{s}) \bigg)^{2}|\filtra_{t-1}
						\bigg]}{\tau_{m(t)} - \tau_{m(t)-1}}} \\
				&\leq 
				\frac{4\sqrt{\cV_{t}(\pi)}\sqrt{K}}{\gamma_{m(t)}}
			\end{aligned}
		\end{equation}
		where the last inequality holds by the definition of the exploitation rate of $\gamma_{m(t)}$.
	\end{proof}
	
	The following Lemma is a key step to control the expected inverse probability $V(p_{m(t)}, \pi)$.
	\begin{lem}
		\label{lem: upper bound of divergence-explore}
		Fix any epoch $m \geq m_{0} \in \bN$, we have:
		\begin{equation}
			\begin{aligned}
				V(p_{m(t)}, \pi) \leq K + \gamma_{m}\hreg(\pi)
			\end{aligned}
		\end{equation}
	\end{lem}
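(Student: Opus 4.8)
The plan is to prove the inequality pointwise in the context and then integrate against $\cD_{\cX}$, exploiting that the sampling kernel $p_{m}(\cdot|x_t)$ is given explicitly in terms of the estimated gaps in Eq.\eqref{eq: sampling distribution}. Fix the epoch $m$ and a context $x_t$, write $a = \pi(x_t)$, and let $b_t = \argmaxE_{i\in[K]}\whm_{m(t)}(x_t,i)$ be the best predictive arm at $x_t$. The object I want to control is $1/p_{m}(a|x_t)$, and I would split into the two cases $a \neq b_t$ and $a = b_t$ that mirror the two branches of the kernel. This per-context reduction is the natural move because $V(p_{m(t)},\pi)$ in Eq.\eqref{eq: exp inverse probability} is itself an expectation of $1/p_{m}(\pi(x_t)|x_t)$.

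In the case $a \neq b_t$ the bound is immediate, in fact an equality: by definition $p_{m}(a|x_t) = 1/[K + \gamma_{m}(\whm_{m(t)}(x_t,b_t) - \whm_{m(t)}(x_t,a))]$, so $1/p_{m}(a|x_t) = K + \gamma_{m}(\whm_{m(t)}(x_t,b_t)-\whm_{m(t)}(x_t,a))$, whose second term is exactly $\gamma_m$ times the per-context implicit regret $\whm_{m(t)}(x_t,b_t)-\whm_{m(t)}(x_t,\pi(x_t))$. In the case $a = b_t$ the per-context implicit regret vanishes, so it suffices to show $1/p_{m}(b_t|x_t) \le K$. This is where the gap-sampling design is used: since $b_t$ maximizes $\whm_{m(t)}(x_t,\cdot)$ and $\gamma_{m} > 0$, every term $1/[K + \gamma_{m}(\whm_{m(t)}(x_t,b_t) - \whm_{m(t)}(x_t,i))]$ with $i \neq b_t$ is at most $1/K$, and there are $K-1$ of them; hence $p_{m}(b_t|x_t) = 1 - \sum_{i \neq b_t} p_{m}(i|x_t) \ge 1 - (K-1)/K = 1/K$, giving $1/p_{m}(b_t|x_t) \le K$. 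This lower bound $p_{m}(b_t|x_t)\ge 1/K$ is the only real content of the argument, and it is precisely the observation already invoked in the ``Part I Reward Gap'' step of the exploitation-stage proof.

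Combining the two cases yields the single pointwise inequality
\begin{equation*}
\frac{1}{p_{m}(\pi(x_t)|x_t)} \le K + \gamma_{m}\big(\whm_{m(t)}(x_t,b_t) - \whm_{m(t)}(x_t,\pi(x_t))\big),
\end{equation*}
valid for every $x_t$: when $a=b_t$ the right-hand side equals $K$ and the left-hand side is $\le K$, and when $a\neq b_t$ it is an equality. Integrating this against $x_t \sim \cD_{\cX}$ and using linearity, the right-hand side becomes $K + \gamma_{m}\,\bEx[\whm_{m(t)}(x_t,b_t) - \whm_{m(t)}(x_t,\pi(x_t))] = K + \gamma_{m}\hreg(\pi)$ by the definition of $\hreg(\pi)$, while the left-hand side is $V(p_{m(t)},\pi)$ by Eq.\eqref{eq: exp inverse probability}, which gives the claim. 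I do not anticipate a genuine obstacle; the only point requiring care is verifying $p_{m}(b_t|x_t)\ge 1/K$ uniformly, so that the $a=b_t$ case folds into the same bound as $a\neq b_t$ and the two branches can be written as one inequality before taking expectation.
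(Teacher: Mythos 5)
Your proposal is correct and follows essentially the same route as the paper's proof: the same two-case pointwise bound on $1/p_{m}(\pi(x_t)|x_t)$ (equality when $\pi(x_t)\neq b_t$, and the bound $p_{m}(b_t|x_t)\ge 1/K$ when $\pi(x_t)=b_t$), followed by taking the expectation over $x_t$. The only difference is that you spell out the justification of $p_{m}(b_t|x_t)\ge 1/K$ explicitly, which the paper states tersely.
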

	\begin{proof}
		For any policy $\pi \in \Psi$, given any context $x_{t} \in \cX$, we have 
		\begin{equation}
			\frac{1}{p_{m(t)}(\pi(x_{t})|x_{t})} \begin{cases}
				&= K + \gamma_{m}(\whm_{m(t)}(x_{t}, b_{t}) -  \whm_{m(t)}(x_{t}, \pi(x_{t}))),  \text{ if } \pi(x_{t}) \neq  b_{t};\\
				&\leq \frac{1}{1/K} = K = K + \gamma_{m}(\whm_{m(t)}(x_{t}, b_{t}) -  \whm_{m(t)}(x_{t}, \pi(x_{t}))), \text{ if } \pi(x_{t}) =  b_{t}.
			\end{cases}
		\end{equation}
		Based on the definition of the expected inverse probability in Eq.\ref{eq: exp inverse probability}, we have
		\begin{equation}
			\begin{aligned}
				V(p_{m(t)}, \pi) 
				&= \bEx[\frac{1}{p_{m(t)}(\pi(x_{t})|x_{t})}] \\
				&\leq 
				K + \gamma_{m}\bEx\bigg[\whm_{m(t)}(x_{t}, b_{t}) -  \whm_{m(t)}(x_{t}, \pi(x_{t}))\bigg]\\
				&=K + \gamma_{m}\hreg(\pi),
			\end{aligned}
		\end{equation}
		where the inequality follows by the condition if $\pi(x_{t}) = b_{t}$ and the last equation is followed by the definition of the expected regret in the measure of empirical $\whm_{m(t)}$.
	\end{proof}
	
	The following lemma provides the key step to provide the regret upper bound.
	\begin{lem}
		\label{lem: regret upper bound-1}
		For any $T\in \bN$, wiht probability at least $1-\delta$ , the expected regret of \recon{} after $T$ rounds is at most $\tau_{m_{0}-1} + 206K\sum_{t=\tau_{m_{0}-1}+1 }^{T} 1/\gamma_{m(t)} + \sqrt{8(T-\tau_{m_{0}-1}) \log(2/\delta)}$.
	\end{lem}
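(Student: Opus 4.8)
The plan is to split the cumulative regret at the epoch boundary $\tau_{m_0-1}$ separating the cold start from the exploitation stage. For the rounds $t \le \tau_{m_0-1}$, since all rewards lie in $[0,1]$ the per-round regret is at most $1$, so this initial block contributes at most $\tau_{m_0-1}$, which is exactly the first term of the claimed bound. The substance of the proof is then to control the realized exploitation-stage regret $\sum_{t=\tau_{m_0-1}+1}^{T}\big(y_t(\pi_\mu) - y_t(a_t)\big)$.

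First I would separate the conditional drift from the martingale noise. By Lemma \ref{lem: martingale diff seq}, the centered increments $D_t := \big(y_t(\pi_\mu) - y_t(a_t)\big) - \sum_{\pi \in \Psi} Q_{m(t)}(\pi)\,\reg(\pi)$ form a martingale difference sequence adapted to the filtration $\mathfrak{S}_t$, because the conditional mean of $y_t(\pi_\mu) - y_t(a_t)$ equals $\sum_{\pi} Q_{m(t)}(\pi)\reg(\pi)$. Hence the exploitation regret decomposes as $\sum_{t} \sum_{\pi} Q_{m(t)}(\pi)\reg(\pi) + \sum_{t} D_t$, isolating a predictable drift term and a zero-mean fluctuation term.

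Next I bound the drift term round by round. On the event $\Lambda_2$ of Lemma \ref{lem:prediction error upper bound}, Lemma \ref{lem: implicit regret relationship} supplies $\reg(\pi) \le 2\,\hreg(\pi) + C_0 K/\gamma_{m(t)}$ for \emph{every} policy $\pi$. Averaging this against the measure $Q_{m(t)}$ and invoking Lemma \ref{lem: upper bound of empirical regret}, which bounds $\sum_{\pi} Q_{m(t)}(\pi)\hreg(\pi)$ by $K/\gamma_{m(t)}$, yields the clean per-round estimate $\sum_{\pi} Q_{m(t)}(\pi)\reg(\pi) \le (2 + C_0)K/\gamma_{m(t)} = 206\,K/\gamma_{m(t)}$, using $C_0 = 204$. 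Summing over the exploitation rounds produces the middle term $206K\sum_{t=\tau_{m_0-1}+1}^{T} 1/\gamma_{m(t)}$.

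Finally, for the martingale noise $\sum_t D_t$, each increment lies in an interval of width at most $4$, since $y_t(\pi_\mu) - y_t(a_t) \in [-1,1]$ and its conditional mean also lies in $[-1,1]$. Thus Azuma--Hoeffding (Lemma \ref{lem:Azuma ineq}) with $\sum_k (b_k - a_k)^2 \le 16(T-\tau_{m_0-1})$ controls $|\sum_t D_t|$ by $\sqrt{8(T-\tau_{m_0-1})\log(2/\delta)}$, giving the last term. I expect the main obstacle to be the bookkeeping of the two sources of randomness: the drift bound is valid only on $\Lambda_2$ (which itself holds only with high probability from Lemma \ref{lem:prediction error upper bound}), so the concentration step and the union bound over $\Lambda_2$ must be arranged so that the prediction-error failure and the Azuma tail together leave probability at least $1-\delta$, while the constants $2+C_0 = 206$ and the width-$4$ Azuma scaling are kept consistent with the stated bound.
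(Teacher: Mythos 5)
Your proposal is correct and follows essentially the same route as the paper's proof: the same decomposition into a cold-start block bounded by $\tau_{m_0-1}$, a predictable drift term controlled on $\Lambda_2$ via Lemma \ref{lem: implicit regret relationship} and Lemma \ref{lem: upper bound of empirical regret} to get $(2+C_0)K/\gamma_{m(t)}=206K/\gamma_{m(t)}$, and a martingale fluctuation term handled by Azuma--Hoeffding with increments of width $4$. The union-bound bookkeeping you flag ($\delta/2$ for $\Lambda_2$ plus $\delta/2$ for the Azuma tail) is exactly how the paper arranges it.
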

	
	\begin{proof}
		For each round $t \geq \tau_{m_{0}-1}+1$, define $M_{t}:= y_{t}(\pi_{\mu}) - y_{t}(a_{t}) - \sum_{\pi \in \Psi}Q_{m}(\pi)\rreg(\pi)$ and $M_{t}$ is a martingale difference sequence since $\bE_{x_{t}, y_{t}, a_{t}}\big[M_{t}|\filtra_{t-1}\big] = 0$ provided by Lemma \ref{lem: martingale diff seq}. So we have
		\begin{equation}
			\begin{aligned}
				\bE_{x_{t}, y_{t}, a_{t}}\bigg[y_{t}(\pi_{\mu}) - y_{t}(a_{t})|\textfrak{S}_{t-1} \bigg] 
				= \sum_{\pi \in \Psi} Q_{m}(\pi) \rreg(\pi),
			\end{aligned}
		\end{equation}
		Since $|M_{t}| \leq 2$ by $y_{t} \in [0,1]$, by the Azuma-Hoeffding's inequality from Lemma \ref{lem:Azuma ineq},
		\begin{equation}
			\begin{aligned}
				\sum_{t=\tau_{m_{0}-1}+1 }^{T}M_{t} \leq \sqrt{8(T-\tau_{m_{0}-1}) \log(\frac{2}{\delta})}
			\end{aligned}
		\end{equation}
		with probability at least $1-\delta/2$. By Lemma \ref{lem:prediction error upper bound}, we can upper bound the regret  the  with probability at least $1 -\delta/2$,
		\begin{equation}
			\begin{aligned}
				&\sum_{t=\tau_{m_{0}-1}+1 }^{T} \bE\bigg[y_{t}(\pi_{\mu}) - y_{t}(a_{t})|\textfrak{S}_{t-1}\bigg] \\
				&\leq
				\sum_{t=\tau_{m_{0}-1}+1 }^{T}\sum_{\pi \in \Psi} Q_{m}(\pi) \rreg(\pi) + \sqrt{8(T-\tau_{m_{0}-1}) \log(\frac{2}{\delta})}\\
				&\leq
				\sum_{t=\tau_{m_{0}-1}+1 }^{T}\sum_{\pi \in \Psi} Q_{m}(\pi) (2\hreg(\pi) + \frac{C_{0}K}{\gamma_{m}}) + \sqrt{8(T-\tau_{m_{0}-1}) \log(\frac{2}{\delta})}\\
				&=        \sum_{t=\tau_{m_{0}-1}+1 }^{T}\sum_{\pi \in \Psi} [2Q_{m}(\pi)\hreg(\pi)+ Q_{m}(\pi)\frac{C_{0}K}{\gamma_{m}}] + \sqrt{8(T-\tau_{m_{0}-1}) \log(\frac{2}{\delta})}\\
				&\leq
				\sum_{t=\tau_{m_{0}-1}+1 }^{T}[\frac{2K}{\gamma_{m}} +  \frac{C_{0}K}{\gamma_{m}}] + \sqrt{8(T-\tau_{m_{0}-1}) \log(\frac{2}{\delta})}\\
				&\leq
				206\sum_{t=\tau_{m_{0}-1}+1 }^{T} \frac{K}{\gamma_{m(t)}} + \sqrt{8(T-\tau_{m_{0}-1}) \log(\frac{2}{\delta})}
			\end{aligned}
		\end{equation}
		where the second inequality holds by Lemma \ref{lem: implicit regret relationship} to control the implicit expected regret in $\pi$ and the third inequality holds by Lemma \ref{lem: upper bound of empirical regret} controlling the empirical regret.
	\end{proof}
	
	\subsection{Proof of Theorem \ref{thm: regret upper bound}}
	\begin{proof}
		By Lemma \ref{lem: regret upper bound-1}, with probability $1-\delta$, we have
		\begin{equation}
			\begin{aligned}
				&\sum_{t=1}^{T} \bEx(y_{t}(\pi_{\mu}) - y_{t}(a_{t})) \\
				&\leq        \tau_{m_{0}-1}+\sum_{t=\tau_{m_{0}-1}+1 }^{T}\frac{206K}{\gamma_{m(t)}} + \sqrt{8(T-\tau_{m_{0}-1}) \log(\frac{2}{\delta})}\\
				&\leq
				\tau_{m_{0}-1} + 52\sum_{m=m_{0}}^{m_{1}}\sqrt{K\cE_{\cF}(\tau_{m-2}, \tau_{m-1})}(\tau_{m}- \tau_{m-1})+\sqrt{8(T-\tau_{m_{0}-1}) \log(\frac{2}{\delta})}.
			\end{aligned}
		\end{equation}
		With the assumption that the prior distribution $\cP_{0}$ is normal and the variance is increasing in order $\cO(t)$, by $\tau_{m} = 2^{m}$, we have
		\begin{equation}
			\begin{aligned}
				&=\tau_{m_{0}-1}+52\sigma\sqrt{Kd}\sum_{m=m_{0}}^{m_{1}} \cO(\frac{1}{\sqrt{2^{m-2}}})2^{m-1} + \sqrt{8(T-\tau_{m_{0}-1}) \log(\frac{2}{\delta})}\\
				&=\tau_{m_{0}-1}+52\sigma\sqrt{Kd}\sum_{m=m_{0}}^{m_{1}} \cO(\sqrt{2^{m}}) + \sqrt{8(T-\tau_{m_{0}-1}) \log(\frac{2}{\delta})}\\
				&\leq\tau_{m_{0}-1}+52\sigma\sqrt{Kd}\int_{m_{0}}^{\log_{2}(T)} 2^{\frac{x}{2}} dx + \sqrt{8(T-\tau_{m_{0}-1}) \log(\frac{2}{\delta})}\\
				&\leq 
				\tau_{m_{0}-1}+\frac{104}{\ln{2}}\sigma\sqrt{KdT} + \sqrt{8(T-\tau_{m_{0}-1}) \log(\frac{2}{\delta})}\\
				&<\tau_{m_{0}-1}+151\sigma\sqrt{KdT} + \sqrt{8(T-\tau_{m_{0}-1}) \log(\frac{2}{\delta})}
			\end{aligned}
		\end{equation}
	\end{proof}

	\section{Additional Experiments Results}
	\label{app-sec: more-sim}
	
	\subsection{Simulation Studies}
	\label{sec: simulation}
	The goal of this section is to demonstrate that $\recon$ algorithm can satisfy the \dbic{} constraint and simutaneously secure the sublinear regret.
	For all settings, the following parameters need to be specified (a) \textit{environment parameters}: time horizon $T$,  number of arms $K$, feature dimension $d$, and noise level $\sigma$;
	(b) \textit{\dbic{} parameters}: budget $\epsilon$, prior-posterior minimum gap constants $\tau_{\cP_{0}}$ and $\rho_{\cP_{0}}$;
	(c) \textit{prior belief parameters}: prior  $\cP_{0}$, where we assume the prior follows the normal distribution.
	
	\textbf{Setting 1 (Environment Effects)}: 
	We consider $\recon$'s robustness in terms of different $K = [2,5,10]$, $d = [3,5,10]$. For rest parameters, we set $T = 10^{5}$, $\sigma=0.05$, $\epsilon=0.05, \tau_{\cP_{0}} = 0.01$, and $\rho_{\cP_{0}} = 0.95$. 
	The prior are set to be $\beta_{i,0} = \V{0}_{d}$ and $\Sigma_{i,0} = 1/5\M{I}_{d}$. 
	
	\textbf{Setting 2 (Ad-hoc Design)}: This scenario demonstrates the results when the platform adopts an ad-hoc approach to \(\mathsf{N}(\epsilon)\) without following the guidelines of Theorem \ref{thm1: sampling requirements}. Here, \(\mathsf{N}\) is set to \(\{10, 100, 1000\}\). All other parameters remain consistent with those specified in Setting 1. 

	\subsection{Discussion of Setting 1 and Setting 2}
	\label{app: add diss set1 and 2}
	
	\begin{figure}[t]
		\centering
		\begin{subfigure}{0.49\textwidth}  
			\centering
			\includegraphics[scale=0.2]{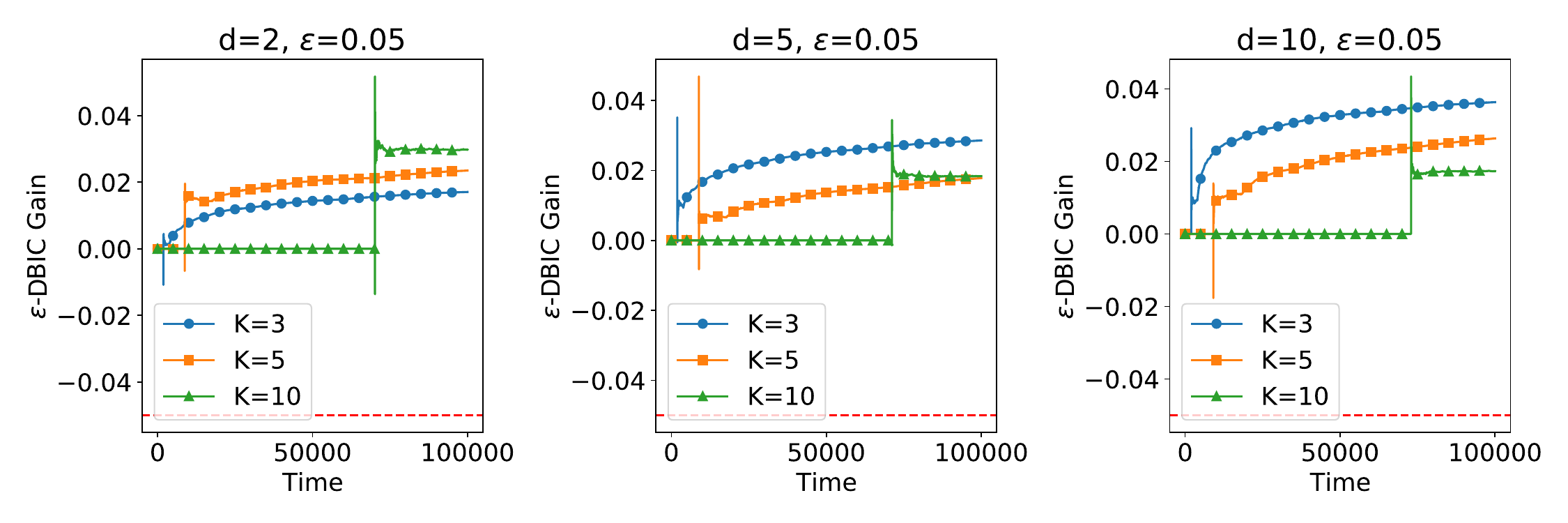}  
			\label{fig:set_1_gain}
		\end{subfigure}
		\hspace{-4mm}
		\begin{subfigure}{0.49\textwidth}
			\centering
			\includegraphics[scale=0.2]{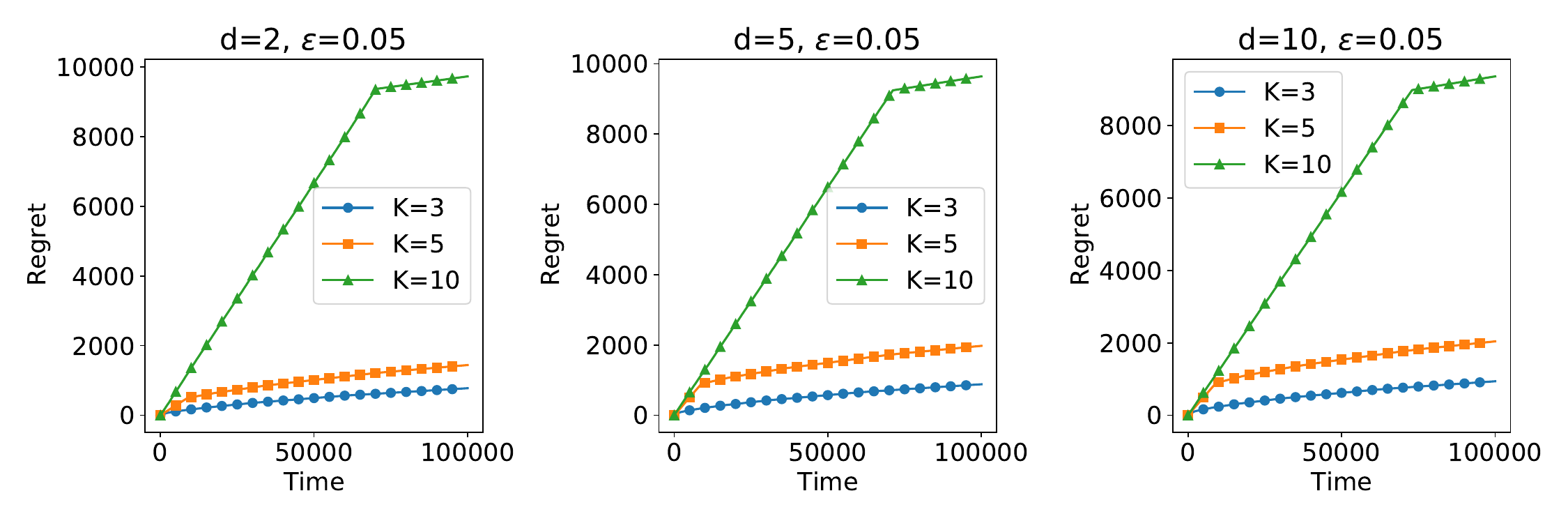} 
			\label{fig:set_1_regret}
		\end{subfigure}
		\begin{subfigure}{0.49\textwidth}  
			\centering
			\includegraphics[scale=0.2]{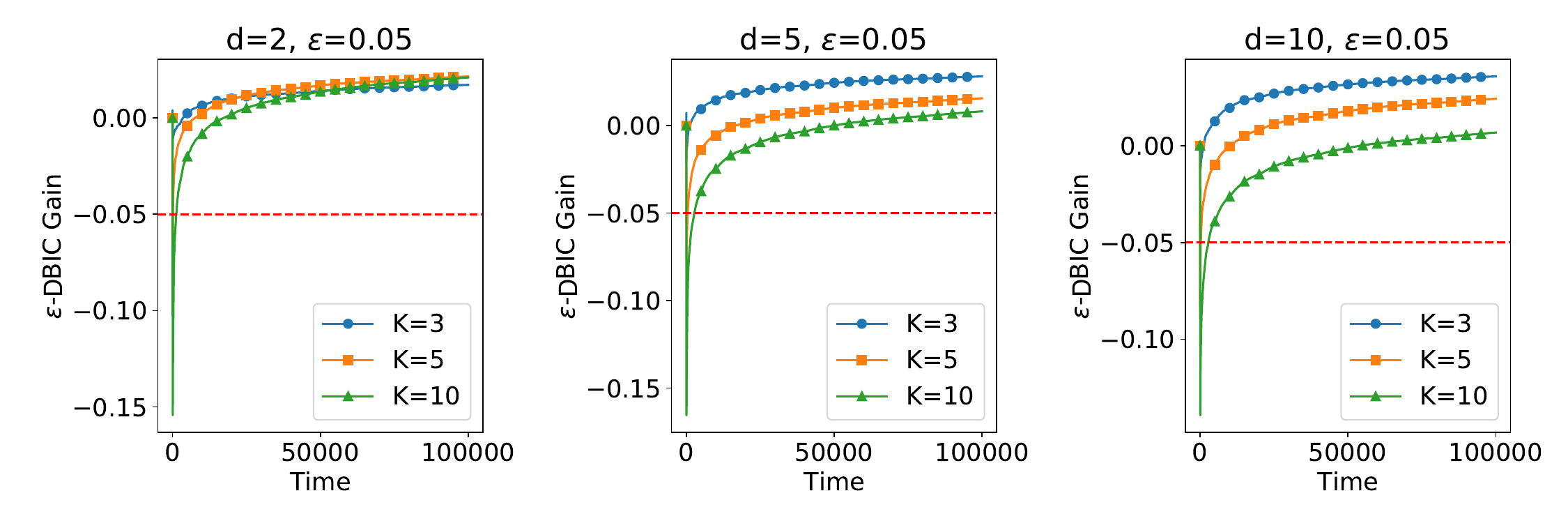}  
			\label{fig:set_4_gain_10}
		\end{subfigure}
		\hspace{-4mm}
		\begin{subfigure}{0.49\textwidth}
			\centering
			\includegraphics[scale=0.2]{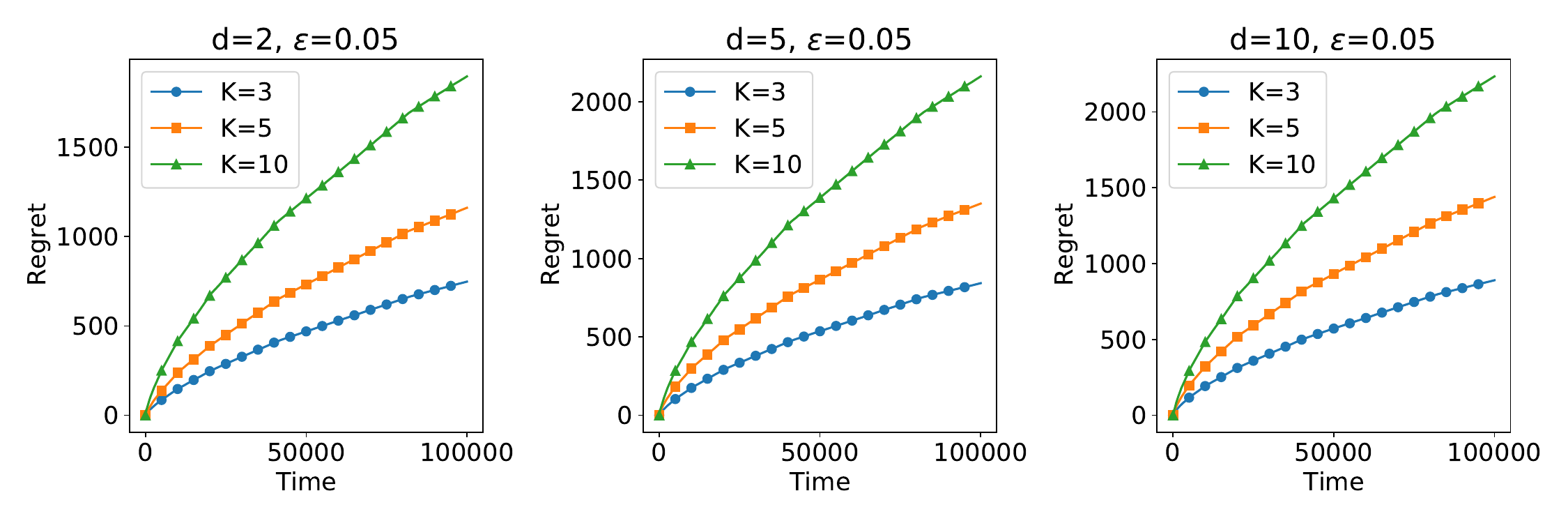} 
			\label{fig:set_4_regret_10}
		\end{subfigure}    
		\caption{Incentive gain (left) and cumulative regret (right) of Setting 1 (upper) and Setting 2 (lower).}
		\label{fig:set_1}
	\end{figure}
	\textbf{Analysis of Setting 1 (Upper part of Figure \ref{fig:set_1})}: Different columns in the figure represent various dimensions \(d\), with the first three columns illustrating the \dbic{} gain and the last three columns detailing the regrets observed. Our findings indicate that \recon{} satisfies the \dbic{} property, as evidenced by the gain consistently exceeding -0.05 (dashed line), or budget not been used up. During the Exploitation stage, there is an observable upward trend in the instantaneous \dbic{} gain, suggesting that the recommendation system increasingly gains trust from customers (larger $\epsilon$ gain). The right segment of the figure explores the relationship between regret, \(d\), and \(K\). It was observed that the regret for \(K=10\) significantly exceeds that for \(K=3\) and \(K=5\). This discrepancy arises because, to maintain the \dbic{} property, the duration of the cold start stage increases cubically with \(K\), representing a substantial cost during this initial phase. In contrast, the impact of \(d\) on cost is relatively minimal, as articulated in Theorem \ref{thm1: sampling requirements}.

	\textbf{Analysis of Setting 2 (Lower part of Figure \ref{fig:set_1})}: This setting mirrors Setting 1 in terms of overall configuration. However, in this scenario, the platform does not adhere to the sample size requirements needed to satisfy the \dbic{} property, opting instead for an arbitrary fixed cold start length of \(\mathsf{N}(\epsilon) = \{10, 100, 1000\}\). The simulation results for \(\mathsf{N}(\epsilon) = \{100, 1000\}\) are detailed in Appendix $\S$\ref{app-sec: more-sim}. When compared with the regret observed in Setting 1, which is at the level of \(10^5\), the regret in Setting 2 is considerably lower, at approximately \(10^3\). However, in terms of \dbic{} gain, Setting 1 consistently shows positive gains, fully complying with the \dbic{} property, whereas Setting 2 experiences periods of negative gains, particularly when the number of arms is high ($K=10$). This negative trend is more pronounced as \(d\) increases, making it increasingly challenging to estimate an appropriate cold start length, as further discussed in Appendix \(\S\)\ref{app-sec: more-sim}. Notably, even with \(\mathsf{N}(\epsilon)=1000\), the \dbic{} gain remains negative for most instances when \(d=5\) or 10.

	\subsection{Additional Simulation Settings and Results Analysis}

	\textbf{Setting 3 ($\epsilon$ effects)}: We consider the $\recon$ algorithm's effect over different budget parameters with $\epsilon = [0.01, 0.03, 0.05]$ and prior variances $\Sigma_{i,0} = 1/\lambda \M{I}_{d}=[1/3, 1/5, 1/10]\M{I}_{d}$. For rest parameters, $T = 5\times 10^{4}$, $K = 5$, $d=5$, $\sigma=0.05$, and $\beta_{i,0} = \V{0}_{d}, \forall i \in [K]$.
	
	\textbf{Setting 4 (Prior Decay and Prior-Posterior Gap Assumption Mis-specification Effects)}: 
	We also test the robustness of $\recon$ algorithm when the Assumption \ref{ass: inflatting variance} is mis-specified. Here we assume $\Sigma_{i,0} = [0.02, 0.04, 0.1]\M{I}$ and the prior decay rate are \textit{linear decay, square root decay, and log decay}. We set the environment parameters to be $T = 5\time 10^{4}, K=5, d=5$. We set $\epsilon = 0.05$ and the prior mean $\beta_{1, 0} = [1,1,1,1,1]\Tra$ and $\beta_{i, 0} = [0,0,0,0,0]\Tra$.

	\begin{figure}[htbp]
		\centering
		\begin{subfigure}{0.49\textwidth}  
			\centering
			\includegraphics[scale=0.22]{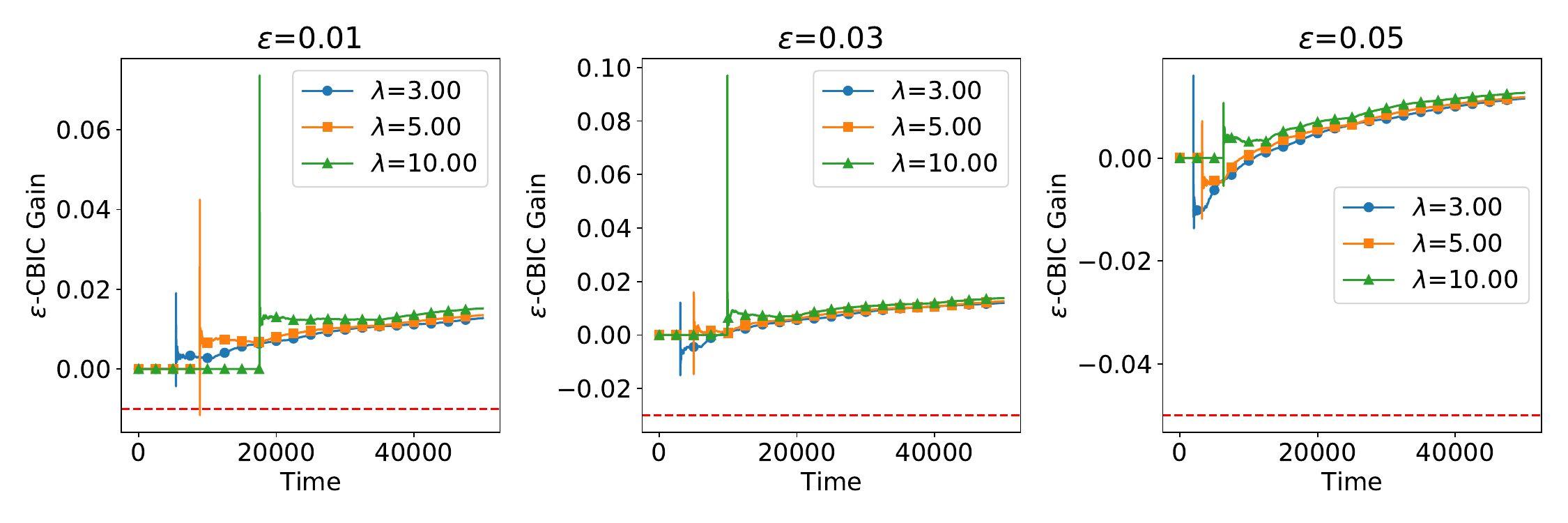}  
			\label{fig:set_2_gain}
		\end{subfigure}
		\hfill  
		\begin{subfigure}{0.49\textwidth}
			\centering
			\includegraphics[scale=0.22]{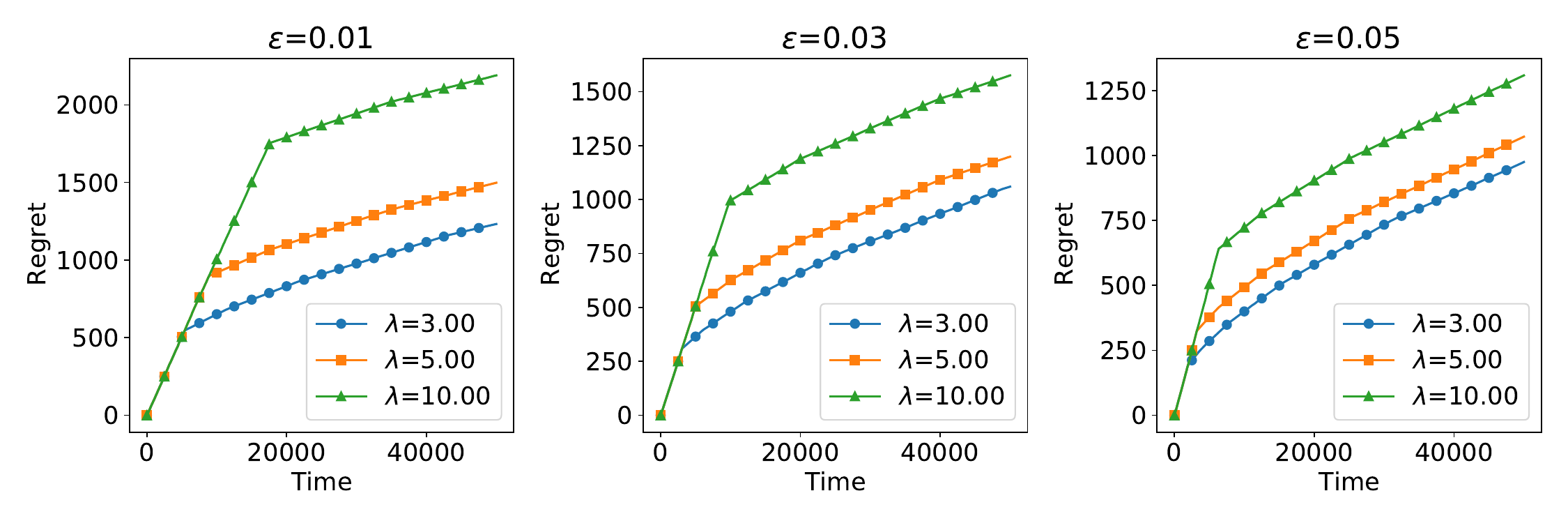} 
			\label{fig:set_2_regret}
		\end{subfigure}
		\caption{Gain (top) and Regret (bottom) of Setting 2.}
		\label{fig:set_2}
	\end{figure}
	
	\textbf{Analysis of Setting 1 (Upper part of Figure \ref{fig:set_1})}: Different columns in the figure represent various dimensions \(d\), with the first three columns illustrating the \dbic{} gain and the last three columns detailing the regrets observed. Our findings indicate that \recon{} satisfies the \dbic{} property, as evidenced by the gain consistently exceeding -0.05 (dashed line), or budget not been used up. During the Exploitation stage, there is an observable upward trend in the instantaneous \dbic{} gain, suggesting that the recommendation system increasingly gains trust from customers (larger $\epsilon$ gain). The right segment of the figure explores the relationship between regret, \(d\), and \(K\). It was observed that the regret for \(K=10\) significantly exceeds that for \(K=3\) and \(K=5\). This discrepancy arises because, to maintain the \dbic{} property, the duration of the cold start stage increases cubically with \(K\), representing a substantial cost during this initial phase. In contrast, the impact of \(d\) on cost is relatively minimal, as articulated in Theorem \ref{thm1: sampling requirements}.

	\textbf{Analysis of Setting 2 (Lower part of Figure \ref{fig:set_1})}: This setting mirrors Setting 1 in terms of overall configuration. However, in this scenario, the platform does not adhere to the sample size requirements needed to satisfy the \dbic{} property, opting instead for an arbitrary fixed cold start length of \(\mathsf{N}(\epsilon) = \{10, 100, 1000\}\). The simulation results for \(\mathsf{N}(\epsilon) = \{100, 1000\}\) are detailed in Appendix $\S$\ref{app-sec: more-sim}. When compared with the regret observed in Setting 1, which is at the level of \(10^5\), the regret in Setting 2 is considerably lower, at approximately \(10^3\). However, in terms of \dbic{} gain, Setting 1 consistently shows positive gains, fully complying with the \dbic{} property, whereas Setting 2 experiences periods of negative gains, particularly when the number of arms is high ($K=10$). This negative trend is more pronounced as \(d\) increases, making it increasingly challenging to estimate an appropriate cold start length, as further discussed in Appendix \(\S\)\ref{app-sec: more-sim}. Notably, even with \(\mathsf{N}(\epsilon)=1000\), the \dbic{} gain remains negative for most instances when \(d=5\) or 10.

	\paragraph{Setting 3 - $\epsilon$ Effects Analysis:} In Figure \ref{fig:set_2}, three columns represent different $\epsilon$'s effects over the \dbic{} gain and regret. 
	
	For the top of the figure, we found that $\recon$ can satisfy the \dbic{} property under different $\epsilon$ and $\lambda$'s scenario.
	What's more, all the instantaneous gains have the uplift trend (increasing gain), which shows similar pattern to the setting 1.
	
	The bottom shows the relationship between the regret, $\epsilon$, and the prior variance $\Sigma_{i,0} = 1/\lambda\M{I}_{d}$. We found that the regret of $\Sigma_{i,0}=1/10\M{I}_{d}$ is much larger than the regret of  $\Sigma_{i,0}=1/3\M{I}_{d}$ and $\Sigma_{i,0}=1/5\M{I}_{d}$. 
	The reason is that in order to satisfy  \dbic{} property, the length of the cold start stage is linearly inverse proportion to the order of minimum eigenvalue $\phi_{0}$, which is demonstrated in Theorem \ref{thm1: sampling requirements}. In other words, when the prior variance is small, it means that the customers have strong opinions over arms and the platform needs a long length of the cold start stage to make the $\recon$ algorithm to satisfy the \dbic{} property. 
	In addition, the regret will decreases when $\epsilon$ increases. That is, when the platform wants to avoid long length of the cold start stage, it can sacrifice the $\epsilon$ to avoid a large regret, which is a trade-off between the guarantee of \dbic{} property and the regret.

	\begin{figure}[htbp]
		\centering
		\begin{subfigure}{0.49\textwidth}  
			\centering
			\includegraphics[scale=0.22]{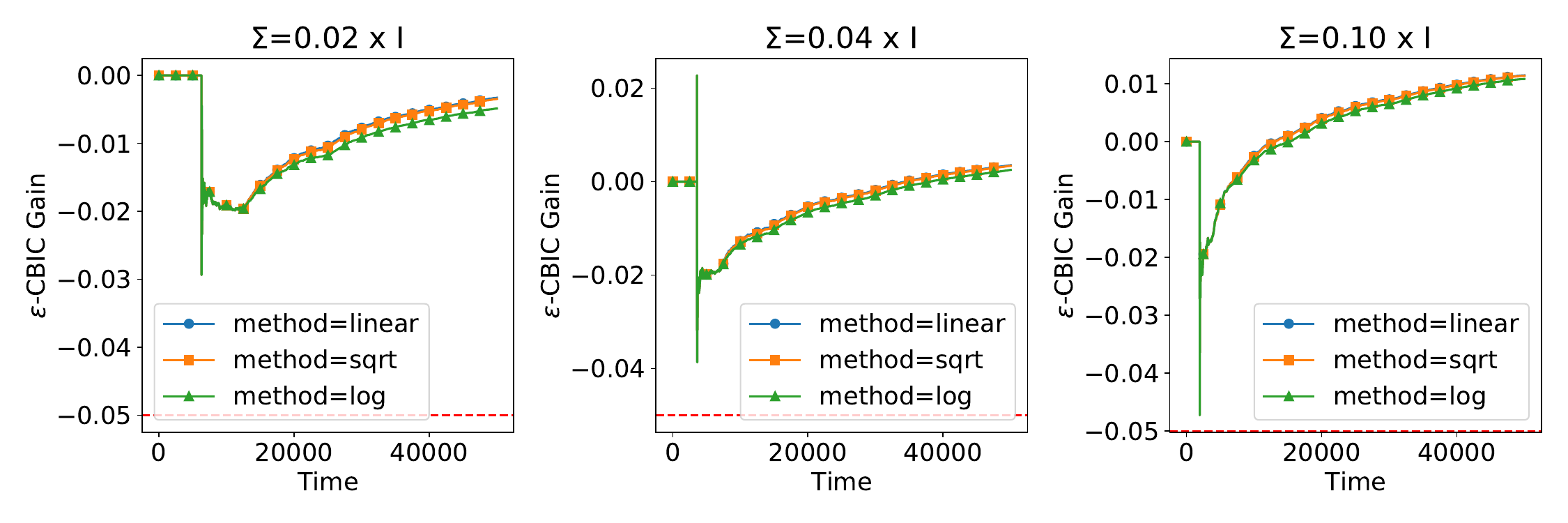}  
			\label{fig:set_3_gain}
		\end{subfigure}
		\hfill  
		\begin{subfigure}{0.49\textwidth}
			\centering
			\includegraphics[scale=0.22]{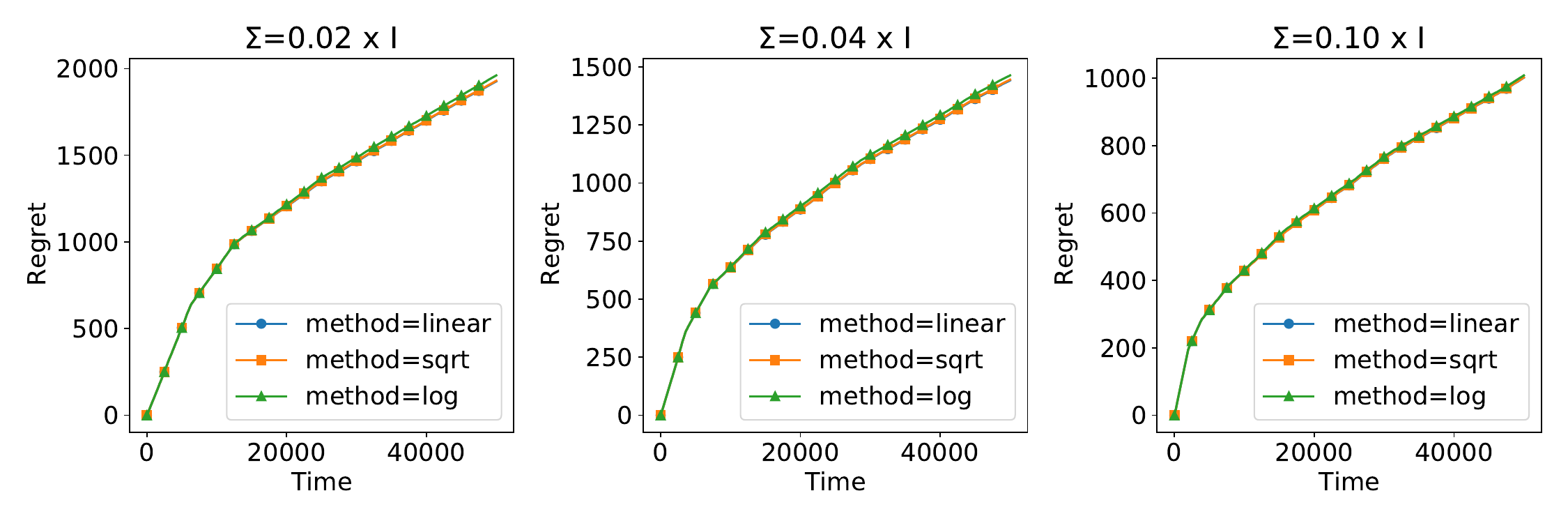} 
			\label{fig:set_3_regret}
		\end{subfigure}
		\caption{Gain (top) and Regret (bottom) of Setting 2.}
		\label{fig:set_3}
	\end{figure}

	\paragraph{Setting 4 - Misspecified Effects Analysis:}
	In Figure \ref{fig:set_3}, the three columns represent different prior margin $\tau_{\cP_{0}}$'s effects over the regret and decay rate mis-specified over the \dbic{} gain. 
	For top figure, we found $\recon$ can still protect the \dbic{} under different $\Sigma$ scenario.
	Besides, we found that all the instantaneous \dbic{} gains still have the uplift trend, which shows similar pattern to the setting 1 and setting 2. And the linear decay rate has the largest \dbic{} gain and as $\Sigma_{i,0}$ increases, the platform gains more.
	
	The second row shows the relationship between the regret and margin, and the decay rate misspecified. We found that in any decay rate that the $\recon$ algorithm employs, the regret of are really similar. 
	The reason is that for any element of  $\beta_{i,0}$ is small within $[0,1]$ and the prior variance is moderate, three decay rates has similar effect. And we found that when variance increases, regret decrease. It indicates that when piror variance is large, the regret difference among three different decay rates is shrinkage. In other words, when costumers do not have strong opinions over arms (variance is large), different decay rates have similar regret effects.

	\begin{figure}[htbp]
		\centering
		\begin{subfigure}{0.49\textwidth}  
			\centering
			\includegraphics[scale=0.22]{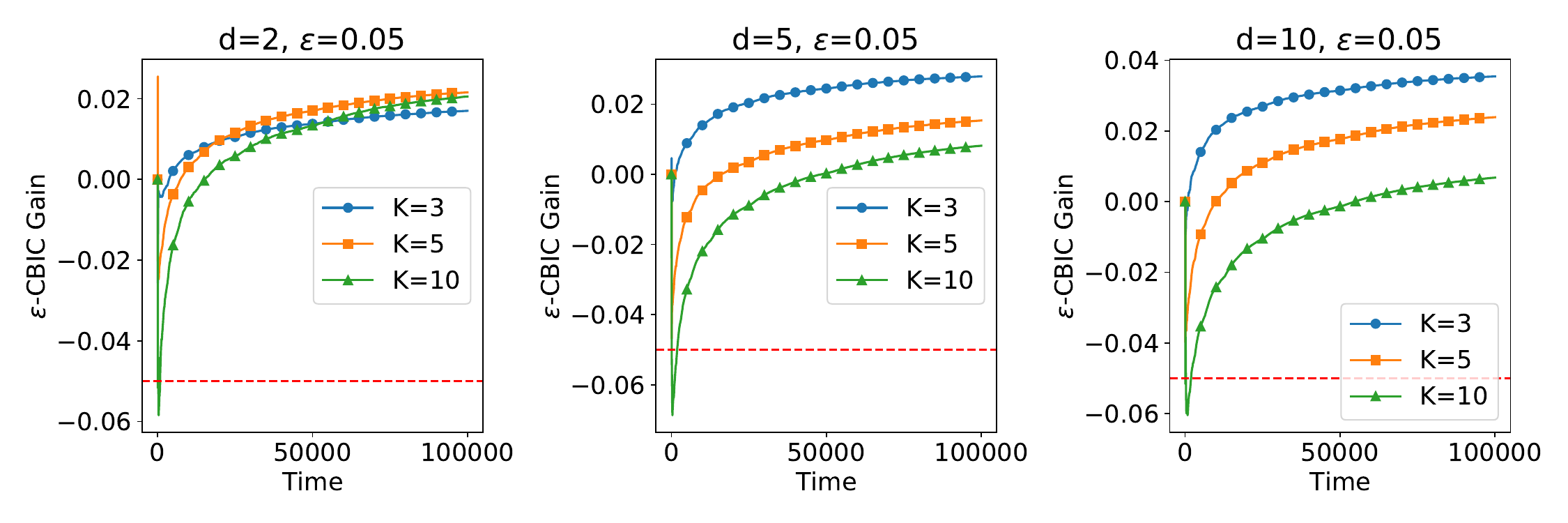}  
			\label{fig:set_2_gain_100}
		\end{subfigure}
		\hfill  
		\begin{subfigure}{0.49\textwidth}
			\centering
			\includegraphics[scale=0.22]{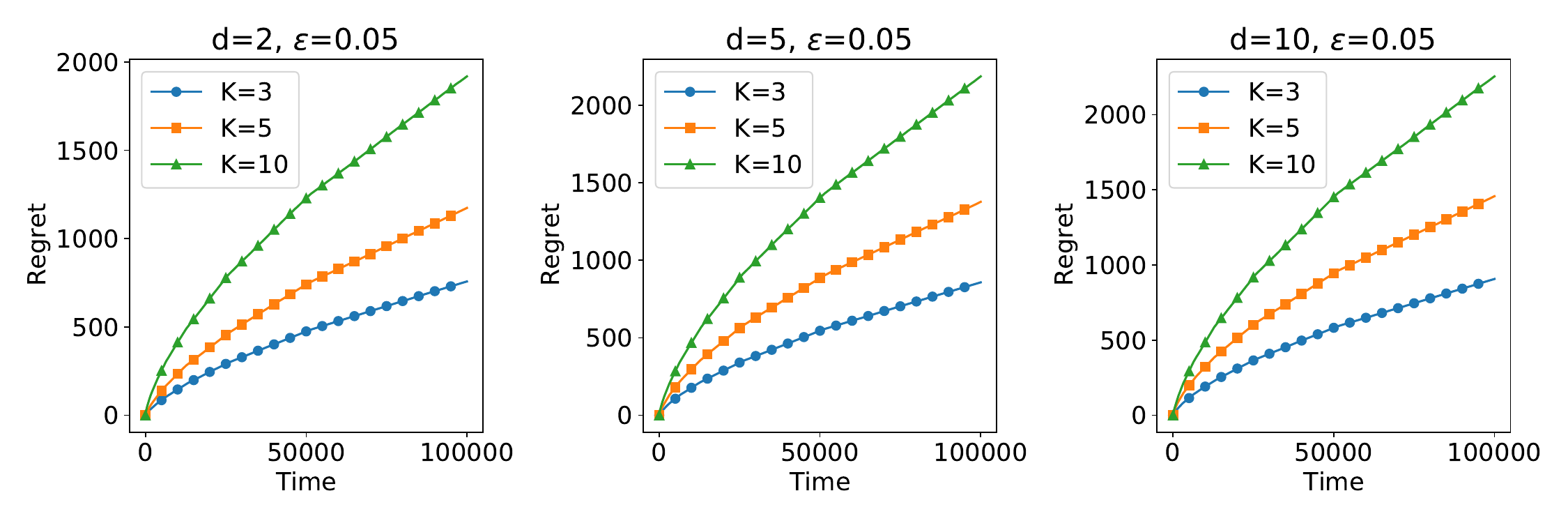} 
			\label{fig:set_2_regret_100}
		\end{subfigure}
		\caption{Gain (top) and Regret (bottom) of Setting 2 with $\mathsf{N} = 10^{2}$.}
		\label{fig:set_2_n_100}
	\end{figure}

	\begin{figure}[htbp]
		\centering
		\begin{subfigure}{0.49\textwidth}  
			\centering
			\includegraphics[scale=0.22]{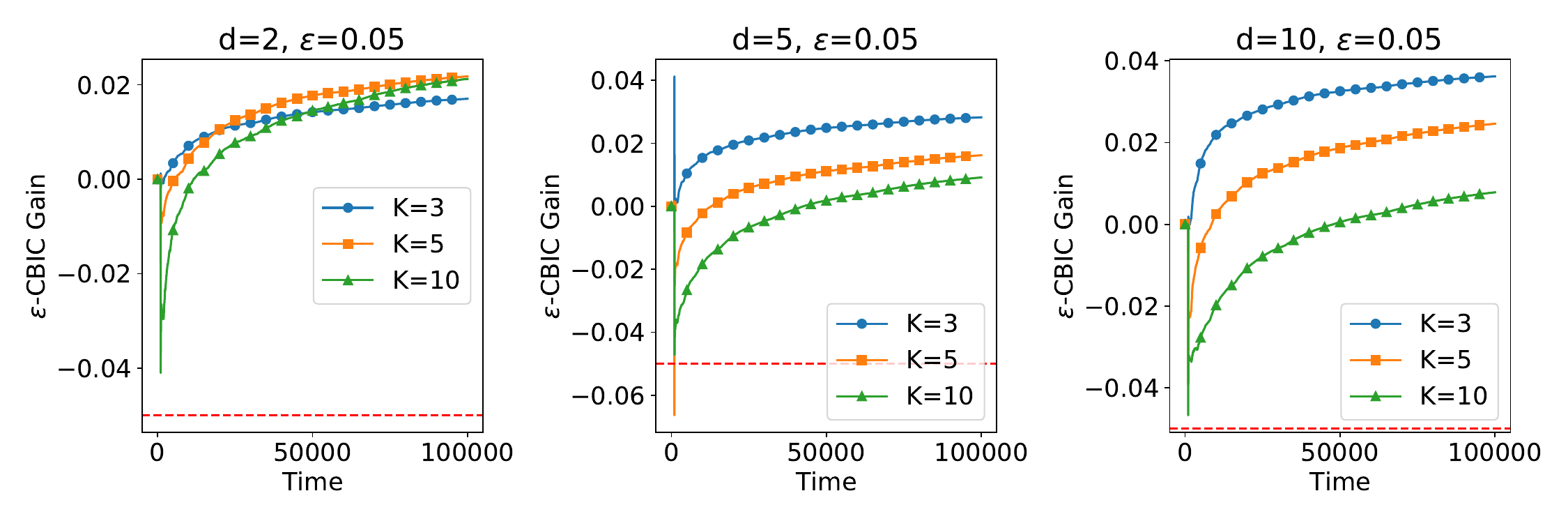} 
			\label{fig:set_2_gain_1000}
		\end{subfigure}
		\hfill  
		\begin{subfigure}{0.49\textwidth}
			\centering
			\includegraphics[scale=0.22]{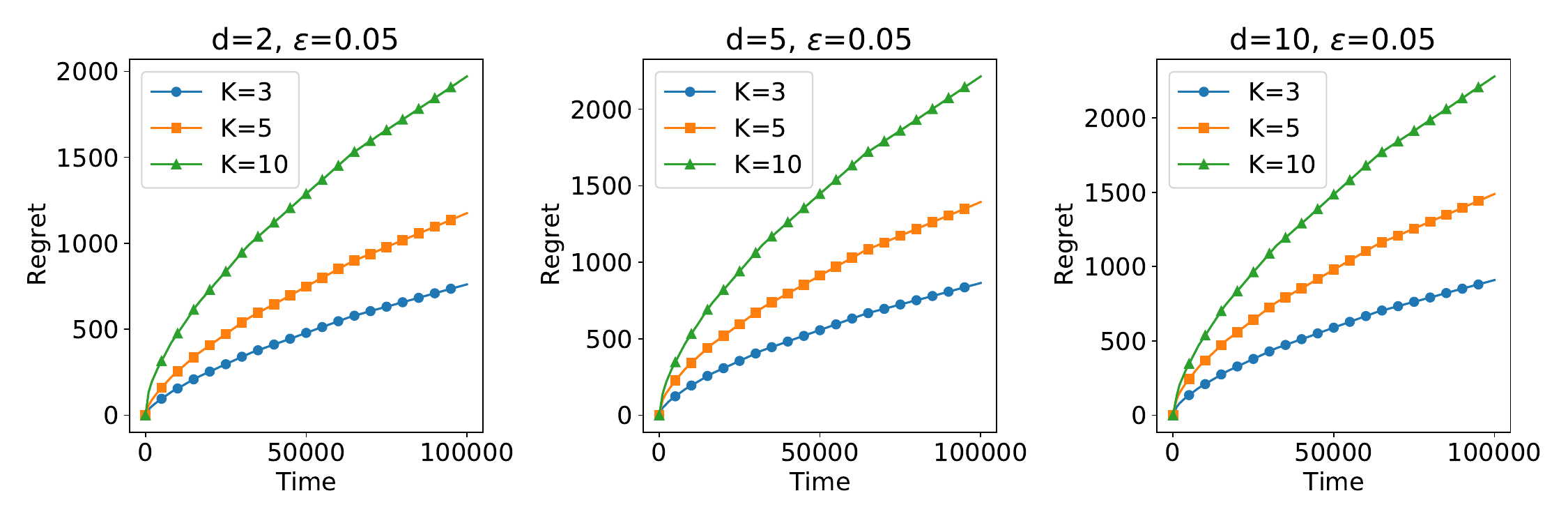} 
			\label{fig:set_2_regret_1000}
		\end{subfigure}
		\caption{Gain (top) and Regret (bottom) of Setting 2 with $\mathsf{N} = 10^{3}$.}
		\label{fig:set_2_n_1000}
	\end{figure}

	\subsection{Additional Real Data Analysis}
	\label{sec: app_real_data}
	\paragraph{Data Description:}
	This data contains the true patient-specific optimal warfarin doses (which are initially unknown but are eventually found through the physician-guided dose adjustment process over the course of a few weeks) for 5528 patients with more than 70 features. It also includes patient-level covariates such as clinical factors, demographic variables, and genetic information that have been found to be predictive of the optimal warfarin dosage \citep{international2009estimation}. We follow the similar data construction method in \citep{bastani2020online}. These covariates include:
	\begin{itemize}
		\item Demographics: gender, race, ethnicity, age, height (cm), weight (kg).
		\item Diagnosis: reason for treatment (e.g. deep vein thrombosis, pulmonary embolism, etc.). 
		\item Pre-existing diagnoses: indicators for diabetes, congestive heart failure or cardiomyopathy, valve replacement, smoker status.
		\item Medications: indicators for potentially interacting drugs (aspirin, Tylenol, and Zocor).
		\item Genetics: presence of genotype variants of CYP2C9 and VKORC1. 
	\end{itemize}
	The details can be found in Appendix 1 of \cite{international2009estimation}.
	All these covariates were hand-selected by professionals as being relevant to the task of warfarin dosing based on medical literature; there are no extraneously added variables. 
	Since the detailed feature construction is not available in \citep{bastani2020online}, we construct features follow the description in \citep{bastani2020online}.
	For \textit{diagnosis variables}, we categorize the reason for treatment with 0/1  (1 represents patients have reason for treatment, 0 represents patients have no reason or unknown reason for treatment).  For \textit{medications variables}, we only include three medications: aspirin, Tylenol, Zocor, and all other medications are set to be 0. For \textit{genetics variables}, we considered genotype variants of CYP2C9 and VKORC1 and the rest are set to be 0. The previous feature construction aims to avoid to high dimensional feature space. All categorical variables are transformed into dummy variables and all missing values are set to 0. 
	After the data construction, we have 70 features and 5528 patients. In \citep{bastani2020online}, they have 93 features, which is similar to our constructions.
	
	\textbf{Model Hyperparameter Setup:} The prior mean's setup follow the fixed-dose strategy and detailed explanation is provided in the following. We assume the prior variance increases linearly over time after the $\start$. This allows physicians decease the confidence of their prior dose strategy and trust the $\recon$ algorithm over time. In addition, the length of the $\start$ is determined by Theorem \ref{thm1: sampling requirements}.

	\begin{figure}[t]
		\centering
		\includegraphics[scale=0.25]{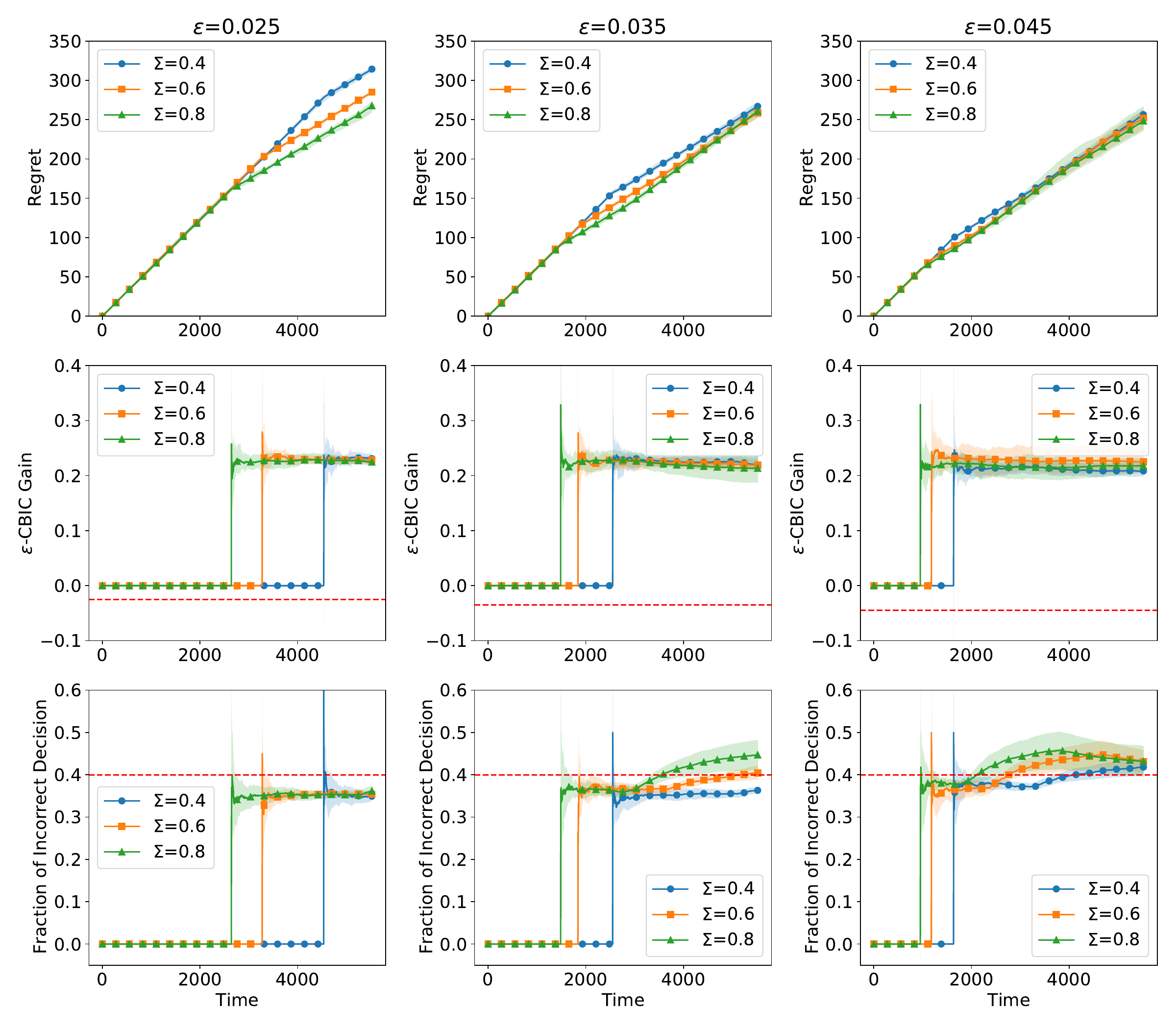}
		\caption{Regret and incentive compatibility of warfarin dosing.}
		\label{fig:warfarin-dose-all}
	\end{figure}
	\textbf{Addition Result Analysis.}
	
	\textbf{Regret:}
	In the first row, we show the regret of $\recon$ with different confidence strengths (prior variance). When $\Sigma$ is small that means physicians have stronger opinion over the medium dosage, and the reverse is that the physicians have weaker opinion over the medium dosage. With different prior, we found that when $\Sigma = 0.4\M{I}$, it has the largest regret since we need more samples in the cold start stage to let physicians trust $\recon$, which means that we need a large $\mathsf{N}$. Interestingly, we found that when $\epsilon$ increases (left to right), the regret difference between different prior variance shrinks because when we can tolerate with a higher ratio of non-\dbic{} compatible patients, the prior's effect decreases and the overall regret decreases because of a shorter cold start stage.
	
	\textbf{\edbic Gain:}
	In the second row, we show \dbic{} gain of the $\recon$ with different confidence strengths. Different prior variance has similar effect on the \dbic gain and all variants' gain are above $-\epsilon$, which satisfies the property since the gain after the cold start stage is only determined by the posterior difference within the arm $\recon$ selected.

	\section{Nonlinear Reward Discussion}
	\label{app-sec: nonlinear}
	If the true model has a non-linear structure, we can approximate the nonlinear functions of the covariates by using basis expansion methods in from statistical learning \citep{hastie2009elements}.
	
\end{document}